\newcommand{\comment}[1]{}
\newcommand{\eq}{\begin{equation}}
\newcommand{\en}{\end{equation}}
\newcommand{\rr}{\mathbb{R}}
\newcommand{\norm}[1]{\left\lVert #1 \right\rVert}
\newcommand{\mcal}[1]{\mathcal{#1}}
\newcommand{\iprod}[1]{\left\langle #1 \right\rangle }
\newcommand{\nin}{\noindent}
\newcommand{\tbf}{\textbf}
\newcommand{\QQ}{\mathcal{Q}}
\newcommand{\RR}{\mathcal{R}}
\newcommand{\PP}{\mathcal{P}}
\newcommand{\PPT}{\widetilde{\PP}}
\newcommand{\exQ}{\widetilde{\mathcal{Q}}}
\newcommand{\exP}{\widetilde{\mathcal{P}}}
\newcommand{\simp}{\Delta}
\newcommand{\diverge}{T}
\newcommand{\Hess}{{\mathrm{Hess}}}
\newcommand{\E}{\mathrm{E}}
\begin{document}

\theoremstyle{plain}
\newtheorem{thm}{Theorem}
\newtheorem{lemma}[thm]{Lemma}
\newtheorem{prop}[thm]{Proposition}
\newtheorem{cor}[thm]{Corollary}

\theoremstyle{definition}
\newtheorem{defn}{Definition}
\newtheorem{asmp}{Assumption}
\newtheorem{notn}{Notation}
\newtheorem{prb}{Problem}

\theoremstyle{remark}
\newtheorem{rmk}{Remark}
\newtheorem{exm}{Example}
\newtheorem{clm}{Claim}

%\numberwithin{figure}{section}

\title[The geometry of relative arbitrage]{The Geometry of relative arbitrage}
\author[S. Pal and T.-K. L. Wong]{Soumik Pal and Ting-Kam Leonard Wong}
\address{Department of Mathematics\\ University of Washington\\ Seattle, WA 98195}
\email{soumikpal@gmail.com, tkleonardwong@gmail.edu}
\keywords{Stochastic portfolio theory, rebalancing, functionally generated portfolios, optimal transport, model-free finance}
\subjclass[2000]{Primary 00A30; Secondary 00A22, 03E20}
\thanks{This research is partially supported by NSF grant DMS-1308340.}
\date{\today}

\begin{abstract}
%[July 15 Leonard] Some rephrase
Consider an equity market with $n$ stocks. The vector of proportions of the total market capitalizations that belong to each stock is called the market weight. The market weight defines the market portfolio which is a buy-and-hold portfolio representing the performance of the entire stock market. Consider a function that assigns a portfolio vector to each possible value of the market weight, and we perform self-financing trading using this portfolio function. We study the problem of characterizing functions such that the resulting portfolio will outperform the market portfolio in the long run under the conditions of diversity and sufficient volatility. No other assumption on the future behavior of stock prices is made. We prove that the only solutions are functionally generated portfolios in the sense of Fernholz. A second characterization is given as the optimal maps of a remarkable optimal transport problem. Both characterizations follow from a novel property of portfolios called multiplicative cyclical monotonicity.
%[July 6 Leonard] Deleted: Using this framework we show how the presence of microstructure noise in stock price data leads to statistical arbitrage in high-frequency trading.
\end{abstract}

\maketitle

\section{Introduction}
%[July 6, 12, 16 Leonard] minor changes in wordings
Consider investing in an equity market. At each point in time the investor allocates the current wealth among the stocks and form a portfolio. We will only consider self-financing portfolio strategies that are fully invested in the stock market, are long-only, and are never allowed to borrow or lend in the money market. Mathematically, consider the closed unit simplex in ${\Bbb R}^n$ defined by 
\eq\label{eq:unitsimplex}
\overline{\simp^{(n)}}=\left\{   p=\left(p_1, \ldots , p_n  \right) \in {\Bbb R}^n,\; p_i\ge 0\; \text{for all $i$}, \; \text{and}\; \sum_{i=1}^n p_i=1  \right\}.
\en
A portfolio at any point of time is represented by a vector in $\overline{\simp^{(n)}}$ where $n$ is the number of stocks. The individual coordinates of this vector represent the proportions of wealth invested in each stock and are called portfolio weights. Over time this leads to a process with state space $\overline{\simp^{(n)}}$.

%[July 6 Leonard] Split paragraph (Market portfolio)
For example, a {\it buy-and-hold portfolio} is one where one buys a certain number of shares of each stock initially and holds them for all future time. Of special importance is the {\it market portfolio} defined as follows. Let $X_i(t) > 0$ be the market capitalization of stock $i$ at time $t$. The market weight of stock $i$ is defined by
\begin{equation} \label{eqn:marketweight}
\mu_i(t) = \frac{X_i(t)}{X_1(t) + \cdots + X_n(t)}, \quad i = 1, \ldots, n,
\end{equation}
and the market portfolio is the portfolio with weights $\mu(t) = (\mu_1(t),\ldots , \mu_n(t))$. The market weight takes values in the open unit simplex $\Delta^{(n)}$. The value of this portfolio reflects the growth of the entire equity market and is called a market index. For example, in the US equity market, a standard benchmark is the S\&P500 index which is approximately the value process of a buy-and-hold portfolio. %[July 6 Leonard] joined the two paragraphs

%[July 12, 13 Leonard] Some rewordings, intuitive explanation of diversity and sufficient volatility
Due to the special importance of the market portfolio as an investment benchmark and as an efficient portfolio according to some asset pricing models (such as CAPM), a lot of effort has been put into developing strategies that outperform it (referred colloquially as `beating the market'). Mainstream portfolio theory attempts to achieve this by building asset pricing models that explain and forecast stock returns using economic and technical variables (see \cite{CK06} for an exposition for practitioners). Our approach is more closely aligned to Stochastic Portfolio Theory (SPT) introduced by Fernholz (see \cite{F02, FKsurvey}) who showed that it is possible to build explicit portfolios which beat the market under minimal and realistic assumptions on the behavior of equity markets. Working under a continuous time It\^{o} process model, the theory identifies two fundamental sufficient conditions in terms of the market weight process $\{\mu(t)\}$: {\it diversity} and {\it sufficient volatility}. In the simplest setting, the market is diverse if $\mu(t) \in K$ for all $t$, where $K = \{\mu \in \Delta^{(n)}: \max_{1 \leq i \leq n} \mu_i \leq 1 - \delta\}$ for some $\delta > 0$ (this means that the market is never too concentrated), and the market is sufficiently volatile if the eigenvalues of the diffusion matrix of $\{\mu(t)\}$ are suitably bounded below. (Both notions will be generalized and explained in this work.) Call a \textit{portfolio function} to be a map $\pi:\Delta^{(n)} \rightarrow \overline{\Delta^{(n)}}$. Here, if $\mu \in \Delta^{(n)}$ is the current market weight, one chooses the portfolio $\pi(\mu) \in \overline{\Delta^{(n)}}$. Under the conditions of diversity and sufficient volatility, Fernholz in \cite{F99} showed that certain portfolio functions, that he called \textit{functionally generated}, will outperform the market portfolio after a finite (but large) time, with probability one. These portfolios are called {\it relative arbitrages} with respect to the market. A remarkable fact is that these portfolios, being deterministic functions of the current market weights, are independent of the past or any future forecast. However, not all portfolio functions are functionally generated and it is far from clear from Fernholz's proof whether any other portfolio function, that is not functionally-generated, can also beat the market under the conditions of diversity and sufficient volatility. 

\subsection{Statements of main results}
%[July 6, 15 Leonard] Rewritten. Do continuous time only
One of the main contributions of our paper is to show essentially that no other portfolio functions, other than those that are functionally generated, can beat the market in the long run without additional assumptions. In contrast with the traditional continuous time set-up of SPT, in this paper time is taken to be {\it discrete}. We stress on discrete time since it allows complete absence of probabilistic assumptions. While traditionally the market weight is modeled as a continuous time semimartingale, here it is taken to be any deterministic sequence in $\Delta^{(n)}$. All relevant definitions in SPT are modified to this discrete time, pathwise set-up. 

%[July 7, 13, 15 Leonard] Give definitions before main theorems
We first state some definitions which will be used througout the paper and in the statements of the main results. As noted above, the market is modeled by a sequence $\{\mu(t)\}_{t = 0}^{\infty}$ of market weights with values in $\Delta^{(n)}$. A {\it portfolio function} is a map $\pi: \Delta^{(n)} \rightarrow \overline{\Delta^{(n)}}$. Every time the market weight is $\mu(t) = p$, the investor chooses the portfolio vector $\pi(\mu(t)) = \pi(p)$. Given a portfolio function $\pi$, the value of the corresponding self-financing portfolio will be measured relative to the value of the market portfolio. Consider the quantity
\[
V(t) = \frac{\text{value at time } t \text{ of \$1 invested in the portfolio } \pi}{\text{value at time } t \text{ of \$1 invested in the market portfolio } \mu}
\]
and call it the {\it relative value}. It can be shown (see for example \cite[Lemma 2.1]{PW13}) that
\begin{equation} \label{eqn:relativevalue}
V(0) = 1, \quad V(t + 1) = V(t) \sum_{i = 1}^n \pi_i(\mu(t)) \frac{\mu_i(t + 1)}{\mu_i(t)}
\end{equation}
and $V(t)$ is strictly positive for all $t$. In order to ``beat the market'', we want to choose $\pi$ such that $V(t)$ is large (at least when $t$ is sufficiently large).

In this context, the concept of relative arbitrage in SPT is extended to the notion of {\it pseudo-arbitrage}. % Is "extended" a good word?

\begin{defn}[Pseudo-arbitrage] \label{def:pseudo-arbitrage} % [July 7, 15 Leonard] The definition
Let $K$ be a subset of $\Delta^{(n)}$. A portfolio function $\pi$ is called a pseudo-arbitrage on $K$ if the following properties hold.
\begin{itemize}
\item[(i)] There exists a constant $C = C(K, \pi) \geq 0$ such that for all sequences of market weight $\{\mu(t)\}_{t = 0}^{\infty}$ taking values in $K$, we have $\log V(t) \geq -C$ for all $t \geq 0$.
\item[(ii)] There exists some sequence $\{\mu(t)\}_{t = 0}^{\infty} \subset K$ along which $\lim_{t \rightarrow \infty} V(t) = \infty$.
\end{itemize}
\end{defn}

% [July 7, 15 Leonard] Discussion of the definition
Definition \ref{def:pseudo-arbitrage} formalizes some necessary requirements in order that a given portfolio function is guarenteed to outperform the market under diversity and sufficient volatility. First, under the (generalized) diversity condition $\mu(t) \in K$, the portfolio is never allowed to lose more than a fixed amount (property (i)). That is, the downside risk is uniformly bounded below regardless of the market movements in a fixed region. Second, there is a possibility of unbounded gain (property (ii)).  While these properties appear to be rather weak, they impose strong restrictions on the portfolio map. Given $K$ we give two characterizations of pseudo-arbitrage opportunities. First, we characterize pseudo-arbitrages as portfolios that are functionally generated in Fernholz's sense (with a slightly extended definition).

\begin{thm}\label{thm:charac1} % [July 7 Leonard] It seems that measurability is not required in discrete time.
A portfolio function $\pi$ is a pseudo-arbitrage on an open convex subset $K \subseteq \Delta^{(n)}$ if and only if there exists a concave function $\Phi:\Delta^{(n)}\rightarrow [0,\infty)$ satisfying the following properties:
\begin{enumerate}[(i)] 
\item the restriction of $\Phi$ on $K$ is not affine, 
\item there exists $\varepsilon>0$ such that $\inf_{p\in K}\Phi(p) \ge \varepsilon$, and 
\item for any $p \in K$, the vector  $\pi(p)/p$ of coordinatewise ratios defines a supergradient of the concave function $\log \Phi$ at $p$ (see Proposition \ref{lem:superdiff} for the precise definition). 
\end{enumerate}
If $\pi$ is continuous, then on $K$ it is necessarily given by the formula  
\eq\label{eq:fgeqn}
\pi_i(p)= p_i\left( 1 + D_{e(i)-p} \log \Phi(p)   \right), \quad i=1,2,\ldots,n.
\en
Here $D_{e(i)-p}$ is the one-sided directional derivative in the direction $e(i)-p$, where $e(i)$ is the vector of all zeroes except one at the $i$th coordinate.
\end{thm}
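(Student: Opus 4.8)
The plan is to phrase everything in terms of the single-step relative log-return. Writing $\xi(p) = \pi(p)/p$ for the vector of coordinatewise ratios (so that $\langle \xi(p), p\rangle = \sum_i \pi_i(p) = 1$ and $\xi(p) \ge 0$ since $\pi$ is long-only), equation \eqref{eqn:relativevalue} gives $\log V(t+1) - \log V(t) = r(\mu(t), \mu(t+1))$ where $r(p,q) := \log \langle \xi(p), q\rangle$. Thus along any finite path $\log V(T) = \sum_{t=0}^{T-1} r(\mu(t),\mu(t+1))$, and both halves of the theorem become statements about this additive functional of paths in $K$. The whole argument pivots on \emph{multiplicative cyclical monotonicity} (MCM): for every cycle $p_0, p_1, \dots, p_{m-1}, p_m = p_0$ in $K$ one has $\sum_{j=0}^{m-1} r(p_j, p_{j+1}) \ge 0$, equivalently $\prod_j \langle \xi(p_j), p_{j+1}\rangle \ge 1$.

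For the sufficiency direction I would start from a $\Phi$ as in (i)--(iii) and prove the one-step inequality $r(p,q) \ge \log \Phi(q) - \log \Phi(p)$. The point is that concavity of $\Phi$ together with the chain rule $\partial(\log\Phi)(p) = \Phi(p)^{-1}\partial\Phi(p)$ (modulo $\mathbf 1$) forces the chosen supergradient $\xi(p)$ of $\log\Phi$ to satisfy $\Phi(q) \le \Phi(p)\langle \xi(p), q\rangle$; merely exponentiating the supergradient inequality for $\log\Phi$ is too weak (it only yields $e^{\langle\xi(p),\,q-p\rangle}$, and $e^x \ge 1+x$ points the wrong way), so it is essential to use concavity of $\Phi$ itself and not only of $\log\Phi$. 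Telescoping then gives $\log V(T) \ge \log \Phi(\mu(T)) - \log \Phi(\mu(0)) \ge \log\varepsilon - \log\sup_K\Phi$, which is finite because a concave function finite on $\Delta^{(n)}$ is bounded above there (it lies under any supporting hyperplane, which is bounded on the bounded set $\Delta^{(n)}$); this establishes property (i) of pseudo-arbitrage. For property (ii), the gap $g(p,q) := r(p,q) - (\log\Phi(q)-\log\Phi(p)) \ge 0$ vanishes only when $q$ lies on the supporting hyperplane of $\Phi$ at $p$; since $\Phi|_K$ is not affine there are $p,q\in K$ with $g(p,q) > 0$, and oscillating $\mu$ between such $p$ and $q$ increases $\log V$ by at least $g(p,q)>0$ each period (the $\log\Phi$ terms telescoping to zero), so $V\to\infty$.

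The necessity direction is the crux. First I would derive MCM from property (i): given a cycle in $K$, run the periodic sequence that repeats it $N$ times, so $\log V(Nm) = N\sum_{j} r(p_j,p_{j+1}) \ge -C$ for all $N$, which forces the per-period sum to be nonnegative. Granting MCM, I would recover $\Phi$ by a Rockafellar-type construction: fix a base point $b\in K$ and set $\varphi(q) = \inf\{\sum_{j} r(x_j,x_{j+1}): x_0=b,\ x_m=q,\ x_j\in K\}$. MCM guarantees $\varphi > -\infty$ (closing any chain to a cycle bounds it below), and in fact $\varphi \ge -C$ by property (i). Appending one step gives the subadditivity $\varphi(q) \le \varphi(p) + r(p,q)$, so $\Phi := e^{\varphi}$ obeys $\Phi(q) \le \Phi(p)\langle \xi(p), q\rangle = \Phi(p) + \langle \Phi(p)\xi(p), q-p\rangle$; reading this as ``each $p$ supplies an affine majorant touching $\Phi$ at $p$'' shows at once that $\Phi$ is concave (an infimum of affine functions) and that $\Phi(p)\xi(p) \in \partial\Phi(p)$, i.e. $\xi(p)\in\partial(\log\Phi)(p)$, which is (iii). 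The bound $\Phi \ge e^{-C}$ gives (ii) with $\varepsilon = e^{-C}$, and non-affineness (i) follows because if $\Phi|_K$ were affine the gap $g$ would vanish identically, making $\log V$ bounded and contradicting property (ii) of pseudo-arbitrage. To obtain a function on all of $\Delta^{(n)}$ valued in $[0,\infty)$, I would extend by $\Phi(q) = \inf_{p\in K}\Phi(p)\langle\xi(p),q\rangle$; each affine majorant is nonnegative on $\overline{\Delta^{(n)}}$ precisely because $\xi(p)\ge 0$ (long-only), so the extension is concave and nonnegative.

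Finally, for the formula \eqref{eq:fgeqn}, continuity of $\pi$ makes $\xi$ a continuous supergradient selection for the concave function $\log\Phi$; a concave function admitting a continuous single-valued supergradient (modulo $\mathbf 1$) is continuously differentiable, so $\partial(\log\Phi)(p)$ is the singleton $\{\nabla\log\Phi(p)\}$ modulo $\mathbf 1$. Writing $D_{e(i)-p}\log\Phi(p) = \langle\xi(p), e(i)-p\rangle = \xi_i(p) - 1$ (the $\mathbf 1$-component drops out since $\langle\mathbf 1, e(i)-p\rangle = 0$, and $\langle\xi(p),p\rangle = 1$) yields $\pi_i(p) = p_i\xi_i(p) = p_i(1 + D_{e(i)-p}\log\Phi(p))$. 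I expect the Rockafellar-type reconstruction to be the main obstacle: showing that the infimum over chains is finite and genuinely produces a concave $\Phi$ with the correct superdifferential is where MCM is used in full strength, while the clean extension to the whole simplex hinges on the long-only constraint.
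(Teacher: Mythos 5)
Your proposal is correct, and its skeleton coincides with the paper's: deduce MCM on $K$ from the uniform lower bound by repeating a cycle (the paper's Lemma \ref{lem:MCMfails}), build $\Phi$ by a Rockafellar-type infimum over chains and read off concavity and the supergradient property from the affine majorants $q \mapsto \Phi(p)\left\langle \pi(p)/p, q\right\rangle$ (Proposition \ref{thm:MCM} and Remark \ref{rmk:cxmcm}, where nonnegativity of the extension to all of $\Delta^{(n)}$ indeed rests on $\pi$ being long-only), and prove sufficiency by telescoping the one-step inequality (Lemma \ref{lem:FernholzDecomp}). Where you genuinely diverge is the necessity of condition (ii). The paper argues by contradiction: if $0$ were a limit point of $\Phi(K)$, it picks $q \in \overline{K}$ with $\Phi(q) = 0$, runs the market along the segment $[p,q)$ through differentiability points of $\log\Phi$, and compares $\sum_t \log\left(1 + \left\langle \pi(\mu(t))/\mu(t), \mu(t+1)-\mu(t)\right\rangle\right)$ with the line integral $\int_{[p,q]} \frac{\pi}{\mu}\, d\mu = \log\Phi(q) - \log\Phi(p) = -\infty$ to force $V(t) \to 0$. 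You instead observe that every chain in $K$ entering the defining infimum of $\varphi$ is itself an admissible market path, so pseudo-arbitrage property (i) gives $\varphi \ge -C$ directly, hence $\Phi = e^{\varphi} \ge e^{-C}$ on $K$ with $\varepsilon = e^{-C}$. This is a real simplification: it avoids choosing differentiability points and the line-integral comparison altogether. The trade-off is scope: your bound is a property of the particular $\Phi$ you construct, whereas the paper's argument shows that \emph{any} concave function satisfying (iii) for a pseudo-arbitrage must be bounded away from $0$ on $K$; for the theorem as stated (existence of one such $\Phi$), your route suffices. Two smaller points in your favor: you make explicit the oscillation construction (alternating between $p$ and $q$ with $T(q \mid p) > 0$) behind the paper's ``one can clearly choose a sequence'' in the sufficiency step, and you correctly flag that one must pass through concavity of $\Phi$ itself, via $\partial \log\Phi(p) = \Phi(p)^{-1}\partial\Phi(p)$, rather than exponentiate the supergradient inequality for $\log\Phi$ --- which is exactly the role Proposition \ref{lem:superdiff} plays in the paper.
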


% [July 13 Leonard] Slight rewording
In the above theorem we say that $\pi$ is {\it generated} by $\Phi$. It can be shown (see Proposition \ref{lem:superdiff} below) that Theorem \ref{thm:charac1}(iii) and \eqref{eq:fgeqn} coincide with Fernholz's definition of functionally generated portfolio under his more restrictive smoothness assumption. 
\medskip

% [July 7, 12 Leonard] Give definition of OT, some rewordings. We use F for the transport map. (T is the L-divergence)
Our second characterization establishes a geometric connection by describing pseudo-arbitrages via solutions to a {\it Monge-Kantorovich optimal transport problem}. To recall the general formulation (see \cite{V03, V08} for details), let ${\mathcal{X}}$ and ${\mathcal{Y}}$ be Polish (complete metric) spaces, and let $c: {\mathcal{X}} \times {\mathcal{Y}} \rightarrow \overline{{\Bbb R}}$ be a measurable function called the {\it cost function}. Let ${\mathcal P}$ and ${\mathcal Q}$ be (Borel) probability measures on ${\mathcal X}$ and ${\mathcal Y}$ respectively. A coupling of $({\mathcal P}, {\mathcal Q})$ is a probability measure ${\mathcal R}$ on ${\mathcal X} \times {\mathcal Y}$ whose marginals are ${\mathcal P}$ and ${\mathcal Q}$ respectively. Let $\Pi({\mathcal P}, {\mathcal Q})$ be the collection of all such couplings. The {\it Monge-Kantorovich optimal transport problem} is the problem
\begin{equation} \label{eqn:Kantorovich}
\inf_{{\mathcal{R}} \in \Pi({\mathcal P}, {\mathcal Q}), (X, Y) \sim {\mathcal{R}}} \E_{{\mathcal R}}\left[ c(X, Y)\right].
\end{equation}
Here the notation means that the random element $(X, Y)$ has joint distribution ${\mathcal R}$. A solution of \eqref{eqn:Kantorovich} is called an {\it optimal coupling}. We say that an optimal coupling $(X, Y)$ of \eqref{eqn:Kantorovich} solves the {\it Monge problem} if $Y = F(X)$ is a deterministic function of $X$. The infinum in \eqref{eqn:Kantorovich} is called the {\it value} of the problem.

Now we specialize to the case ${\mathcal{X}} = \overline{\Delta^{(n)}}$ and ${\mathcal{Y}} = [-\infty, \infty)^n$ together with the cost function
\begin{equation} \label{eqn:ourcost}
c(\mu, h) = \log \left(\sum_{i = 1}^n e^{h_i} \mu_i \right).
\end{equation}
The interpretation is that $\mu$ represents the market weight and $h$ represents the deviation of the portfolio vector from the market weight. Given $\mu \in \Delta^{(n)}$ and $h \in [-\infty, \infty)^n \setminus \{(-\infty, \ldots, -\infty)\}$, we may define a portfolio vector $\pi$ corresponding to $\mu$ via a {\it change of measure}:
\begin{equation} \label{eqn:changemeasure}
\frac{\pi_i}{\mu_i} = \frac{1}{\E_{\mu}[ e^h ]} e^{h_i}, \quad i = 1, \ldots, n,
\end{equation}
where $\E_\mu[ e^h ] := \sum_{i=1}^n e^{h_i}\mu_i$. Consider a probability measure ${\mathcal{P}}$ on $\Delta^{(n)}$ and ${\mathcal{Q}}$ on ${\Bbb R}^n$. Suppose the Monge-Kantorovich problem \eqref{eqn:Kantorovich} with cost \eqref{eqn:ourcost} has a solution $\mcal{R}\in \Pi\left( \mcal{P}, \mcal{Q} \right)$ and value of the problem is finite.

% [July 13, 15 Leonard] Added definition of support; modified the statement to be compatible with Theorem 1.
\begin{thm}\label{thm:charac2}
Let $K \subset \Delta^{(n)}$ and suppose $F: K \rightarrow [-\infty, \infty)^n \setminus \{(-\infty, \ldots, -\infty)\}$ is a map such that $(\mu, F(\mu))$ belongs to the support of ${\mcal{R}}$ for all $\mu \in K$, i.e., $(\mu, F(\mu))$ is a selection of the support of ${\mathcal{R}}$. Define a portfolio $\pi$ on $K$ by \eqref{eqn:changemeasure} with $h = F(\mu)$. Then there exists a concave function $\Phi:\Delta^{(n)}\rightarrow [0,\infty)$ such that part (iii) of Theorem \ref{thm:charac1} holds. Thus, $\pi$ is a pseudo-arbitrage on $K$ whenever $K$ is an open convex set and conditions (i) and (ii) in Theorem \ref{thm:charac1} hold.

% [July 15, Leonard] Convexity not necessary
Conversely, suppose $\pi$ is a pseudo-arbitrage over a subset $K$ of $\Delta^{(n)}$. Suppose that $\left\{ \log (\pi(p) / p ), p \in K\right\}$ is coordinatewise bounded below. Define $h = T(\mu)$ as a function of $\mu$ via $h_i = \log \left( \pi_i(\mu) / \mu_i\right)$ and consider the coupling $(\mu, h)$. For any probability measure ${\mathcal P}$ on $K$, let ${\mathcal Q}$ be the distribution of $h(\mu)$ when $\mu\sim \mcal{P}$. Then the coupling $(\mu, h)$ solves the transport problem \eqref{eqn:Kantorovich} with cost \eqref{eqn:ourcost}.
\end{thm}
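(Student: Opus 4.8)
The linchpin of both directions is a computation that turns the multiplicative dynamics of the relative value into additive increments of the cost $c$. The plan is to first record that, for the portfolio $\pi$ attached to a map $\mu \mapsto F(\mu)$ through the change of measure \eqref{eqn:changemeasure}, a single market transition from $\mu$ to $\nu$ satisfies
\[ \log\frac{V(t+1)}{V(t)}=\log\iprod{\pi(\mu)/\mu,\,\nu}=c(\nu,F(\mu))-c(\mu,F(\mu)), \]
which follows by substituting \eqref{eqn:changemeasure} into \eqref{eqn:relativevalue} and recognizing \eqref{eqn:ourcost}. Summing along a finite cycle $\mu_0,\dots,\mu_{m-1},\mu_m=\mu_0$ in the domain gives $\log V_{\mathrm{cyc}}=\sum_{k}\bigl(c(\mu_{k+1},F(\mu_k))-c(\mu_k,F(\mu_k))\bigr)$, and comparing with the definition of $c$-cyclical monotonicity applied to the cyclic permutation shows that $\log V_{\mathrm{cyc}}\ge 0$ along every cycle exactly when the graph $\{(\mu,F(\mu))\}$ is $c$-cyclically monotone. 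Because every permutation factors into disjoint cycles, testing cyclic permutations suffices, so this equivalence is exact; I will refer to the property ``$\log V_{\mathrm{cyc}}\ge0$ for all cycles'' as multiplicative cyclical monotonicity (MCM) of $\pi$, and note that it depends only on the ratios $\pi/\mu$, hence is insensitive to shifting $F$ by a coordinate-constant at each point.

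For the first (forward) assertion I would argue as follows. Since $\mcal R$ is an optimal coupling of finite value and $c$ is lower semicontinuous and real-valued on the relevant domain, the standard structure theorem for optimal plans (e.g. \cite{V08}) gives that $\operatorname{supp}\mcal R$ is $c$-cyclically monotone. The selection graph $\{(\mu,F(\mu)):\mu\in K\}\subseteq\operatorname{supp}\mcal R$ inherits this property, since the defining inequalities involve only finite subfamilies. By the equivalence of the previous paragraph, $\pi$ is MCM. It then remains to invoke the structural result underlying Theorem \ref{thm:charac1} --- the multiplicative analogue of Rockafellar's theorem --- which produces from any MCM portfolio a concave $\Phi:\Delta^{(n)}\to[0,\infty)$ such that $\pi(p)/p$ is a supergradient of $\log\Phi$ at every $p$; this is precisely part (iii) of Theorem \ref{thm:charac1}. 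With such a $\Phi$ in hand, whenever $K$ is open convex and conditions (i)--(ii) of Theorem \ref{thm:charac1} hold, the ``if'' direction of that theorem delivers the conclusion that $\pi$ is a pseudo-arbitrage on $K$.

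For the converse, put $h=\diverge(\mu)$ with $h_i=\log(\pi_i(\mu)/\mu_i)$, so that $c(\mu,h(\mu))=\log\sum_i\pi_i(\mu)=0$ and the cycle identity collapses to $\log V_{\mathrm{cyc}}=\sum_k c(\mu_{k+1},h(\mu_k))$. The first step is to show the graph $\{(\mu,h(\mu)):\mu\in K\}$ is $c$-cyclically monotone, which I would do by contradiction: if some cycle in $K$ had $\log V_{\mathrm{cyc}}=-\delta<0$, concatenating $N$ copies of it produces a $K$-valued market sequence along which $\log V\to-\infty$ as $N\to\infty$, contradicting property (i) of pseudo-arbitrage. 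Consequently $\mcal R=(\mathrm{id},h)_\#\mcal P$ has $c$-cyclically monotone support and transport cost $\E_{\mcal P}[c(\mu,h(\mu))]=0$. The remaining and most delicate step is to pass from $c$-cyclical monotonicity back to optimality; this is the main obstacle, since $c$-cyclical monotonicity is not sufficient for optimality in full generality (Ambrosio--Pratelli). Here the hypothesis that $\log(\pi/\mu)$ is coordinatewise bounded below, say by $-M$, is exactly what is needed: it gives $\sum_i e^{h_i}\mu_i\ge e^{-M}$ and hence the integrable lower bound $c(\mu,h)\ge -M$ on the support of every admissible coupling with second marginal $\mcal Q$. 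Combined with the continuity of $c$ and the finiteness of the value, the sufficiency criterion (c-cyclically monotone support implies optimality, \cite{V08}) then applies and shows that $(\mu,h)$ solves \eqref{eqn:Kantorovich}, completing the proof.
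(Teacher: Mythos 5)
Your route is the same as the paper's: the identity $\log\bigl(V(t+1)/V(t)\bigr)=c(\mu(t+1),h(\mu(t)))-c(\mu(t),h(\mu(t)))$ and the consequent equivalence between MCM of $\pi$ and $c$-cyclical monotonicity of the graph of $h$ (this is exactly Proposition \ref{lem:CMMCM}), the structure theorem of optimal transport in both directions, the cycle-repetition argument showing that a pseudo-arbitrage must be MCM, and, in the converse, the use of the coordinatewise lower bound on $\log(\pi/\mu)$ to place the problem in a region where the cost is bounded below so that Theorem \ref{thm:OT} applies. Your converse is complete and matches the paper essentially verbatim (modulo the small point that MCM on $K$ only gives cyclical monotonicity of the graph over $K$, which is all that is needed since every coupling in $\Pi(\mcal{P},\mcal{Q})$ lives over $K$).

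The gap is in the forward direction, at the single step where you assert that ``$c$ is lower semicontinuous and real-valued on the relevant domain'' and that the standard structure theorem therefore yields $c$-cyclical monotonicity of the support of $\mcal{R}$. The cost \eqref{eqn:ourcost} is unbounded below: taking all coordinates of $h$ equal to $h_1$ gives $c(\mu,h)=h_1\to-\infty$, and the hypotheses of the theorem supply no integrable lower bound of the form $c(\mu,h)\ge a(\mu)+b(h)$ with $a\in L^1(\mcal{P})$, $b\in L^1(\mcal{Q})$. Theorem \ref{thm:OT} (Villani's Theorem 5.10(ii)) is stated for costs that are continuous and bounded below, so its conclusion cannot simply be quoted here --- and it is telling that in your converse you correctly identify lower-boundedness as the crux and manufacture it from the hypothesis on $\log(\pi/\mu)$, while the identical issue in the forward direction is waved away. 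The paper fills precisely this hole with a restriction/exhaustion argument: set $Z_m=\{(\mu,h): \max_i h_i\ge -m,\ \min_i\mu_i\ge 1/m\}$, on which $c$ is continuous and bounded below; let $\mcal{R}_m$ be the normalized restriction of $\mcal{R}$ to $Z_m$; by the restriction property of optimal couplings (\cite[Theorem 4.6]{V08}) $\mcal{R}_m$ is optimal for its own marginals; apply Theorem \ref{thm:OT} on $Z_m$ to conclude that $\mathrm{supp}(\mcal{R})\cap Z_m$ is $c$-cyclically monotone; and let $m\to\infty$, using that the monotonicity condition involves only finitely many points and so passes to increasing unions. Some argument of this kind (or a hand-run Gangbo--McCann perturbation argument valid for finite-value plans with this cost) is genuinely needed to make your first step correct; as written, the citation does not apply.
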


% [July 12, 16 Leonard] Minor modifications; \iota is the coordinate map to unify the notations
For portfolio functions with  strictly positive weights (this corresponds to the above formulation where $h$ takes values in ${\Bbb R}^n$), there is an alternative formulation in terms of the {\it exponential coordinate system} of the unit simplex $\Delta^{(n)}$; see \cite[Example 2.4]{A07}. We view $\Delta^{(n)}$ as an $(n-1)$-dimensional exponential family of probability distributions on $n$ atoms. For $\mu \in \Delta^{(n)}$, we let $\theta = (\theta_1, ..., \theta_{n-1})$ be its exponential coordinates given by
\begin{equation} \label{eqn:definetheta}
\begin{split}
\theta = \iota(\mu) := \left(\log \frac{\mu_1}{\mu_n}, \ldots,  \log \frac{\mu_{n-1}}{\mu_n}\right).
\end{split}
\end{equation}
The map $\iota: \Delta^{(n)} \mapsto {\Bbb R}^{n-1}$ defined by \eqref{eqn:definetheta} is a global coordinate system of the manifold $\Delta^{(n)}$. We can now represent an arbitrary point of $\Delta^{(n)}$ by
\begin{equation} \label{eqn:thetatop}
\mu = \iota^{-1}(\theta) = \left( e^{\theta_1 -\psi(\theta)}, ..., e^{\theta_{n-1} -\psi(\theta)}, e^{-\psi(\theta)}\right), \quad \theta \in {\Bbb R}^{n-1},
\end{equation}
where
\eq\label{eq:psidef}
\psi(\theta) := \log\left(1 + \sum_{i = 1}^{n-1} e^{\theta_i}\right).
\en

%  [July 15 Leonard] Some rewordings
Now consider two Borel probability measures $\exP$ and $\exQ$ on $\rr^{n-1}$. Assume, for simplicity of exposition, that $\exP$ is supported on the entire $\rr^{n-1}$. Consider the transport problem $\inf \mathrm{E}\left[ \psi(\theta - \phi)\right]$, where the infimum is taken over all couplings $(\theta, \phi)$ of $(\exP, \exQ)$. Suppose the problem has a solution and let $\left( \theta, \phi(\theta)  \right)$ be some selection from the support of the optimal coupling. Define a portfolio function $\pi:\Delta^{(n)}\rightarrow \Delta^{(n)}$ by the recipe
\eq\label{eqn:portfexpcoord}
\pi\left( \iota^{-1}(\theta) \right)= \iota^{-1} \left( \theta - \phi(\theta) \right).
\en
We will show that this is an equivalent formulation of the transport construction of a portfolio as given in \eqref{eqn:changemeasure}. The advantage of this formulation is that now the transport is on an Euclidean space with a strictly convex cost function. The function $\phi(\theta)$ represents the negative shift, in exponential coordinates, to go from $\mu$ to $\pi$.

%  [July 7 Leonard] Mention new section (application)
The above results allow us to view functionally generated portfolios as maps that minimize the ``total cost'' of assigning portfolio weights to the market weights. This provides an elegant geometric approach and suggests a natural optimization problem for functionally generated portfolios. Namely, given prior beliefs regarding the possible market weights in the future (represented by ${\mathcal{P}}$ or $\widetilde{{\mathcal{P}}}$) and a collection of portfolio weights to be chosen from (represented by ${\mathcal{Q}}$ or $\widetilde{{\mathcal{Q}}}$), the investor can solve an optimal transport problem to obtain a functionally generated portfolio. The transport problem itself is apparently new and its solutions are characterized by Theorem \ref{thm:charac1} and Theorem \ref{thm:charac2}. The transport problem can be solved explicitly for two stocks $(n = 2)$ and in Section \ref{sec:example} we provide several examples using real data.

Both characterizations follow from a novel property of portfolios we call {\it multiplicative cyclical monotonicity} and will be developed in Section \ref{sec:MCM}. Intuitively, this property requires that the portfolio does not underperform the market if the market weight goes over any discrete cycle in the unit simplex.

%  [July 7 Leonard]  Delete
% As an application of the tools we develop in this paper, we conclude in Section \ref{sec:trigger} with an analysis of the presence of statistical arbitrage opportunities in high frequency trading, based on the observations and ideas of Fernholz and Maguire, Jr.~\cite{FM07}. The main idea is that when there is trading noise, volatility increases as the frequency of trading increases. We prove the following result: Suppose high-frequency stock data is modeled as a {\it hidden semimartingale model} as in Zhang, Mykland, and A\"it-Sahalia \cite{ZMAS}, then, in a sense to be made precise, trading more frequently allows more gain from volatility to be captured.

% [July 16 Leonard] some rephrase
Apart from the references mentioned above, the following articles are closely related to our current work. The roles of diversity and sufficient volatility in relative arbitrage have been studied by authors such as Fernholz and Karatzas \cite{FK05} and Fernholz, Karatzas, and Kardaras \cite{FKK05}. Functionally generated portfolio is used by Banner and Fernholz \cite{BF08} to prove that sufficient volatility of the smallest stock implies the existence of short term relative arbitrages. Generalizations of functionally generated portfolios to stochastic generating functions and application to statistical arbitrage is studied in Strong \cite{S12}. 
The discrete time set-up and the information-geometric flair is a continuation of Pal and Wong \cite{PW13}. 
The case where the benchmark is a functionally generated portfolio (such as the equal-weighted portfolio) instead of the market is studied in \cite{W14}; a shape-constrained optimization problem for functionally generated portfolios is also formulated. A recent trend in mathematical finance is to study robust pricing and hedging of contingent claims under model uncertainty or model-free assumptions. For example, Fernholz and Karatzas \cite{FK11} characterizes optimal relative arbitrage when the covariance matrix is uncertain. The theory of optimal transport also arises in this context, see, for example, Beiglb\"ock et al{.} \cite{BHP13} and Beiglb\"ock and Juillet \cite{BJ14}, although the motivation is quite different from this paper. 

%[July 13 Leonard] Moved to here as Section 2 is not necessary because we moved the definitions before the statements of the main theorem.
\subsection{Notations}
Let $n \geq 2$ be the number of stocks of the market. The vector $\mu(t) = (\mu_1(t), ..., \mu_n(t))$ of market weights, as defined by \eqref{eqn:marketweight}, takes value in the open unit simplex $\Delta^{(n)}$ in ${\Bbb R}^n$. All topological and measure-theoretic aspects will be relative to the unit simplex (with the Euclidean topology). Here and everywhere following, for any vectors $a$ and $b$ in ${\Bbb R}^n$, we let $\langle a, b \rangle$ be their Euclidean inner product and $\|a - b\|$ be the Euclidean distance. If $b$ has nonzero components, we denote by $a / b$ the vector of the componentwise ratios $a_i / b_i$. Also we let $a \cdot b$ be the coordinate-wise product. A tangent vector of $\Delta^{(n)}$ is a vector $v \in {\Bbb R}^n$ satisfying $\sum_{i = 1}^n v_i = 0$. We denote by $T\Delta^{(n)}$ the vector space of tangent vectors of $\Delta^{(n)}$. 

%[July 13 Leonard] Moved to end of the paper
%\subsection{Acknowledgment} We are grateful to Prof.~Walter Schachermayer for a thorough reading of the manuscript and suggesting numerous comments for improvement. 

%[July 27 Soumik] Brought it back as a subsection to reduce paper length
\subsection*{Acknowledgment} We are grateful to Prof.~Walter Schachermayer for a thorough reading of the manuscript and suggesting numerous comments for improvement. We also thank the anonymous reviewers for detailed comments about the presentation.

%\section{Preliminaries} \label{sec:prelim} %[July 13 Leonard] Deleted. Now this section is so short

\section{Multiplicative cyclical monotonicity} \label{sec:MCM}
In this section we characterize pseudo-arbitrage portfolios using a property of multivariate functions we call {\it multiplicative cyclical monotonicity}. The main arguments are convex analytic and we begin by reviewing some basic notions. A standard reference of convex analysis is the book \cite{R70} by Rockafellar.

\subsection{Convex analysis on the unit simplex}
A function $\Phi: \Delta^{(n)} \rightarrow {\Bbb R}$ is concave if for any $p, q \in \Delta^{(n)}$ and any $0 \leq \alpha \leq 1$, we have
\[
\Phi(\alpha p + (1 - \alpha) q) \geq \alpha \Phi(p) +  (1 - \alpha) \Phi(q).
\]
Let $\Phi$ be a concave function on $\Delta^{(n)}$. The superdifferential of $\Phi$ is a multi-valued function $\partial \Phi$ from $\Delta^{(n)}$ to $T\Delta^{(n)}$, the set of tangent vectors of $\Delta^{(n)}$. For $p \in \Delta^{(n)}$, a tangent vector $\xi$ belongs to $\partial \Phi(p)$ if
\begin{equation} \label{eqn:superdiff}
\Phi(p) + \langle \xi, q - p \rangle \geq \Phi(q)
\end{equation}
for all $q \in \Delta^{(n)}$. It is well known that $\partial \Phi(p)$ is a non-empty compact convex set. The elements of $\partial \Phi(p)$ are called supergradients (of $\Phi$ at $p$). Intuitively, each supergradient defines via the left hand side of \eqref{eqn:superdiff} a supporting hyperplane of $\Phi$ at $(p, \Phi(p))$. % [July 12 Leonard] added some sentences to explain what this means

%[July 12 Leonard] Slight rewordings
We will also consider differentiable functions on $\Delta^{(n)}$ and their derivatives. Implicitly, we regard $\Delta^{(n)}$ as an $(n - 1)$-dimensional manifold embedded in ${\Bbb R}^n$. A global coordinate system is the projection map $\varphi(p) = (p_1, \ldots, p_{n - 1})$. For example, a function $\Phi$ on $\Delta^{(n)}$ is of class $C^k$ ($k$ times continuously differentiable) if the push forward $\Phi \circ \varphi^{-1}$ is of class $C^k$ on the open set $\varphi(\Delta^{(n)})$ in ${\Bbb R}^{n - 1}$. If a concave function is differentiable at $p$, the superdifferential $\partial \Phi(p)$ is a singleton containing the usual gradient.

%[July 12 Leonard] Slight rewordings
For $i = 1, \ldots , n$, let $e(i) = (0,  \ldots , 1, \ldots , 0)$ be the vertex of $\Delta^{(n)}$ in the $i$-th direction. If $\Phi$ is a real or vector-valued function on $\Delta^{(n)}$, $p \in \Delta^{(n)}$, and $v \in T\Delta^{(n)}$, we denote by $D_v\Phi(p)$ the directional derivative of $\Phi$ at $p$ in the direction $v$ whenever the limit exists. Explicitly, it is defined by
\[
D_v \Phi(p) = \lim_{h \downarrow 0} \frac{\Phi(p + hv) - \Phi(p)}{h}.
\]
It is well known (\cite[Theorem 23.1]{R70}) that if $\Phi: \Delta^n \rightarrow {\Bbb R}$ is concave, then $D_v\Phi$ exists as a finite limit. If $F$ is vector-valued and differentiable, then $D_{\cdot} F(p)$ is the differential map (push forward of tangent vectors).

\subsection{Multiplicative cyclical monotonicity}
%[July 12 Leonard] Slight rewordings
By a {\it cycle} in the unit simplex we mean a finite sequence $\{\mu(t)\}_{t = 0}^{m+1} \subset \Delta^{(n)}$ with $\mu(m+1)=\mu(0)$.

\begin{defn}[Multiplicative cyclical monotonicity (MCM)] \label{def:MCM}
Let $\pi$ be a portfolio or more generally a multi-valued map from $\Delta^{(n)}$ to $\overline{\Delta^{(n)}}$. We say that $\pi$ satisfies multiplicative cyclical monotonicity (MCM) if over any cycle $\{\mu(t)\}_{t = 0}^{m + 1} \subset \Delta^{(n)}$ we have $V(m + 1) \geq 1$, i.e.,
\begin{equation} \label{eqn:MCM}
\prod_{t = 0}^m \left(1 + \left\langle\frac{\pi(\mu(t))}{\mu(t)}, \mu(t + 1) - \mu(t)\right\rangle \right) \geq 1.
\end{equation}
For $\delta > 0$, we say that $\pi$ satisfies $\delta$-MCM if the inequality \eqref{eqn:MCM} holds for all cyclces where the successive jump sizes $\|\mu(t + 1) - \mu(t)\|$ are all less than $\delta$.
\end{defn}

%[July 12 Leonard] Moved Theorem 8(i) to here.
 From Definition \ref{def:pseudo-arbitrage} the minimum relative value of a pseudo-arbitrage is uniformly bounded below whenever $\mu(t)$ takes value in the subset $K$ of $\Delta^{(n)}$. Since we are allowed to choose any path in $K$, this is the case only if the portfolio does not underperform the market when the market weights goes through a cycle and return to an earlier position. Here is an immediate consequence of Definition \ref{def:MCM}.

\begin{lemma} \label{lem:MCMfails}
Suppose $\pi$ is a portfolio which fails the MCM property. Then there exists a market weight sequence $\{\mu(t)\}_{t = 0}^{\infty}$ such that $\mu(t)$ takes values in a finite subset of $\Delta^{(n)}$ and $V(t) \rightarrow 0$ as $t \rightarrow \infty$.
\end{lemma}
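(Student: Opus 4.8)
The plan is to exploit the failure of MCM directly: if \eqref{eqn:MCM} fails there is a single cycle along which the relative value strictly decreases, and repeating that cycle forever drives $V(t)$ to zero. First I would unpack the hypothesis. By Definition \ref{def:MCM}, failure of MCM yields a cycle $\{\mu(t)\}_{t=0}^{m+1}\subset\Delta^{(n)}$ with $\mu(m+1)=\mu(0)$ for which
\[
\rho := \prod_{t=0}^m \left(1 + \left\langle \frac{\pi(\mu(t))}{\mu(t)}, \mu(t+1) - \mu(t)\right\rangle\right) < 1 .
\]
Comparing with the recursion \eqref{eqn:relativevalue} and using $\sum_i \pi_i(\mu(t)) = 1$, each factor equals the one-step multiplier $V(t+1)/V(t)$ along this path, so the product telescopes to the relative value $V(m+1)$ accumulated over one traversal (with $V(0)=1$); hence $V(m+1)=\rho<1$. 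I would also note that each factor equals $\sum_i \pi_i(\mu(t))\,\mu_i(t+1)/\mu_i(t)$, a nonnegative combination of strictly positive ratios whose coefficients sum to $1$, so every factor is strictly positive and therefore $\rho\in(0,1)$.

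Next I would concatenate copies of this cycle into a periodic infinite sequence. Define $\nu(t) = \mu(t \bmod (m+1))$, so that $\nu$ has period $m+1$ and takes values in the finite set $\{\mu(0),\ldots,\mu(m)\}$. Because the relative value depends only on consecutive pairs $(\nu(t),\nu(t+1))$ through the fixed map $\pi$, and the seam $\mu(m)\to\mu(m+1)=\mu(0)$ is exactly the final step of the cycle, each block of $m+1$ transitions contributes the identical factor $\rho$. Consequently, at the end of $k$ blocks we have $V\big(k(m+1)\big)=\rho^k$.

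Finally I would control $V$ at all times, not merely at period endpoints. Writing $t = k(m+1)+r$ with $0\le r\le m$, multiplicativity gives $V(t)=\rho^k\,V(r)$, where $V(0),\ldots,V(m)$ are the finitely many partial products over one cycle; with $M=\max_{0\le r\le m}V(r)<\infty$ we obtain $0<V(t)\le M\rho^k$, and since $k\to\infty$ as $t\to\infty$ and $\rho\in(0,1)$ it follows that $V(t)\to 0$. This $\nu$, valued in a finite subset of $\Delta^{(n)}$, is the sequence claimed by the lemma. The argument is essentially immediate once failure of MCM is translated into a single cycle with $\rho<1$; the only points demanding genuine care are confirming that each one-step factor is strictly positive, so that $V$ never vanishes and the periodic concatenation is legitimate, and passing from the period endpoints to the intermediate times to conclude $V(t)\to 0$ along the whole sequence.
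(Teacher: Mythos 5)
Your proof is correct and follows essentially the same route as the paper: take the cycle on which \eqref{eqn:MCM} fails, repeat it periodically to get $V(k(m+1)) = \rho^k \to 0$, and conclude. The only difference is that you spell out the step the paper dismisses with ``it is easy to see'' (bounding $V$ at intermediate times by $M\rho^k$ and checking each one-step factor is strictly positive), which is a welcome but minor elaboration.
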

\begin{proof}
Suppose that the MCM property breaks down for some cycle $\left\{\widetilde{\mu}(t)\right\}_{t = 0}^{m + 1}$ in $\Delta^{(n)}$. Then
\eq\label{eq:mcmfail}
\eta:=\prod_{t=0}^m \left( 1 + \iprod{\frac{\pi\left(\widetilde{\mu}(t+1)\right)}{\widetilde{\mu}(t)}, \widetilde{\mu}(t+1)-\widetilde{\mu}(t)} \right)  < 1,
\en
where $\widetilde{\mu}(m+1) = \widetilde{\mu}(0)$. Consider the market weight sequence $\{\mu(t)\}_{t = 0}^{\infty}$ defined by
\[
\mu(t) = \widetilde{\mu}(k), \quad \text{if}\; t = k \pmod{m+1}.
\]
In other words, the market cycles through the sequence $\left\{\widetilde{\mu}(t)\right\}_{t = 0}^m$. Clearly $\mu(t)$ takes values in a finite set. By \eqref{eq:mcmfail}, we have $V(k(m+1)) = \eta^k$ which tends to zero as $k$ tends to infinity. It is easy to see that $V(t) \rightarrow 0$ as well.
\end{proof}

%[July 12, 13 Leonard] More explanation, removed original equation (12) 
The sequence $\{\mu(t)\}_{t = 0}^{\infty}$ in the proof of Lemma \ref{lem:MCMfails} is by all standards diverse and sufficiently volatile, yet the relative value of the portfolio $\pi$ goes to zero. It follows that the MCM property is necessary in order that a portfolio outperforms the market in {\it all} diverse and sufficiently volatile markets. In practice, it is of course unlikely that market weights travels along a cycle. The MCM property is a desirable theoretical property that a portfolio may or may not satisfy. It is clear that MCM implies $\delta$-MCM for any $\delta > 0$. The definition above is a multiplicative form of the classical cyclical monotonicity property in convex analysis \cite[Section 24]{R70}.

%[July 12 Leonard] One additional sentence for motivation.
Our first result gives the connection between the MCM property and concave functions.

\begin{prop}\label{thm:MCM}
Let $\pi$ be a portfolio or more generally a multi-valued map from $\Delta^{(n)}$ to $\overline{\Delta^{(n)}}$. 
\begin{enumerate}
\item[(i)] $\pi$ satisfies MCM if and only if there exists a concave function $\Phi: \Delta^{(n)} \rightarrow (0, \infty)$ such that
\begin{equation} \label{eqn:FGineq}
1 + \left\langle \frac{\pi(p)}{p}, q - p \right\rangle \geq \frac{\Phi(q)}{\Phi(p)}, \quad \text{for all $p, q \in \Delta^{(n)}$. }
\end{equation}

\item[(ii)] For any $\delta > 0$, if $\pi$ satisfies $\delta$-MCM, then $\pi$ satisfies MCM.
\end{enumerate}
\end{prop}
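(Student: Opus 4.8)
The plan is to reduce the multiplicative statement to the classical (additive) theory of cyclically monotone relations by taking logarithms. The first observation is that for $p,q\in\Delta^{(n)}$ the quantity $a(p,q):=1+\langle \pi(p)/p,\,q-p\rangle$ is always strictly positive: expanding and using $\sum_i\pi_i(p)=1$ gives $a(p,q)=\sum_{i=1}^n \frac{\pi_i(p)}{p_i}q_i$, a sum of nonnegative terms with every $q_i>0$ and at least one $\pi_i(p)>0$, so $a(p,q)\in(0,\infty)$. Hence $b(p,q):=\log a(p,q)$ is a finite real number, and the MCM inequality \eqref{eqn:MCM} over a cycle is exactly the additive cyclical monotonicity statement $\sum_{t=0}^m b(\mu(t),\mu(t+1))\ge 0$. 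For the easy (``if'') direction of (i), I would assume such a $\Phi$ exists and note that along any cycle $\mu(0),\dots,\mu(m+1)=\mu(0)$ the bound $a(\mu(t),\mu(t+1))\ge \Phi(\mu(t+1))/\Phi(\mu(t))>0$ permits multiplying the ratios, which telescope to $\Phi(\mu(m+1))/\Phi(\mu(0))=1$, yielding \eqref{eqn:MCM}.

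For the converse direction I would run Rockafellar's potential construction additively. Fix a base point $p_0\in\Delta^{(n)}$ and set
\[
f(q)=\inf\Big\{\textstyle\sum_{t=0}^m b(\mu(t),\mu(t+1)) : \mu(0)=p_0,\ \mu(m+1)=q\Big\},
\]
the infimum over all finite chains from $p_0$ to $q$. Finiteness is where MCM enters: appending the single step $q\to p_0$ turns any chain into a cycle, so $\sum b\ge 0$ gives $f(q)\ge -b(q,p_0)>-\infty$, while the one-step chain $p_0\to q$ gives $f(q)\le b(p_0,q)<\infty$. The defining property is then immediate: appending the step $p\to q$ to a near-optimal chain for $f(p)$ shows $f(q)-f(p)\le b(p,q)$ for all $p,q$. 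Setting $\Phi:=e^f>0$, this reads $\Phi(q)\le \Phi(p)+\langle \Phi(p)\pi(p)/p,\,q-p\rangle$ for all $q$, so $\Phi$ lies below the affine function $\ell_p$ agreeing with it at $p$; writing $\Phi=\inf_p \ell_p$ exhibits $\Phi$ as an infimum of affine functions, hence concave, and the displayed inequality is precisely \eqref{eqn:FGineq}.

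Part (ii) is the delicate point, and I expect it to be the main obstacle. I would repeat the construction but restrict to chains all of whose steps have length $<\delta$; since $\Delta^{(n)}$ is convex, any two points are joined by such a chain, so $f$ is still well defined and finite (any fixed small-step chain from $p_0$ to $q$ bounds it above, and closing a competing chain by a small-step return path from $q$ to $p_0$ and invoking $\delta$-MCM bounds it below). Now the appended step $p\to q$ is legal only when $\|q-p\|<\delta$, so the construction yields only the \emph{local} inequality $\Phi(q)\le \Phi(p)+\langle \Phi(p)\pi(p)/p,\,q-p\rangle$ for $q$ near $p$; that is, $\Phi(p)\pi(p)/p$ is merely a local supergradient at each $p$. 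I would first record that the two-sided local bound $|f(q)-f(p)|\le\max\{b(p,q),b(q,p)\}\to 0$ as $q\to p$ makes $\Phi$ continuous. The crux is then a purely convex-analytic lemma: a continuous function that admits a supporting affine function from above in a neighbourhood of every point is concave. I would prove this on each segment by examining a would-be interior minimum of the gap $h(t)=\Phi((1-t)a+tb)-[(1-t)\Phi(a)+t\Phi(b)]$; at its leftmost minimizer the local supporting hyperplane forces a sign contradiction.

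Once concavity is established, the standard fact that a local supergradient of a concave function is automatically a global supergradient upgrades the local inequality to the global \eqref{eqn:FGineq} for all $p,q$, and then the ``if'' direction of (i) delivers full MCM, proving (ii). The multi-valued case is handled identically by reading $\pi(p)/p$ as an arbitrary selection throughout, so that \eqref{eqn:FGineq} holds simultaneously for every selection and $\Phi(p)\pi(p)/p\subseteq\partial\Phi(p)$.
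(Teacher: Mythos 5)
Your proof is correct. For part (i) it coincides with the paper's argument: the same Rockafellar-style potential \eqref{eqn:defpotential} (written additively after taking logarithms, which is cosmetic), the same telescoping computation for the ``if'' direction, and concavity obtained by exhibiting $\Phi$ as a pointwise infimum of affine functions. The genuine divergence is in part (ii), and your instinct that this is the delicate point is well placed. The paper disposes of it by asserting that the restricted-chain potential is concave ``as before''; but the part-(i) justification does not transfer verbatim, since once every jump (including the last one) must have length $<\delta$, each chain contributes an affine competitor defined only on a $\delta$-ball around its penultimate point, and a pointwise infimum of \emph{partially defined} affine functions need not be concave (in one dimension, $|x|$ arises as the infimum of $x\mapsto x$ restricted to $[0,1]$ and $x\mapsto -x$ restricted to $[-1,0]$). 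Your detour --- continuity of $\Phi=e^{f}$ from the two-sided local estimate (which tacitly uses that $\|\pi(q)/q\|$ stays bounded as $q\to p$, true because the coordinates of $q$ are bounded away from zero near an interior point $p$), followed by the lemma that a continuous function admitting a supporting affine majorant near every point is concave --- supplies exactly the justification the paper elides; your leftmost-minimizer argument for that lemma is sound, since minimality pinches the local affine majorant into being locally constant there, contradicting leftmost-ness. From that point on the two proofs merge again: a local supergradient of a concave function is a global one (the paper invokes Rockafellar's directional-derivative description of $\partial\Phi$, you cite the same fact as standard), and the ``if'' direction of (i) then yields full MCM.
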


\begin{proof}
(i) The proof is an adaptation of the proof of \cite[Theorem 24.8]{R70}. For notational simplicity we assume that $\pi$ is single-valued. Suppose there exists a concave function $\Phi: \Delta^{(n)} \rightarrow (0, \infty)$ such that \eqref{eqn:FGineq} holds. Then over any discrete cycle $\{\mu(t)\}_{t = 0}^{m+1}$ with $\mu(m + 1) = \mu(0)$, we have
\[
\prod_{t = 0}^m \left(1 + \left\langle \frac{\pi(\mu(t))}{\mu(t)}, \mu(t + 1) - \mu(t) \right\rangle\right) \geq \prod_{t = 0}^m \frac{\Phi(\mu(t + 1)}{\Phi\left(\mu(t)\right)} = 1.
\]
The final equality holds since $\mu(m + 1) = \mu(0)$. This shows that $\pi$ satisfies MCM.

Conversely, suppose that $\pi$ satisfies MCM. Fix a point $\mu(0) \in \Delta^{(n)}$ and define a function $\Phi$ on $\Delta^{(n)}$ by
\begin{equation} \label{eqn:defpotential}
\Phi(p) = \inf_{m \geq 0, \{\mu(t)\}_{t = 0}^{m + 1}, \mu(m+1) = p} \left[ \prod_{t = 0}^m \left( 1 + \left\langle \frac{\pi(\mu(t))}{\mu(t)}, \mu(t + 1) - \mu(t) \right\rangle\right) \right].
\end{equation}
Here the infimum is taken over all $m \geq 0$ and all choices of points $\{\mu(t)\}_{t = 0}^{m + 1} \subset \Delta^{(n)}$ such that $\mu(m + 1) = p$. We claim that \eqref{eqn:FGineq} holds. % [July 12 Leonard] Reviewer comment: move def of inf to display

Clearly $\Phi$, being the pointwise infimum of a family of affine functions in $p$, is a concave function on $\Delta^{(n)}$. It is clear that $\Phi$ is non-negative, and the MCM property implies that $\Phi(\mu(0)) = 1$. It follows by concavity that $\Phi$ must be everywhere positive on $\Delta^{(n)}$.

To show \eqref{eqn:FGineq}, let $p, q \in \Delta^{(n)}$ be given. Let $\alpha > \Phi(p)$. By definition of $\Phi$, there exists some $m \geq 0$ and a sequence $\{\mu(t)\}_{t = 0}^{m+1}$ with $\mu(m + 1) = p$ such that
\[
\prod_{t = 0}^m \left(1 + \left\langle \frac{\pi(\mu(t))}{\mu(t)}, \mu(t + 1) - \mu(t) \right\rangle \right) < \alpha.
\]
Setting $\mu(m + 2) = q$, we have
\[
\Phi(q) \leq \left(1 + \left\langle \frac{\pi(p)}{p}, q - p \right\rangle\right) \alpha.
\]
The proof of \eqref{eqn:FGineq} is completed by letting $\alpha \downarrow \Phi(p)$.

\medskip

\noindent
(ii) The idea is to repeat the proof of (i) with the additional restriction that the jumps have sizes less than $\delta$. Consider the function $\Phi$ defined by \eqref{eqn:defpotential}, where the infimum is taken over all $m \geq 0$ and all choices of $\{\mu(t)\}_{t = 0}^{m+1}$ where $\|\mu(t + 1) - \mu(t)\| < \delta$. As before, $\Phi$ is a positive concave function on $\Delta^{(n)}$, and \eqref{eqn:FGineq} holds whenever $\|p - q\| < \delta$. Thus 
\begin{equation} \label{eqn:superdiff2}
\Phi(p) + \left\langle \frac{\pi(p)}{p} \Phi(p), q - p \right\rangle \geq \Phi(q), \quad \|p - q\| < \delta.
\end{equation}
This shows that the component of $\frac{\pi(p)}{p} \Phi(p)$ parallel to $\Delta^{(n)}$ (which is a tangent vector) is a superdifferential of the restricted concave function $\Phi|_V$ at $p$, where $V$ is a convex neighborhood of $p \in \Delta^{(n)}$. However, by \cite[Theorem 23.2]{R70}, we have
\[
\partial \Phi(p) = \{\xi \in T\Delta^{(n)}: D_v \Phi(p) \leq \langle \xi, v \rangle, \quad \text{for all } v \in T\Delta^{(n)}\}.
\]
Since the one-sided derivatives of $\Phi$ depends only on the values of $\Phi$ in a neighborhood of $p$, we observe that $\partial \Phi(p) = \partial (\Phi|_V)(p)$ for any convex neighborhood $V$ of $p$. It follows that \eqref{eqn:superdiff2} holds for all $p, q \in \Delta^{(n)}$. Thus by (i) $\pi$ satisfies MCM.
\end{proof}

\begin{rmk}\label{rmk:cxmcm}
If $K$ is any subset of $\Delta^{(n)}$, we can define the MCM property on $K$ by requiring that \eqref{eqn:MCM} holds for all cycles $\{\mu(t)\}_{t = 0}^m$ in $K$. A straightforward modification of the proof of Theorem \ref{thm:MCM} shows that there is a positive concave function $\Phi$ defined on the entire simplex $\Delta^{(n)}$ such that \eqref{eqn:FGineq} holds for all $p, q \in K$. This observation will be useful in the proof of Theorem \ref{thm:mcmarbitrage}.
\end{rmk}

\subsection{Functionally generated portfolios}
%[July 12 Leonard] Italics 
In the set-up of Proposition \ref{thm:MCM} we say that $\pi$ is {\it generated} by $\Phi$ and $\Phi$ is a {\it generating function} of $\pi$. Note that if $\Phi$ is a positive concave function on $\Delta^{(n)}$, then $\log \Phi$ is also concave and
\[
\partial \log \Phi(p) = \frac{1}{\Phi(p)} \partial \Phi(p) = \left\{\frac{1}{\Phi(p)} \xi: \xi \in \partial \Phi(p) \right\}.
\]

Our next propositions show that MCM portfolios are functionally generated in the sense of Fernholz (see \cite[Theorem 3.1.5]{F02} for his formulation). 

%[July 12 Leonard] Interchanged (reviewer comment) but it is not necessary
\begin{prop} \label{lem:superdiff} Let $\Phi$ be a positive concave function on $\Delta^{(n)}$.
\begin{enumerate}
\item[(i)] Suppose the portfolio $\pi$ is generated by $\Phi$. For $p \in \Delta^{(n)}$, the tangent vector $v = (v_1, \ldots, v_n)$ defined by
\begin{equation} \label{eqn:definev}
v_i = \frac{\pi_i(p)}{p_i} - \frac{1}{n} \sum_{j = 1}^n \frac{\pi_j(p)}{p_j}, \quad i = 1, \ldots, n,
\end{equation}
belongs to $\partial \log \Phi(p)$.
\item[(ii)] Conversely, if $v \in \partial \log\Phi(p)$, the vector $\pi = (\pi_1, \ldots, \pi_n)$ defined by
\begin{equation} \label{eqn:definepi}
\frac{\pi_i}{p_i} = v_i + 1 - \sum_{j = 1}^n p_jv_j, \quad i = 1, \ldots, n,
\end{equation}
is an element of $\overline{\Delta^{(n)}}$. 
\end{enumerate}
Moreover, the operations $\pi \mapsto v$ and $v \mapsto \pi$ defined by \eqref{eqn:definev} and \eqref{eqn:definepi} are inverses of each other.
\end{prop}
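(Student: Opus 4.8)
The plan is to handle the two directions separately and then verify the inversion by direct computation. For part (i), I would first record that $v$ is a genuine tangent vector: writing $r_i = \pi_i(p)/p_i$, the definition \eqref{eqn:definev} subtracts the common mean $\frac1n\sum_j r_j$ from each $r_i$, so $\sum_i v_i = 0$. The crucial observation is that for any $q \in \Delta^{(n)}$ the difference $q - p$ is itself a tangent vector, so $\sum_i (q_i - p_i) = 0$ and the constant shift drops out of the pairing, giving
\[
\langle v, q - p\rangle = \left\langle \frac{\pi(p)}{p},\, q - p\right\rangle.
\]
I then invoke the generating inequality \eqref{eqn:FGineq} of Proposition \ref{thm:MCM}, which yields $1 + \langle \pi(p)/p,\, q-p\rangle \ge \Phi(q)/\Phi(p) > 0$. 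Since the right side is strictly positive I may take logarithms, and the elementary bound $\log(1 + s) \le s$ (valid whenever $1 + s > 0$) produces
\[
\langle v, q - p\rangle \ge \log\bigl(1 + \langle \pi(p)/p,\, q-p\rangle\bigr) \ge \log\Phi(q) - \log\Phi(p).
\]
This is precisely the supergradient inequality \eqref{eqn:superdiff} for the concave function $\log\Phi$ at $p$, so $v \in \partial\log\Phi(p)$.

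For part (ii) I would first check the affine constraint $\sum_i \pi_i = 1$: multiplying \eqref{eqn:definepi} by $p_i$, summing, and using $\sum_i p_i = 1$ makes the $\sum_j p_j v_j$ contributions cancel. The nonnegativity $\pi_i \ge 0$ is the only point needing care. Here I would use the identity $\partial\log\Phi(p) = \frac{1}{\Phi(p)}\partial\Phi(p)$ recorded just before the proposition, so that $\Phi(p)\,v \in \partial\Phi(p)$; the supergradient inequality for $\Phi$ itself then reads $\Phi(p) + \langle \Phi(p)\,v,\, q - p\rangle \ge \Phi(q)$, i.e.
\[
1 + \langle v, q - p\rangle \ge \frac{\Phi(q)}{\Phi(p)} > 0 \qquad \text{for all } q \in \Delta^{(n)}.
\]
Since $v_i = \langle v, e(i)\rangle$ and $\sum_j p_j v_j = \langle v, p\rangle$, formula \eqref{eqn:definepi} reads $\pi_i/p_i = 1 + \langle v, e(i) - p\rangle$, so I want to evaluate the left side of the displayed inequality at $q = e(i)$. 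As $e(i)$ is only a vertex of the closed simplex, I would approach it along the segment $q_t = (1-t)p + t\,e(i)$, which lies in $\Delta^{(n)}$ for every $t \in [0,1)$. The map $q \mapsto 1 + \langle v, q - p\rangle$ is continuous and strictly positive on $\Delta^{(n)}$, hence nonnegative on its closure; passing to the limit $t \to 1^-$ gives $1 + \langle v, e(i) - p\rangle \ge 0$, that is $\pi_i \ge 0$.

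Finally, for the inversion claim I would compute the two compositions directly. Starting from a portfolio $\pi$, applying \eqref{eqn:definev} and then \eqref{eqn:definepi}, the key cancellation uses $\sum_j p_j(\pi_j/p_j) = \sum_j \pi_j = 1$, which collapses the reconstruction back to $\pi_i/p_i$; conversely, starting from a tangent vector $v$, applying \eqref{eqn:definepi} and then \eqref{eqn:definev}, the mean $\frac1n\sum_j(\pi_j/p_j)$ evaluates to $1 - \langle v, p\rangle$ precisely because $\sum_j v_j = 0$, and this cancels the shift to return $v$. The main obstacle in the whole argument is the boundary behavior in part (ii): the supergradient inequality is available only at interior points, so establishing $\pi_i \ge 0$ genuinely requires the continuity-plus-density passage toward the vertices rather than a direct substitution. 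Everything else reduces to tangent-vector bookkeeping and the single scalar inequality $\log(1+s) \le s$.
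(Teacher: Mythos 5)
Your proof is correct and follows essentially the same route as the paper's: part (i) by projecting $\pi(p)/p$ onto the tangent space (the constant shift drops out against $q-p$) and combining the generating inequality \eqref{eqn:FGineq} with $\log(1+s)\le s$, and part (ii) by the supergradient inequality for $\Phi$ itself (i.e.\ $\Phi(p)v\in\partial\Phi(p)$) together with positivity of $\Phi$ and a limit $t\uparrow 1$ along the segment from $p$ to the vertex $e(i)$. The differences are purely presentational: you spell out the logarithm step in (i) and the two inverse computations, which the paper leaves as terse remarks.
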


\begin{rmk} \label{rmk:selection}
In particular, any selection of $\partial \log \Phi$ (a map $v: \Delta^{(n)} \rightarrow T\Delta^{(n)}$ such that $v(p) \in \partial \log \Phi(p)$ for all $p \in \Delta^{(n)}$) defines via \eqref{eqn:definepi} a portfolio generated by $\Phi$. For certain applications it is required that the selection is measurable. By \cite[Theorem 14.56]{RW98}, such a selection always exists.
\end{rmk}

\begin{proof}
(i) Let $p \in \Delta^{(n)}$. By \eqref{eqn:FGineq}, we have
\[
1 + \left\langle \frac{\pi(p)}{p}, q - p \right\rangle \geq \frac{\Phi(q)}{\Phi(p)}
\]
for all $q \in \Delta^{(n)}$. Note that $\frac{\pi(p)}{p}$ is not a tangent vector of $\Delta^{(n)}$. The normalization \eqref{eqn:definev} `projects' $\frac{\pi(p)}{p}$ to $v$ which is a tangent vector. Since $\frac{\pi(p)}{p} - v$ is perpendicular to $T\Delta^{(n)}$, the inner product does not change if $\frac{\pi(p)}{p}$ is replaced by $v$. Hence by \eqref{eqn:superdiff} we have $v \in \partial \log \Phi(p)$.

\medskip
\noindent
(ii) It is easy to verify that $\sum_{i = 1}^n \pi_i = 1$. To see that $\pi_i \geq 0$ for each $i$, let $q - p = t(e(i) - p)$ for $0 < t < 1$ in \eqref{eqn:superdiff}, and we have
\begin{equation*}
\begin{split}
-\Phi(p) &\leq \Phi(p + t(e(i) - p)) - \Phi(p) \quad \text{(since } \Phi(q) > 0\text{)}\\
            &\leq \langle \Phi(p)v, t(e(i) - p) \rangle 
            = t\Phi(p) \left(v_i - \sum_{j = 1}^n p_j v_j\right).
\end{split}
\end{equation*}
Letting $t \uparrow 1$ and dividing both sides by $\Phi(p)$, we get the desired inequality $\pi_i \geq 0$.

\medskip 

That $\pi \mapsto v$ and $v \mapsto \pi$ are inverses of each other can be verified by a direct computation.
\end{proof}

\begin{prop} \label{lem:fgpproperties}
Let $\pi$ be a portfolio on $\Delta^{(n)}$ generated by a concave function $\Phi$.
\begin{enumerate}[(i)]
\item The generating function $\Phi$ is unique up to a positive multiplicative constant.
\item For any $\mu \in \Delta^{(n)}$ and $i = 1, \ldots, n$, we have
\[
1 + D_{e(i) - \mu} \log \Phi(\mu) \leq \frac{\pi_i}{\mu_i} \leq 1 - D_{\mu - e(i)} \log \Phi(\mu).
\]
In particular, if $\Phi$ is differentiable, the portfolio is given by the formula
\begin{equation} \label{eq:fnlygen}
\pi_i = \mu_i \left( 1 + D_{e(i) - \mu} \log \Phi(\mu) \right), \quad i = 1, \ldots, n.
\end{equation}

\item If $\pi$ is continuous, then $\Phi$ is continuously differentiable. More generally, if $\pi$ is of class $C^k$, then $\Phi$ is of class $C^{k+1}$.

\item If $\Phi: \Delta^{(n)} \rightarrow (0, \infty)$ is concave and differentiable, and we define $\pi$ by \eqref{eq:fnlygen}, then $\pi$ is generated by $\Phi$. In particular, $\pi(\mu) \in \overline{\Delta^{(n)}}$ for all $\mu$.
\end{enumerate}
\end{prop}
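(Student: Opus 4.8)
The plan is to treat the four parts through the correspondence between $\pi$ and the supergradient selection established in Proposition \ref{lem:superdiff}, together with standard facts about concave functions from \cite{R70}. Throughout I write $G = \log\Phi$, which is concave since $\Phi$ is positive and concave, and let $v(p) \in T\Delta^{(n)}$ be the tangent vector attached to $\pi$ by \eqref{eqn:definev}; by Proposition \ref{lem:superdiff}(i) we have $v(p) \in \partial G(p)$ for every $p$. For part (i), suppose $\Phi_1$ and $\Phi_2$ both generate $\pi$. Since $v(p)$ depends only on $\pi$, we get $v(p) \in \partial \log\Phi_1(p) \cap \partial\log\Phi_2(p)$ for all $p$. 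Finite concave functions are differentiable almost everywhere on the open set $\Delta^{(n)}$, and at a point of differentiability the superdifferential is the singleton containing the gradient; hence $\nabla\log\Phi_1 = v = \nabla\log\Phi_2$ a.e. As $\log\Phi_1 - \log\Phi_2$ is locally Lipschitz (a difference of concave functions) with vanishing gradient a.e. on the connected set $\Delta^{(n)}$, it is constant, i.e.\ $\Phi_1 = c\,\Phi_2$ for some $c > 0$.

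For part (ii), I would invoke the identity $D_w G(p) = \min_{\xi \in \partial G(p)} \iprod{\xi, w}$, valid for concave $G$ and tangent $w$ (the support-function description dual to \cite[Theorem 23.2]{R70}). Since $v(p) \in \partial G(p)$ this gives $D_w G(p) \le \iprod{v(p), w}$. A direct computation, using that $v(p) - \pi(p)/p$ is a multiple of the all-ones vector while $e(i)-\mu$ and $\mu - e(i)$ are tangent vectors, yields $\iprod{v(\mu), e(i)-\mu} = \pi_i/\mu_i - 1$. Taking $w = e(i)-\mu$ produces the lower bound $1 + D_{e(i)-\mu}\log\Phi(\mu) \le \pi_i/\mu_i$, and taking $w = \mu - e(i)$ produces the upper bound; when $\Phi$ is differentiable the two one-sided derivatives are negatives of each other and the bounds collapse to \eqref{eq:fnlygen}.

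Part (iii) is where I expect the real work. Formula \eqref{eqn:definev} shows that $p \mapsto v(p)$ inherits the regularity of $\pi$ (the denominators $p_i$ are bounded away from $0$ on compacts of $\Delta^{(n)}$), so a continuous $\pi$ gives a continuous selection $v$ of $\partial\log\Phi$. The key step is that a continuous selection forces differentiability: every extreme point $\xi$ of the compact convex set $\partial\log\Phi(p_0)$ is a limit $\lim_k \nabla\log\Phi(p_k)$ along differentiable points $p_k \to p_0$, and at such points $\nabla\log\Phi(p_k) = v(p_k) \to v(p_0)$ by continuity, so $\xi = v(p_0)$; a compact convex set all of whose extreme points coincide is a single point, hence $\partial\log\Phi(p_0) = \{v(p_0)\}$ and $\log\Phi$ is differentiable at $p_0$ with $\nabla\log\Phi = v$. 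This identity bootstraps: if $\pi \in C^k$ then $v \in C^k$, so $\nabla\log\Phi \in C^k$, whence $\log\Phi \in C^{k+1}$ and $\Phi = e^{\log\Phi} \in C^{k+1}$; the case $k=0$ is the stated $C^1$ conclusion. The main obstacle is justifying the extreme-point representation of the superdifferential and handling the passage between the intrinsic tangent-space description of $\partial\log\Phi$ and the coordinate chart $\varphi$, but both are standard consequences of the convex analysis in \cite{R70}.

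Finally, for part (iv), I would first observe that with $\Phi$ differentiable, formula \eqref{eq:fnlygen} is exactly \eqref{eqn:definepi} evaluated at $v = \nabla\log\Phi(p) \in \partial\log\Phi(p)$, so Proposition \ref{lem:superdiff}(ii) already gives $\pi(\mu) \in \overline{\Delta^{(n)}}$. It remains to verify \eqref{eqn:FGineq}, and here the crucial point is to differentiate $\Phi$ rather than $\log\Phi$: a short computation gives $\iprod{\pi(p)/p, q-p} = \iprod{v, q-p}$ (the non-tangential part of $\pi/p$ drops out against $q - p$), while concavity of $\Phi$ together with $\nabla\Phi(p) = \Phi(p)\,v$ gives $\Phi(q) \le \Phi(p)\bigl(1 + \iprod{v, q-p}\bigr)$. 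Dividing by $\Phi(p)$ yields $\Phi(q)/\Phi(p) \le 1 + \iprod{\pi(p)/p, q-p}$, which is \eqref{eqn:FGineq}, so $\pi$ is generated by $\Phi$. I would stress that using the linear bound from concavity of $\Phi$ is essential: the weaker exponential bound coming from concavity of $\log\Phi$ does not suffice.
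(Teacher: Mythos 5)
Your proposal is correct, and it follows the same broad strategy as the paper---reduce everything to the supergradient correspondence of Proposition \ref{lem:superdiff} plus standard convex analysis---but the execution differs in each part in ways worth noting. For (i), the paper restricts $\log\Phi_1,\log\Phi_2$ to line segments and integrates their (a.e.\ equal) one-dimensional derivatives via the fundamental theorem of calculus for concave functions \cite[Corollary 24.2.1]{R70}, while you work in the chart with a.e.\ differentiability and the fact that a locally Lipschitz function with vanishing gradient a.e.\ on a connected open set is constant; both routes are standard and valid. For (ii), the paper's argument is more elementary and self-contained: it plugs $q = \mu + h(e(i)-\mu)$ directly into the generating inequality \eqref{eqn:FGineq}, takes logarithms, divides by $h$, and lets $h\downarrow 0$ and $h\uparrow 0$; you instead invoke the support-function identity $D_w\log\Phi(p)=\min_{\xi\in\partial\log\Phi(p)}\iprod{\xi,w}$ together with $\iprod{v(\mu),e(i)-\mu}=\pi_i/\mu_i-1$, and the bounds come out identically. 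For (iii), the paper black-boxes the key step by citing \cite[Proposition 4]{R88} (a continuous selection of the superdifferential forces differentiability) and \cite[Corollary 25.5.1]{R70}; you re-prove that result via the representation of the superdifferential as the convex hull of limits of gradients at nearby points of differentiability (\cite[Theorem 25.6]{R70} plus Milman's partial converse to Krein--Milman), which is a legitimate, self-contained proof of the cited fact, and the $C^k\Rightarrow C^{k+1}$ bootstrap through $\nabla\log\Phi = v$ is then the same in both treatments. Finally, for (iv) your write-up is, if anything, more careful than the paper's: the paper checks $\pi(\mu)\in\overline{\Delta^{(n)}}$ and then disposes of the generation claim with a terse ``by (ii)'', whereas you explicitly verify \eqref{eqn:FGineq} from concavity of $\Phi$ itself via $\nabla\Phi(p)=\Phi(p)\,v$, and you correctly isolate the genuine subtlety here: concavity of $\log\Phi$ alone yields only $\Phi(q)/\Phi(p)\le e^{\iprod{v,\,q-p}}$, which is too weak since $e^x\ge 1+x$. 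In short: same skeleton, interchangeable standard tools in (i)--(ii), and your versions of (iii) and (iv) trade the paper's external citation and terseness for self-contained arguments.
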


\begin{proof}
%[July 15 Leonard] More explanation (reviewer comment)
(i) Suppose $\pi$ is generated by $\Phi_1$ and $\Phi_2$ which generate $\pi$. Let $p, q \in \Delta^{(n)}$ and consider the line segment $\ell$ from $p$ to $q$. Consider the restrictions of $\log \Phi_i$ to $\ell$, denoted by $\left. \log \Phi_i \right|_{\ell}$. They can be parametrized as one-dimensional concave functions. In particular, they are differentiable on $\ell$ except at most for countably many points on $\ell$. By Proposition \ref{lem:superdiff}, the vector $\pi(\mu)/\mu$ defines a supporting hyperplane of the log generating function. It follows that $\log \Phi_1$ and $\log \Phi_2$ have parallel supporting hyperplanes at all points of $\Delta^{(n)}$. In particular, the derivatives of $\left. \log \Phi_1 \right|_{\ell}$ and $\left. \log \Phi_2 \right|_{\ell}$ agree almost everywhere on $\ell$. By the fundamental theorem of calculus for concave functions \cite[Corollary 24.2.1]{R70}, we have
\[
\log \Phi_1(q) - \log \Phi_1(p) = \log \Phi_2(q) - \log \Phi_2(p).
\]
Since $p$ and $q$ are arbitrary, $\Phi_2 / \Phi_1$ is a positive constant.
\medskip

\noindent (ii) By definition of $\pi$, for $h \in {\Bbb R} \setminus \{0\}$ small enough such that $\mu + h(e(i) - \mu) \in \Delta^{(n)}$, the superdifferential inequality \eqref{eqn:FGineq} gives
\begin{equation} \label{eq:subdiffineq}
1 + \left\langle \frac{\pi(\mu)}{\mu}, h(e(i) - \mu) \right\rangle \geq \frac{\Phi(\mu + h(e(i) - \mu))}{\Phi(\mu)}.
\end{equation}
Note that the inner product is given by
\[
\left\langle \frac{\pi(\mu)}{\mu}, h(e(i) - \mu) \right\rangle = h \left( \frac{\pi_i}{\mu_i} - 1 \right).
\]
Taking log on both sides of \eqref{eq:subdiffineq}, we have
\[
\log\left(1 + h \left( \frac{\pi_i}{\mu_i} - 1 \right)\right) \geq \log \Phi(\mu + h(e(i) - \mu)) - \log \Phi(\mu).
\]
Dividing by $h$ and taking the limits as $h \downarrow 0$ and $h \uparrow 0$, we obtain the desired inequalities. Formula \eqref{eq:fnlygen} is proved by noting that if $\Phi$ (and hence $\log \Phi$) is differentiable, for every tangent vector $v$ we have $D_v \log \Phi = - D_{-v} \log \Phi$.

\medskip
\noindent
(iii) Suppose that $\pi$ is continuous. It follows that $p \mapsto \frac{\pi(p)}{p}$ is a continuous selection of the superdifferential of $\log \Phi$. By \cite[Proposition 4]{R88} $\log \Phi$, and hence $\Phi$, is differentiable on $\Delta^{(n)}$. By \cite[Corollary 25.5.1]{R70}, $\Phi$ is actually continuously differentiable.

If $\pi$ is of class $C^k$ where $k \geq 1$, we already know $\Phi$ is differentiable. In terms of the coordinate system $\varphi(\mu) = (\mu_1, \ldots, \mu_{n-1})$, for $i = 1, \ldots, n-1$ we have
\begin{equation*}
\begin{split}
& \frac{\partial}{\partial {\mu}_i} \left( \log \Phi \circ \varphi^{-1} \right)(\mu_1, \ldots, \mu_{n-1}) = D_{e(i) - e(n)} \log \Phi(\mu) \\
   &= D_{e(i) - \mu} \log \Phi(\mu) - D_{e(n) - \mu} \log \Phi(\mu) = \frac{\pi_i}{\mu_i} - \frac{\pi_n}{\mu_n}
\end{split}
\end{equation*}
which is of class $C^k$. Hence $\log \Phi$, and hence $\Phi$, is of class $C^{k+1}$.

\medskip
\noindent
(iv) Suppose $\Phi$ is differentiable and define $\pi$ by \eqref{eq:fnlygen}. To show $\pi_i \geq 0$, fix $\mu \in \Delta^{(n)}$ and $1 \leq i \leq n$, and consider the restriction of $\Phi$ to the segment $[\mu, e(i))$. The inequality $\pi_i \geq 0$ is equivalent to $D_{e(i) - \mu} \Phi(\mu) \geq - \Phi(\mu)$ and follows from concavity and positivity of $\Phi$. To verify that $\pi \in \overline{\simp^{(n)}}$ we only need to show that $\sum_{i = 1}^n \pi_i = 1$, or $\sum_{i=1}^n \mu_i D_{e(i)-\mu} \log \Phi(\mu) =0$. This follows from the identity $\sum_{i = 1}^n \mu_i(e(i) - \mu) = 0$ and linearity of the directional derivative for differentiable functions. Finally, $\pi$ is generated by $\Phi$ by (ii).
\end{proof}

\subsection{L-divergence}
%[July 13 Leonard] Rephrased
The concavity of the generating function has a financial meaning which will become clear in Lemma \ref{lem:FernholzDecomp} below.

\begin{defn}[L-divergence] \label{def:discreteenergy}
Let $\pi$ be a portfolio generated by a concave function $\Phi$. The {\it L-divergence} of the pair $(\Phi, \pi)$ is defined by
\begin{equation} \label{eq:discreteenergy}
\diverge\left( q \mid p \right) := \log\left(  1 + \iprod{\frac{\pi(p)}{p}, q - p} \right) - \left( \log \Phi(q) - \log \Phi(p) \right), \quad p, q \in \Delta^{(n)}.
\end{equation}
\end{defn}

We use this terminology because the expression \eqref{eq:discreteenergy} can be regarded as a logarithmic version of the widely used {\it Bregman divergence} (see \cite{AC10}). It is clear from \eqref{eqn:FGineq} that $\diverge(q \mid p) \ge 0$ and is strictly positive unless $\Phi$ is affine on the straight line joining $p$ and $q$. When $\Phi$ is strictly concave, $T(q \mid p) = 0$ only if $q = p$. In general, $\diverge$ is asymmetric and does not define a metric.

As an example, fix $\pi \in \overline{\Delta^{(n)}}$ and consider the geometric mean $\Phi(p)= \prod_{i=1}^n p_i^{\pi_i}$. This is a concave function which generates the constant-weighted portfolio $\pi$ (\cite[Example 3.1.6]{F02}). We have
\eq\label{eq:freeenergy}
\diverge(q\mid p) = \log\left( \sum_{i = 1}^n \pi_i \frac{q_i}{p_i}   \right) -\sum_{i=1}^n \pi_i \log\left( \frac{q_i}{p_i}  \right).
\en
We call this the {\it discrete excess growth rate} $\gamma_{\pi}^*$, see \cite[Definition 2.2]{PW13}.

\begin{rmk}
The L-divergence is defined for a {\it pair} $(\Phi, \pi)$ because in general the portfolio generated by a given concave function is not unique. Whenever $\Phi$ is not differentiable at $\mu$ any choice of supergradient defines a portfolio vector. Since a concave function is differentiable almost everywhere, the portfolios agree almost everywhere on $\Delta^{(n)}$.
\end{rmk}

The L-divergence features in the following decomposition formula which is a discrete time analogue of \cite[Theorem 3.1.5]{F02}.

% [July 7 Leonard] Deleted continuous time part
\begin{lemma} \label{lem:FernholzDecomp}
Let $\pi$ be generated by a positive concave function $\Phi$. Let $T$ be the L-divergence functional of the pair $(\Phi, \pi)$. Then the relative value process $V(t)$ satisfies the decomposition
\eq\label{eq:comparevt}
\log V(t)= \log \frac{\Phi\left( \mu(t) \right)}{\Phi\left( \mu(0) \right)} + A(t),
\en
where $A(t) = \sum_{k=0}^{t-1} \diverge\left(\mu(k+1) \mid \mu(k)  \right)$ is non-decreasing. Moreover, $\Phi$ is affine if and only if $A(t) \equiv 0$ for all market weight sequences.
\end{lemma}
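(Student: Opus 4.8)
The plan is to reduce the whole identity to the one-step recursion \eqref{eqn:relativevalue} and then telescope. The first step is the algebraic bridge between the financial growth factor and the inner-product form appearing in \eqref{eqn:FGineq} and \eqref{eq:discreteenergy}. Since $\pi$ is a portfolio, $\sum_{i=1}^n \pi_i(p)=1$, so for any $p,q\in\Delta^{(n)}$,
\[
\sum_{i=1}^n \pi_i(p)\frac{q_i}{p_i} \;=\; \sum_{i=1}^n \frac{\pi_i(p)}{p_i}\,q_i \;=\; 1+\left\langle \frac{\pi(p)}{p},\,q-p\right\rangle,
\]
the subtracted $1$ being exactly $\sum_i \pi_i(p)$. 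Applying this with $p=\mu(t)$, $q=\mu(t+1)$ turns the recursion \eqref{eqn:relativevalue} into $V(t+1)=V(t)\bigl(1+\langle \pi(\mu(t))/\mu(t),\,\mu(t+1)-\mu(t)\rangle\bigr)$.

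Next, using $V(0)=1$ and taking logarithms gives $\log V(t)=\sum_{k=0}^{t-1}\log\bigl(1+\langle \pi(\mu(k))/\mu(k),\,\mu(k+1)-\mu(k)\rangle\bigr)$. I would then substitute the definition \eqref{eq:discreteenergy} of the L-divergence, which rewrites each summand as $\diverge(\mu(k+1)\mid\mu(k))+\bigl(\log\Phi(\mu(k+1))-\log\Phi(\mu(k))\bigr)$. Summing over $k$, the $\log\Phi$ terms telescope to $\log\Phi(\mu(t))-\log\Phi(\mu(0))=\log\frac{\Phi(\mu(t))}{\Phi(\mu(0))}$, and the remaining terms are by definition $A(t)$; this is exactly \eqref{eq:comparevt}. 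Monotonicity of $A$ is then immediate, since $A(t+1)-A(t)=\diverge(\mu(t+1)\mid\mu(t))\ge 0$, non-negativity of $\diverge$ being precisely the content of \eqref{eqn:FGineq} after taking logarithms.

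For the final equivalence I would first observe that ``$A(t)\equiv 0$ for all market weight sequences'' is equivalent to ``$\diverge(q\mid p)=0$ for all $p,q\in\Delta^{(n)}$'': the two-point sequence $\mu(0)=p,\ \mu(1)=q$ gives $A(1)=\diverge(q\mid p)$, and conversely each $A(t)$ is a sum of the non-negative terms $\diverge(\mu(k+1)\mid\mu(k))$. It then suffices to show $\diverge(\cdot\mid\cdot)\equiv 0$ if and only if $\Phi$ is affine. If $\diverge(q\mid p)=0$ for all $p,q$, then fixing $p$ and reading off \eqref{eq:discreteenergy} gives $\Phi(q)=\Phi(p)+\bigl\langle \Phi(p)\pi(p)/p,\,q-p\bigr\rangle$, an affine function of $q$, so $\Phi$ is affine. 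Conversely, if $\Phi$ is affine it is differentiable, so by Proposition \ref{lem:superdiff} the tangent projection of $\pi(p)/p$ is the unique supergradient lying in $\frac{1}{\Phi(p)}\partial\Phi(p)$; since $q-p$ is a tangent vector of $\Delta^{(n)}$, this yields $\langle \pi(p)/p,\,q-p\rangle=\frac{1}{\Phi(p)}(\Phi(q)-\Phi(p))$, whence $1+\langle \pi(p)/p,\,q-p\rangle=\Phi(q)/\Phi(p)$ and $\diverge(q\mid p)=0$.

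All the computations are elementary, and the decomposition itself is a clean telescoping argument; the only place demanding a little care is the last equivalence, specifically matching the non-tangent vector $\pi(p)/p$ against the tangent supergradient of $\log\Phi$. This is handled exactly as in the proof of Proposition \ref{lem:superdiff}(i), using that $q-p$ lies in $T\Delta^{(n)}$ so that only the tangent component of $\pi(p)/p$ enters the inner product.
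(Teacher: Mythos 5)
Your proof is correct and takes essentially the same route as the paper's: the identical one-step identity writing $\log\bigl(V(t+1)/V(t)\bigr)$ as $\diverge(\mu(t+1)\mid\mu(t))$ plus the increment of $\log\Phi$, summed over time, with monotonicity of $A$ from non-negativity of the L-divergence and the affine equivalence handled by the same supergradient-equality observation. The only difference is cosmetic: you spell out details the paper leaves terse, e.g., deducing global affineness directly from $\diverge(\cdot\mid p)\equiv 0$ at a fixed $p$ rather than arguing segment by segment.
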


Following Fernholz, we call $A(t)$ the {\it drift process}.

\begin{proof}
% [July 16 Leonard] Streamlined a bit
The decomposition formula \eqref{eq:comparevt} follows directly from the definitions. By \eqref{eqn:relativevalue} and \eqref{eq:discreteenergy}, we have
\begin{equation*}
\begin{split}
T(\mu(t + 1)\mid\mu(t)) &= \log \left(1 + \left\langle \frac{\pi(\mu(t))}{\mu(t)}, \mu(t + 1) - \mu(t) \right\rangle \right) - \log \frac{\Phi(\mu(t+1))}{\Phi(\mu(t))} \\
                    &= \log \frac{V(t+1)}{V(t)} - \log \frac{\Phi(\mu(t+1))}{\Phi(\mu(t))}.
\end{split}
\end{equation*}
Thus  \eqref{eq:comparevt} follows by summing over $t$ and rearranging. Since $T\left(q \mid p \right) \geq 0$, $A$ is non-decreasing.

% [July 12 Leonard] Modified a little to be more precise
It is clear that $A(t) \equiv 0$ if $\Phi$ is affine. Conversely, suppose $A(t) \equiv 0$ for all market weight sequences. Then $\Phi$ is affine on the line segment $[p, q]$ as $T\left(p \mid p \right) = 0$. Since $p$ and $q$ are arbitrary, $\Phi$ is affine on $\Delta^{(n)}$.
\end{proof}

% [July 13 Leonard] Explain the meaning of sufficient volatility. Should we say this???
Whenever $\Phi$ is not affine on the line segment $[\mu(t), \mu(t + 1)]$, the L-divergence $T\left(\mu(t + 1) \mid \mu(t)\right)$ is strictly positive. The drift process $A(t)$ measures the cumulative amount of market volatility captured by the portfolio in the time interval $[0, t]$. We say that the market is {\it sufficient volatility} for $\pi$ if $A(t) \uparrow \infty$ as $t \rightarrow \infty$.

\subsection{Characterization of pseudo-arbitrage}
If $\pi$ is not MCM, Lemma \ref{lem:MCMfails} shows that there is a sequence of market weights along which the relative value goes to zero. This is a `global' property in the sense that the faulty path might require big jumps.  Our next result shows that failing to be MCM is also a `local' property in the sense that the jumps could be as small as we wish. In fact, it could be completely localized around a single point. 

%[July 13 Leonard] Rephrased
\begin{thm}\label{thm:mcmarbitrage}
Let $\pi$ be a portfolio map from $\Delta^{(n)}$ to $\overline{\Delta^{(n)}}$ which fails the MCM property.
\begin{enumerate}[(i)]
\item For any $\delta > 0$, there is a sequence $\{\mu(t)\}_{t = 0}^{\infty}$ of market weights such that $\|\mu(t + 1) - \mu(t)\| < \delta$ for all $t$ and $V(t)$ goes to zero as $t \rightarrow \infty$. Thus $\pi$ cannot be a pseudo-arbitrage over any set containing the path.
\item For any $\delta >0$, there exists $p \in \Delta^{(n)}$ such that the sequence in (i) can be chosen to lie entirely within the Euclidean ball of radius $\delta$ around $p$.
\end{enumerate}
\end{thm}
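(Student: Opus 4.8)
The plan is to obtain part (i) directly from the contrapositive of Proposition \ref{thm:MCM}(ii) combined with the cycling construction of Lemma \ref{lem:MCMfails}, and to obtain the localization in part (ii) by a contradiction argument that glues together local generating functions.

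For (i): since $\pi$ fails MCM, Proposition \ref{thm:MCM}(ii) (read contrapositively) shows that $\pi$ fails $\delta$-MCM for the given $\delta$. Hence there is a cycle $\{\widetilde\mu(t)\}_{t=0}^{m+1}$ with $\widetilde\mu(m+1)=\widetilde\mu(0)$, all jumps $\|\widetilde\mu(t+1)-\widetilde\mu(t)\|<\delta$, and product $\eta<1$ in \eqref{eqn:MCM}. Repeat this cycle periodically, $\mu(t)=\widetilde\mu(t \bmod (m+1))$, exactly as in Lemma \ref{lem:MCMfails}. All jumps remain $<\delta$ (the wrap-around step $\widetilde\mu(m)\to\widetilde\mu(0)$ is one of the original edges), and $V(k(m+1))=\eta^k\to 0$, so $V(t)\to 0$. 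Property (i) of Definition \ref{def:pseudo-arbitrage} then fails on any set containing the path.

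For (ii), I argue by contradiction: suppose there is a radius $r>0$ such that no ball violates MCM, i.e. $\pi$ satisfies MCM on $U_x := B(x,r)\cap\Delta^{(n)}$ for every $x$. Each $U_x$ is open convex, so by Remark \ref{rmk:cxmcm} there is a positive concave $\Phi_x$ on $\Delta^{(n)}$ with \eqref{eqn:FGineq} for all $p,q\in U_x$. Writing $v(p)$ for the tangential projection \eqref{eqn:definev} of $\pi(p)/p$ (so that $\langle \pi(p)/p, q-p\rangle=\langle v(p), q-p\rangle$ for $q-p\in T\Delta^{(n)}$), multiplying \eqref{eqn:FGineq} by $\Phi_x(p)>0$ shows $\Phi_x(p)\,v(p)\in\partial\Phi_x(p)$ for every $p\in U_x$, equivalently $v(p)\in\partial\log\Phi_x(p)$; here I use locality of the superdifferential, as in the proof of Proposition \ref{thm:MCM}(ii). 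On each convex overlap $U_x\cap U_y$, both $\log\Phi_x$ and $\log\Phi_y$ are concave with common supergradient $v$, so the uniqueness argument of Proposition \ref{lem:fgpproperties}(i) (matching one-sided derivatives along segments and the fundamental theorem of calculus for concave functions) shows they differ by an additive constant; equivalently $\Phi_x/\Phi_y$ is a positive constant $\lambda_{xy}$.

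It remains to glue. The $\lambda_{xy}$ form a multiplicative cocycle on the cover $\{U_x\}$, and since $\Delta^{(n)}$ is convex hence simply connected, this cocycle is a coboundary: one can choose constants $\lambda_x>0$ with $\lambda_x\Phi_x=\lambda_y\Phi_y$ on every overlap. The function $\Phi:=\lambda_x\Phi_x$ on $U_x$ is then well defined, positive, and locally concave, hence concave on the convex set $\Delta^{(n)}$. For any $p$, choosing $x$ with $p\in U_x$ gives $\Phi(p)\,v(p)\in\partial\Phi(p)$, and since $\Phi$ is concave its supergradient furnishes the global supporting inequality $\Phi(p)+\langle\Phi(p)v(p), q-p\rangle\ge\Phi(q)$ for all $q$, i.e. \eqref{eqn:FGineq} for all $p,q\in\Delta^{(n)}$. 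By Proposition \ref{thm:MCM}(i), $\pi$ satisfies MCM, contradicting the hypothesis. Hence for every $\delta>0$ some ball $B(p,\delta/2)$ contains a cycle violating MCM; repeating that cycle as in part (i) produces a sequence lying in $B(p,\delta)$ with jumps $<\delta$ along which $V(t)\to 0$.

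The main obstacle is this gluing step: the local generating functions $\{\Phi_x\}$ are determined only up to multiplicative constants on overlaps, and assembling them into a single global $\Phi$ needs both the local uniqueness (Proposition \ref{lem:fgpproperties}(i) on the convex overlaps) and the vanishing of the monodromy, which is exactly where the contractibility of $\Delta^{(n)}$ is used. I would also be careful that it is concavity of $\Phi$ itself, not merely of $\log\Phi$, that \eqref{eqn:FGineq} requires; this is why the pieces must be glued multiplicatively rather than additively in the logarithm, since $\langle v(p), q-p\rangle\ge \log(\Phi(q)/\Phi(p))$ is strictly weaker than \eqref{eqn:FGineq}.
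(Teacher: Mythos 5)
Your proposal is correct, and part (i) coincides with the paper's own argument: Proposition \ref{thm:MCM}(ii) read contrapositively yields a small-jump cycle violating \eqref{eqn:MCM}, and cycling it as in Lemma \ref{lem:MCMfails} drives $V(t)\to 0$, killing property (i) of Definition \ref{def:pseudo-arbitrage}. For part (ii) you adopt the same contradiction strategy as the paper---if MCM held on every ball of a fixed radius, it would hold globally---but you carry out the local-to-global step by a different mechanism. The paper converts the local generating functions into the statement that the vector field $w=\pi/\mu$ is locally conservative in the sense of \eqref{eq:consvec}, proves global conservativity by an explicit decomposition of loops into small loops (Figure \ref{fig:loopdecomp}, with details deferred to \cite{PW14}), and then defines the global potential by the line integral \eqref{eqn:potential}. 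You instead patch the local generating functions $\Phi_x$ themselves: the uniqueness argument of Proposition \ref{lem:fgpproperties}(i), run on the convex overlaps, gives that $\Phi_x/\Phi_y$ is constant there, and the resulting multiplicative cocycle is trivialized using contractibility of $\Delta^{(n)}$. Both routes finish identically: the glued (resp.\ integrated) function is locally concave, hence concave on the convex simplex, and generates $\pi$, contradicting failure of MCM via Proposition \ref{thm:MCM}(i). What your version buys is that it reuses results already proved in the paper (uniqueness of generating functions up to constants, locality of the superdifferential) and avoids line integrals of a merely measurable, locally bounded field. Its one soft spot is the sentence ``since $\Delta^{(n)}$ is convex hence simply connected, this cocycle is a coboundary'': that is exactly the monodromy statement the paper proves by hand, and to make it airtight you should either note that the cover by balls is a good cover (all finite nonempty intersections are convex, hence contractible), so the \v{C}ech $H^1$ of the cover with constant coefficients vanishes, or give a direct path-continuation argument---which would essentially reproduce the paper's loop decomposition. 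Two details are to your credit: your observation that \eqref{eqn:FGineq} requires concavity of $\Phi$ itself, not merely of $\log\Phi$, is a genuine subtlety and is handled correctly (each $\Phi_x$ is concave, so the glued $\Phi$ is locally concave, hence concave); and your $\delta/2$ device guarantees jumps smaller than $\delta$ inside the $\delta$-ball, a point the paper glosses over when it says to ``repeat the proof of (i) in this ball.''
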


\begin{proof}
%[July 12 Leonard] proved earlier.
Part (i) has been proved in Lemma \ref{lem:MCMfails}. That the sequence can be chosen with jump size less than $\delta$ follows from Proposition \ref{thm:MCM}(ii).

\medskip

%[July 12 Leonard] Significantly shortened.
To prove (ii), we will show that given $\delta > 0$, there is a point $p\in \Delta^{(n)}$ such that the MCM property fails inside the Euclidean ball of radius $\delta$ around $p$. Then we may repeat the proof of (i) in this ball. This will be achieved by a method of contradiction using the following claim.

\smallskip

\nin\tbf{Claim.} Suppose there exists $\delta >0$ such that for any $p\in \Delta^{(n)}$, the MCM property holds over any choice of points selected within a ball of radius $\delta$ around $p$. Then the MCM property holds on $\Delta^{(n)}$.

\smallskip

To prove the above claim let us recall the notions of line integrals and conservative vector fields. Let $\gamma$ be a piecewise linear curve in $\simp^n$ indexed by a closed interval, say $[0,1]$. The curve will be called a loop if $\gamma(0)=\gamma(1)$. The line integral of the vector field $w(\mu) := \pi(\mu)/\mu$ over any $\gamma$ will be denoted by
\[
I_{\gamma}(w) :=\int_{\gamma} \frac{\pi(\mu)}{\mu} d\mu= \int_0^1 \iprod{\frac{\pi(\mu(t))}{\mu(t)},\mu'(t)} dt.
\]
The line integral does not depend on parametrization, except for the orientation. By a slight abuse of notation, the line from any $a$ to any $b$ in $\simp^n$, irrespective of parametrization, will be denoted by $[a,b]$.

Let $p\in \Delta^{(n)}$ and let $B_\delta(p) = \{q \in \Delta^{(n)}: \|q - p\| < \delta\}$. Consider any loop $\gamma$ whose range is contained in $B_\delta(p)$. Then we have
\eq\label{eq:consvec}
I_{\gamma}(w)=0.
\en
In other words, the vector field $w$ is locally conservative restricted to every $B_\delta(p)$. To see \eqref{eq:consvec}, we use the fact that $\pi$ satisfies MCM over $B_\delta(p)$. Therefore, by Theorem \ref{thm:MCM}, there is a positive concave function $\Phi$ on $\simp^n$ which generates $\pi$ on $B_{\delta}(p)$. Consider any line $\ell = [p_1, p_2]$ contained in $B_\delta(p)$. By \cite[Theorem 24.2]{R70}, we have $I_{\ell}(\pi/\mu)=\log \Phi(p_2) - \log \Phi(p_1)$. Thus \eqref{eq:consvec} holds for any piecewise linear loop in $B_{\delta}(p)$.

\medskip

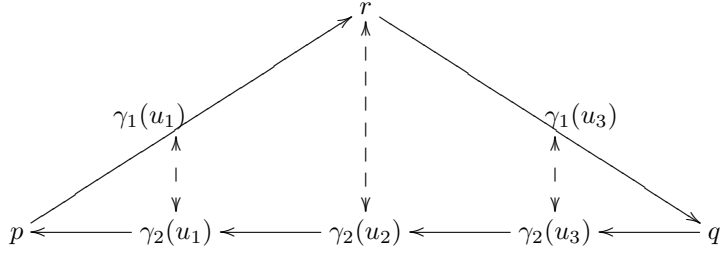
\begin{figure}[t]
\centerline{
\xymatrix@=2.8em{
& & r \ar[drdr] \ar@{<-->}[dd] & &\\
 & \gamma_1(u_1)\qquad \ar@{<-->}[d] & & \qquad \gamma_1(u_3)  \ar@{<-->}[d]& \\
p \ar[urur] & \gamma_2(u_1) \ar[l] &\gamma_2(u_2) \ar[l] & \gamma_2(u_3) \ar[l] & q \ar[l]
}
}
\caption{Decomposing a loop as a union of local loops. Here $r=\gamma_1(u_2)$.}
\label{fig:loopdecomp}
\end{figure}

We now show that any locally bounded and conservative vector field over $\Delta^{(n)}$ must be \textit{globally} conservative. While this statement is well known for smooth vector fields, we only assume that $\pi / \mu$ is measurable and locally bounded, and the resulting potential $\log \Phi$ is not necessarily differentiable. Since we are unable to find a reference for this result, we will give a sketch of proof and refer the reader to \cite[Proof of Theorem 8]{PW14} for more details.  

Let $w(\mu) =\pi(\mu)/\mu$ be locally conservative in the sense of \eqref{eq:consvec}. Fix $p, q \in \Delta^{(n)}$ and consider two piecewise linear curves $\gamma_1$ and $\gamma_2$ from $p$ to $q$. We will be done once we show
\eq\label{eq:linefree}
\int_{\gamma_1} w(\mu)d\mu = \int_{\gamma_2} w(\mu)d\mu.
\en
Without loss of generality, we may assume that $\gamma_2(t)= (1-t)p + t q$.

In fact, we can assume that $\gamma_1$ has exactly three corners $p,r,q$ and is a concatenation of $[p,r]$ and $[r,q]$ (we call such curves triangular). This is because once we establish \eqref{eq:linefree} for such triangular curves, we can inductively eliminate corners in any other $\gamma_1$ and establish \eqref{eq:linefree} in general.

\medskip

For the rest of the argument we assume that $\gamma_1$ is triangular and $\gamma_2$ is $[p,q]$. Assume both $\gamma_1$ and $\gamma_2$ are indexed by $[0,1]$.

\smallskip

We first suppose that $\sup_{0\le t \le 1} \norm{\gamma_1(t) - \gamma_2(t)} < \delta/2$. In this case, choose points $u_0=0 <u_1 < u_2, \ldots$ in $[0,1]$ such that their images on $\gamma_2$ are a sequence of equidistant points with successive distance less than $\delta/2$. Now add lines between $\gamma_1(u_i)$ and $\gamma_2(u_i)$. Now consider each loop which is formed by the $4$ oriented lines $[\gamma_2\left(u_{i+1}\right), \gamma_2\left(u_{i}\right)]$, $[\gamma_2\left( u_{i} \right), \gamma_1\left( u_{i} \right)]$, $[\gamma_1\left( u_i \right), \gamma_1\left( u_{i+1} \right)]$, and $[\gamma_1\left(  u_{i+1} \right), \gamma_2\left( u_{i+1} \right)]$. See Figure \ref{fig:loopdecomp}.

By the triangle inequality for Euclidean distance it follows that the loop lies entirely inside $B_{\delta}\left( \gamma_2\left(u_i\right) \right)$. Hence, by our assumption on local conservation, the integrals of $w$ over these loops are zero. However, the sum of the integrals over all these loops is precisely the integral of $w$ over the concatenation of lines $\gamma_1$ and $-\gamma_2$. Therefore this integral is zero, proving \eqref{eq:linefree}.

\smallskip

It can be shown by means of a simple geometric argument that any other case can be reduced to Case 1 above (see \cite{PW14} for details). Now that we have shown that $w$ is globally conservative, we can unambiguously define a function $\Phi$ on $\Delta^{(n)}$ by fixing some $p_0\in \Delta^{(n)}$ and defining
\begin{equation} \label{eqn:potential}
\log \Phi(p)=\int_\gamma \frac{\pi}{\mu} d\mu, \quad p\in \Delta^{(n)},
\end{equation}
where the integral is over any piecewise linear curve from $p_0$ to $p$. Over any $B_\delta(p)$, the function $\Phi$ must coincide (up to a constant) with the concave function resulting from the local MCM property of the vector field $w$.
Thus, $\Phi$ is locally concave on $\Delta^{(n)}$ and hence it is concave (see \cite[page 58]{H07}) and generates $\pi$. This shows that $\pi$ is MCM over $\Delta^{(n)}$ and this completes the proof of the theorem.
\end{proof}

% [July 15 Leonard] Some simplification. Prove sufficiency first.
\begin{proof}[Proof of Theorem \ref{thm:charac1}]
Sufficiency follows from the decomposition formula \eqref{eq:comparevt}. By condition (ii), the first term on the right is bounded uniformly over all sequences $\{\mu(t)\}_{t = 0}^{\infty} \subset K$ (note that a positive concave function on $\Delta^{(n)}$ is bounded above). Since $K$ is open, the L-divergence is identically zero on $K \times K$ if and only if the function $\Phi$ is affine. If $\Phi$ is non-affine, one can clearly choose a sequence such that the accumulated L-divergence goes to infinity, and so $V(t) \rightarrow \infty$ along that sequence. Equation \eqref{eq:fgeqn} is taken from Proposition \ref{lem:fgpproperties}(ii).

To prove necessity, we first note that $\pi$ cannot be a pseudo-arbitrage if there is a market cycle in $K$ over which the MCM property fails. By Remark \ref{rmk:cxmcm}, there exists a positive concave function $\Phi$ on $\Delta^{(n)}$ such that Theorem \ref{thm:charac1}(iii) holds. Since $\pi$ is functionally generated, we may apply Lemma \ref{lem:FernholzDecomp}. If $\Phi$ is affine over $K$ then the process $A$ is zero, and $V(t)$ is bounded above if $\mu(t) \in K$ for all $t$. This prevents $\pi$ from being a pseudo-arbitrage.

Finally, suppose zero is a limit point of the set $\Phi(K)$. As a positive concave function on $\Delta^{(n)}$, $\Phi$ can be extended continuously to $\overline{\Delta^{(n)}}$. We can thus find a point $q$ in the closure such that $\Phi(q)=0$. Fix a point $p\in K$ and let $\{\lambda(t)\}_{t = 0}^{\infty}$ be a strictly increasing sequence in $[0, 1)$ converging to $1$. Let $\{\mu(t)\}_{t \geq 0}$ be the sequence of market weights defined by
\[
\mu(t) = (1 - \lambda(t)) p + \lambda(t) q.
\]
We choose $\lambda(t)$ such that $\log \Phi$ is differentiable at $\mu(t)$ for all $t$. Since $\log \Phi$ is differentiable, we have
$\left\langle \frac{\pi(p)}{p}, q - p \right\rangle = D_{q - p} \log \Phi(p)$. It follows that
\[
\iprod{\frac{\pi(\mu(t))}{\mu(t)}, \mu(t + 1)-\mu(t)} = D_{q - p} \log \Phi(\mu(t)) \left(\lambda(t + 1) - \lambda(t)\right).
\] 
Using the elementary inequality $\log(1+x) \le x$ for $x>-1$, we get
\begin{equation} \label{eqn:uppderbound}
\sum_{t=0}^\infty \log\left( 1 +  \iprod{\frac{\pi(\mu(t))}{\mu(t)}, \mu(t+1)-\mu(t)}   \right) \le \sum_{t=0}^\infty   D_{q - p} \log \Phi(\mu(t))\left(\lambda(t + 1) - \lambda(t)\right).
\end{equation}
Comparing the right hand side of \eqref{eqn:uppderbound} with the line integral
\[
\int_{[p, q]} \frac{\pi}{\mu} d\mu = \log \Phi(q) - \log \Phi(p) = -\infty,
\]
it is not hard to see that by choosing the points $\{\lambda(t)\}_{t = 0}^{\infty}$ properly the right side is $-\infty$. Thus, along this sequence the relative value process $V(t)$ tends to zero as $t \rightarrow \infty$. This shows that $\pi$ cannot be a pseudo-arbitrage if zero is a limit point of $\Phi(K)$. This completes the proof of Theorem \ref{thm:charac1}.
\end{proof}

\subsection{The differentiable case}
%[July 16 Leonard] minor modification
In this subsection we derive differential inequalities satisfied by $C^1$ MCM portfolios. Let $\pi$ be a $C^1$ portfolio satisfying the MCM property. By Proposition \ref{lem:fgpproperties}, its generating function $\Phi$ is $C^2$. Recall that $T\left( q \mid p \right)$ is the L-divergence functional defined by \eqref{eq:discreteenergy}.

\begin{defn} [Drift quadratic form]
Let $\pi$ be generated by a $C^2$ positive concave function $\Phi$ on $\Delta^{(n)}$. The drift quadratic form of $(\pi, \Phi)$ is the quadratic form $H$ satisfying
\[
H(p)(v, v) := \frac{-1}{2\Phi(p)} \Hess \Phi(p)(v, v),
\]
where $p \in \Delta^{(n)}$ and $v \in T\Delta^{(n)}$. Here $\Hess \Phi$ is the Hessian of $\Phi$ regarded as a quadratic form. By definition, it is given by
\begin{equation}  \label{eqn:Hessian}
\Hess \Phi(p)(v, v) = \left.\frac{d^2}{dt^2} \Phi(p + tv)  \right|_{t = 0}.
\end{equation}
\end{defn}
 
Direct differentiation shows that $H$ is the Taylor series approximation of $T$. For $t \in {\Bbb R}$ small, we have
\begin{equation} \label{eq:Taylor}
T\left(p + tv \mid p \right) = H(p)(tv, tv) + o\left(t^2\right).
\end{equation}

Of special importance is the case where $\pi \in \overline{\Delta^{(n)}}$ is a constant-weighted portfolio. The corresponding drift quadratic form is called the {\it excess growth quadratic form}.

%% Oct 12: It is now called the excess growth quadratic form (instead of free energy)
\begin{defn}[Excess growth]
Let $\pi \in \overline{\Delta^{(n)}}$. The excess growth quadratic form $\Gamma_{\pi}$ of $\pi$ is defined by
\begin{equation} \label{eqn:freeenergy}
\Gamma_{\pi}(p)(v, v) = \frac{1}{2} \sum_{i, j = 1}^n \frac{\pi_i(\delta_{ij} - \pi_j)}{p_ip_j}v_iv_j,
\end{equation}
where $p \in \Delta^{(n)}$ and $v \in T\Delta^{(n)}$.
\end{defn}

Finally we need the concept of {\it Fisher information metric} from information geometry (\cite[Section 2.5]{A07}).

\begin{defn}[Fisher information metric]
The Fisher information metric on $\Delta^{(n)}$ defines an inner product $\langle \! \langle \cdot, \cdot \rangle \! \rangle_p$ of tangent vectors for each $p \in \Delta^{(n)}$. If $u = (u_1, \ldots, u_n)$ and $v = (v_1, \ldots, v_n)$ are tangent vectors of $\Delta^{(n)}$, and $p \in \simp^{(n)}$, the inner product is defined by
\[
\langle \! \langle u, v \rangle \! \rangle_p = \frac{1}{2} \sum_{i = 1}^n \frac{1}{p_i}u_iv_i.
\]
\end{defn}

Notice that if $\pi = p$, then $\Gamma_{p}(p)(\cdot, \cdot) = \langle \! \langle \cdot, \cdot \rangle \! \rangle_{p}$, so the excess growth quadratic form of the market portfolio is the Fisher information metric. 

\begin{thm} \label{thm:concaveequivalence}
Let $\pi$ be a portfolio generated by a $C^2$ positive concave function on $\Delta^{(n)}$. Then:
\begin{enumerate}
\item[(i)] The weight ratio $w(\mu) = \frac{\pi}{\mu}$ satisfies 
\begin{equation} \label{eq:curvature}
\iprod{v, D_vw(p)} \le -\iprod{w(p), v}^2,
\end{equation}
for any $\mu \in \Delta^{(n)}$ and $v \in T\Delta^{(n)}$.
\item[(ii)] For any $p \in \Delta^{(n)}$ and $v \in T\Delta^{(n)}$, we have
\begin{equation} \label{eqn:energyinformation}
\Gamma_{\pi}(p)(v, v) - \langle \! \langle D_v \pi(p), v \rangle \! \rangle_{p} \geq 0.
\end{equation}
\end{enumerate}
\end{thm}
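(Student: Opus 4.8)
The plan is to reduce both statements to second-order information about the scalar function $\log \Phi$, using that $\Phi$ being $C^2$ and positive makes $\log \Phi$ a $C^2$ concave function and $w(p) := \pi(p)/p$ a $C^1$ vector field. The hinge of the whole argument is a first-order identity already implicit in Proposition \ref{lem:superdiff}: since $\Phi$ is differentiable, $\partial \log\Phi(p)$ is the single gradient, and the projection of $w(p)$ onto $T\Delta^{(n)}$ equals that gradient. Because $\sum_i v_i = 0$ for tangent $v$, the normal component of $w(p)$ drops out of the inner product, leaving
\[
\iprod{w(p), v} = D_v \log\Phi(p), \quad v \in T\Delta^{(n)}.
\]

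For part (i) I would fix $p$ and tangent $v$ and differentiate this identity along the line $t \mapsto p + tv$. Writing $F(t) = \log\Phi(p+tv)$, the identity reads $\iprod{w(p+tv), v} = F'(t)$; differentiating at $t = 0$ gives $\iprod{D_v w(p), v}$ on the left and $F''(0) = \Hess \log\Phi(p)(v,v)$ on the right, the latter by the definition \eqref{eqn:Hessian} of the Hessian as a quadratic form. Applying the one-variable chain rule to $f(t) = \Phi(p+tv)$ via $(\log f)'' = f''/f - (f'/f)^2$ yields
\[
\Hess \log\Phi(p)(v,v) = \frac{\Hess\Phi(p)(v,v)}{\Phi(p)} - \bigl( D_v\log\Phi(p) \bigr)^2.
\]
Combining the two displays and substituting back $D_v\log\Phi(p) = \iprod{w(p), v}$ gives
\[
\iprod{D_v w(p), v} + \iprod{w(p), v}^2 = \frac{\Hess\Phi(p)(v,v)}{\Phi(p)}.
\]
Since $\Phi$ is concave its Hessian is negative semidefinite and $\Phi > 0$, so the right-hand side is $\le 0$; this is exactly \eqref{eq:curvature}.

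For part (ii) I would write $\pi_i(p) = p_i w_i(p)$ and differentiate, using $D_v p_i = v_i$, to get $(D_v \pi)_i = v_i w_i + p_i (D_v w)_i$. Feeding this into the Fisher metric gives
\[
\langle \! \langle D_v\pi(p), v \rangle \! \rangle_p = \frac{1}{2}\sum_{i=1}^n \frac{w_i}{p_i} v_i^2 + \frac{1}{2} \iprod{v, D_v w(p)}.
\]
A direct expansion of the excess growth form, using $\pi_i/p_i = w_i$ and $\sum_i (\pi_i/p_i) v_i = \iprod{w(p), v}$, gives
\[
\Gamma_{\pi}(p)(v,v) = \frac{1}{2} \sum_{i=1}^n \frac{w_i}{p_i} v_i^2 - \frac{1}{2} \iprod{w(p), v}^2.
\]
Subtracting, the common term cancels and
\[
\Gamma_{\pi}(p)(v,v) - \langle \! \langle D_v\pi(p), v \rangle \! \rangle_p = -\frac{1}{2}\bigl( \iprod{v, D_v w(p)} + \iprod{w(p), v}^2 \bigr) \ge 0
\]
by part (i). Thus (ii) is a purely algebraic consequence of (i).

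The only genuinely delicate point is the differentiation step in part (i): one must justify that $t \mapsto \iprod{w(p+tv), v}$ is differentiable with derivative $\iprod{D_v w(p), v}$, and that it coincides with $F'(t)$ whose $t$-derivative is the Hessian quadratic form \eqref{eqn:Hessian}. Both facts rest on the regularity bookkeeping — $\log\Phi$ of class $C^2$ (so directional derivatives are linear in the direction and the second directional derivative is $\Hess \log\Phi(v,v)$) and $w$ of class $C^1$, which holds since $\Phi$ is $C^2$. Once this is in place, the remainder is the one-variable chain rule for $\log$ and the sign of $\Hess \Phi$.
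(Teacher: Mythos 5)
Your proposal is correct, but your argument for part (i) takes a genuinely different route from the paper's, while your part (ii) coincides with it. The paper never touches second derivatives of $\log \Phi$ in proving (i): it invokes the MCM property (available because $\pi$ is functionally generated, by Proposition \ref{thm:MCM}) on the two-point cycle $p \to p + \varepsilon v \to p$, expands the resulting product inequality, divides by $\varepsilon^2$, and lets $\varepsilon \to 0$; the only regularity used is differentiability of $w$ at $p$. You instead go through the generating function: the identity $\iprod{w(p), v} = D_v \log \Phi(p)$ for tangent $v$ (valid, since the normal component of $w$ is annihilated by $v$ and the tangential projection is the gradient of $\log\Phi$), differentiation along the segment $t \mapsto p + tv$, and the one-variable chain rule for $\log$, concluding from negative semidefiniteness of $\Hess \Phi$. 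Both are sound; the bookkeeping you flag (that $w$ is $C^1$ and second directional derivatives of $\log\Phi$ are the Hessian form) is exactly what the $C^2$ hypothesis provides. The trade-off is this: the paper's argument for (i) uses only MCM plus $C^1$ regularity of $w$, so it applies verbatim to any $C^1$ portfolio satisfying MCM without presupposing a smooth generating function — which is what the remark following the theorem exploits when discussing the converse direction — whereas your computation proves more than the inequality, namely the exact identity
\[
\iprod{v, D_v w(p)} + \iprod{w(p), v}^2 = \frac{\Hess \Phi(p)(v,v)}{\Phi(p)} = -2 H(p)(v,v),
\]
which identifies the slack in \eqref{eq:curvature}, and hence in \eqref{eqn:energyinformation}, with the drift quadratic form $H$, i.e., with the second-order Taylor coefficient of the L-divergence in \eqref{eq:Taylor}. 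Part (ii) in both treatments is the same product-rule expansion showing $\Gamma_{\pi}(p)(v,v) - \langle \! \langle D_v \pi(p), v \rangle \! \rangle_{p} = -\frac{1}{2}\left( \iprod{v, D_v w(p)} + \iprod{w(p), v}^2 \right)$, so (ii) is purely algebraic given (i).
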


%[July 17 Leonard] minor modification
Recall that $D_vw$ and $D_v\pi$ are the push forwards of tangent vectors (images of the differential map). Intuitively, the inner product $\langle \! \langle D_v \pi(\mu), v \rangle \! \rangle_{\mu}$ measures how much the portfolio weights move in the direction of the increment of market weights. By \eqref{eqn:energyinformation}, in order that $\pi$ is a pseudo-arbitrage, the inner product cannot be more than the excess growth rate of the portfolio. This gives the meaning of concavity at the portfolio map level.

\begin{figure}[t!]
\centering
\includegraphics[scale=0.45]{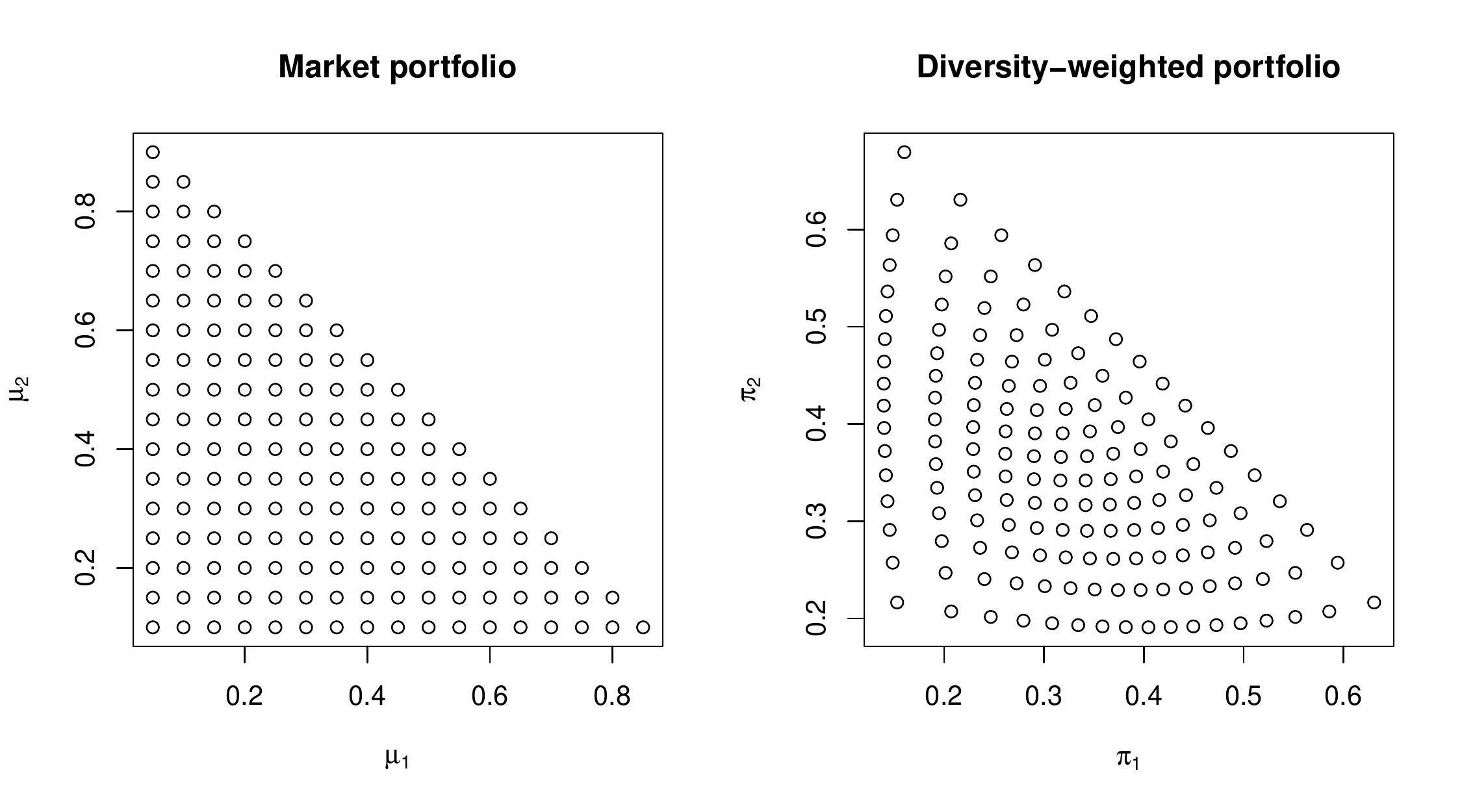}
\vspace{-15pt}
\caption{The portfolio map (projected to the first two coordinates) of the diversity-weighted portfolio where $\pi_i = \frac{\sqrt{\mu_i}}{\sum_{j = 1}^n \sqrt{\mu_j}}$ and $\Phi(\mu) = \left(\sum_{j = 1}^n \sqrt{\mu_j} \right)^2$, for $n=3$. Each point on the right is the image of a point on the left.}
\label{fig:backtest}
\end{figure}

\begin{proof}
(i) Let $p \in \Delta^{(n)}$, $v \in T\Delta^{(n)}$ and $\varepsilon > 0$. By the MCM property with $m = 1$, we get
\[
\left( 1 + \left\langle \frac{\pi(p)}{p}, \varepsilon v \right\rangle\right) \left( 1 - \left\langle \frac{\pi(p + \varepsilon v)}{p + \varepsilon v}, \varepsilon v \right\rangle\right) - 1 \geq 0.
\]
Expanding the above expression, we get
\[
-\varepsilon \langle w(p + \varepsilon v) - w(p), v \rangle - \varepsilon^2 \langle w(p), v \rangle \langle w(p + \varepsilon, v \rangle \geq 0.
\]
Dividing by $\varepsilon^2$ and taking the limit as $\varepsilon$ tends to zero, we get the desired inequality.

\medskip
\noindent
(ii) Without loss of generality, we may assume that $\Phi$ is twice continuously differentiable on an open neighborhood of $\Delta^{(n)}$ in ${\Bbb R}^n$. Using the product rule, we have
\begin{equation*}
\begin{split}
\frac{1}{2} \langle v, D_vw \rangle &= \frac{1}{2} \sum_{i, j = 1}^n v_iv_j \frac{\partial}{\partial \mu_j} \frac{\pi_i}{\mu_i} \\
   &= \frac{1}{2} \sum_{i, j = 1}^n v_iv_j \left( \frac{1}{\mu_i} \frac{\partial \pi_i}{\partial \mu_j} - \pi_i \delta_{ij} \frac{1}{\mu_j^2} \right) \\
   &= \langle \! \langle v, D_v \pi \rangle \! \rangle_{\mu} - \frac{1}{2} \sum_{i = 1}^n \frac{\pi_i}{\mu_i^2} v_i^2.
\end{split}
\end{equation*}
It follows that 
\begin{equation*}
\begin{split}
\frac{1}{2} \langle v, D_vw \rangle + \frac{1}{2} \langle w, v \rangle^2 &= \langle \! \langle v, D_v \pi \rangle \! \rangle_{\mu} - \frac{1}{2} \sum_{i, j = 1}^n \frac{\pi_i(\delta_{ij} - \pi_j)}{\mu_i\mu_j} v_iv_j \\
  &= \langle \! \langle v, D_v \pi \rangle \! \rangle_{\mu} - \Gamma_{\pi}(v, v).
\end{split}
\end{equation*}
\end{proof}

\begin{rmk}
In \cite[Section 4]{PW13}, we showed that when $n = 2$, a functionally generated portfolio
$\pi = (q(Y), 1 - q(Y))$, where $Y = \log \frac{\mu_1}{\mu_2}$ and $q: {\Bbb R} \rightarrow (0, 1)$, has a non-decreasing
drift process if and only if
\begin{equation} \label{eqn:twostockinequality}
q'(y) \leq q(y)(1 - q(y)), \quad y \in {\Bbb R}.
\end{equation}
Using the identity $d\langle Y \rangle = \frac{1}{(\mu_1\mu_2)^2}d\langle \mu_1 \rangle$, which
follows from It\^{o}'s formula, we can show that $\Gamma_{\pi}(d\mu, d\mu) = \frac{1}{2}q(1 - q)d\langle Y \rangle$ and
$\langle \! \langle d\mu, D_{d\mu} \pi(\mu) \rangle \! \rangle_{\mu} = \frac{1}{2}q'd\langle Y \rangle$.
We get
\[
d\Theta = \Gamma_{\pi}(d\mu, d\mu) - \langle \! \langle d\mu, D_{d\mu} \pi(\mu) \rangle \! \rangle_{\mu} = \frac{1}{2}\left(q(1 - q) - q'\right)d\langle Y \rangle.
\]
Thus Theorem \ref{thm:concaveequivalence} generalizes \eqref{eqn:twostockinequality} and provides a geometric interpretation.
\end{rmk}

\begin{rmk}
%[July 15 Leonard] More explanation (reviewer comment)
By Proposition \ref{lem:superdiff} and Theorem \ref{thm:concaveequivalence}, for a continuously differentiable portfolio $\pi$ the MCM property implies that the vector field $w = \pi / \mu$ is conservative (in the sense of \eqref{eq:consvec} where $\gamma$ is any curve in $\Delta^{(n)}$) and satisfies the inequality \eqref{eq:curvature}. The converse is also true and here is a sketch of proof. If $\pi / \mu$ is conservative, the line integral $\int_{\gamma} \frac{\pi}{\mu} d\mu$ defines via \eqref{eqn:potential} a function $\Phi$ on $\Delta^{(n)}$ such that $\pi$ is given by \eqref{eq:fgeqn}. Moreover, \eqref{eq:curvature} implies that $\Phi$ is concave. Thus $\pi$ is generated by a concave function and is MCM.

% [July 15 Leonard] Modified a little
When $n = 2$, all continuously differentiable vector fields on $\Delta^{(2)}$ are conservative (also see \cite[Lemma 4.6]{PW13}) and thus \eqref{eq:curvature} implies the MCM property. In general, a portfolio may satisfy \eqref{eq:curvature} without being generated by a concave function. The following example is inspired by \cite[Section 24, page 240]{R70}. The construction works for any $n \geq 3$, but for concreteness we let $n = 3$. Consider the matrices
\[
A = \left( \begin{array}{ccc}
-1 & -1 & -1 \\
0 & -1 & -1 \\
0 & 0 & -1 \end{array} \right), \quad B = \frac{1}{2}(A' + A),
\]
where $A'$ is the transpose of $A$. Let $\lambda > 0$ be a parameter to be chosen, and define $A_{\lambda} := \lambda A$, $B_{\lambda} := \lambda B$. Note that $A$ is non-symmetric and $B$ is (strictly) negative definite. Here we use matrix notation whenever convenient. The matrix $A_{\lambda}$ defines a portfolio $\pi$ via the weight ratio, given by
\begin{equation} \label{eq:counterexample}
\begin{split}
w(\mu) &= \frac{\pi}{\mu} = A_{\lambda}\mu + \alpha_{\lambda}(\mu) {\bf 1} \\
       &= \lambda \left( \begin{array}{c} -1 \\ -(\mu_2 + \mu_3) \\ -\mu_3 \end{array} \right) + \alpha_{\lambda}(\mu) \left( \begin{array}{c} 1 \\ 1 \\ 1 \end{array} \right),
\end{split}
\end{equation}
where $\alpha_{\lambda}(\cdot): \Delta^{(n)} \rightarrow {\Bbb R}$ is some smooth function. Using the identity $\sum_i \pi_i = \sum_i \mu_i w_i = 1$, we have
\begin{equation} \label{eq:examplealpha}
\alpha_{\lambda}(\mu) = 1 + \lambda \left[\mu_1 + \mu_2(\mu_2 + \mu_3) + \mu_3^2\right] > 1.
\end{equation}
The portfolio is then given by
\[
\pi(\mu) = \mu \cdot A_{\lambda} \mu + \alpha(\mu) \mu.
\]
When $\lambda < 1$, the entries of $A_{\lambda}\mu$ are greater than $-1$. It follows from \eqref{eq:examplealpha} that $w > 0$, so the portfolio has positive weights.

If this portfolio is generated by a concave function, Proposition \ref{lem:fgpproperties} implies that the generating function is smooth. By \cite[Proposition 3.1.11]{F02}, there exists a continuously differentiable function $F$ on a neighborhood of $\Delta^{(n)}$ in ${\Bbb R}^n$ such that $\sum_i (w_i + F) d\mu_i$ is an exact differential $1$-form. It follows that $\frac{\partial}{\partial \mu_j} (w_i + F) = \frac{\partial}{\partial \mu_i} (w_j + F)$ for any $i$ and $j$. Letting $\widetilde{F} = \alpha + F$, we see by differentiating \eqref{eq:counterexample} that $\pi$
\begin{equation*}
\begin{split}
\frac{\partial}{\partial \mu_2}\left(w_1 + F\right) = \frac{\partial}{\partial \mu_2} \widetilde{F} =  \frac{\partial}{\partial \mu_1}\left(w_2 + F\right)= \frac{\partial}{\partial \mu_1} \widetilde{F}, \\
\frac{\partial}{\partial \mu_3}\left(w_1 + F\right) =\frac{\partial}{\partial \mu_3} \widetilde{F} = \frac{\partial}{\partial \mu_1}\left(w_3 + F\right)= \frac{\partial}{\partial \mu_1} \widetilde{F}, \\
\frac{\partial}{\partial \mu_3}\left(w_2 + F\right) = \lambda + \frac{\partial}{\partial \mu_3} \widetilde{F} = \frac{\partial}{\partial \mu_2}\left(w_3 + F\right)= \frac{\partial}{\partial \mu_2} \widetilde{F}.
\end{split}
\end{equation*}
So $\frac{\partial}{\partial \mu_1} \widetilde{F} = \frac{\partial}{\partial \mu_2} \widetilde{F} = \frac{\partial}{\partial \mu_3} \widetilde{F}$, while $\lambda + \frac{\partial}{\partial \mu_3} \widetilde{F} =\frac{\partial}{\partial \mu_2} \widetilde{F}$, which is clearly a contradiction. Thus the portfolio is not generated by a concave function.

It remains to check that \eqref{eq:curvature} holds. From \eqref{eq:counterexample}, we see that
\begin{equation} \label{eq:counterexample2}
\begin{split}
\langle v, D_vw \rangle + \langle v, w \rangle^2 &= v' B_{\lambda} v + (v' A_{\lambda} \mu)^2 \\
          &= \lambda \left( v' B v + \lambda \left( v' A \mu \right)^2 \right).
\end{split}
\end{equation}
For each $\mu$, $v \mapsto \left( v' A \mu \right)^2$ is a non-negative definite quadratic form in $v$. By continuity, there exists a constant $C$ such that $\left( v' A \mu \right)^2 \leq C \|v\|^2$ for all $\mu \in \Delta^{(n)}$. Since $B$ is negative definite, by choosing $\lambda > 0$ sufficiently small, we can make the sum non-positive definite. Hence \eqref{eq:curvature} holds on $\Delta^{(n)}$.
\end{rmk}

% [July 12 Leonard] Intro to OT is moved to introduction
\section{Optimal transport} \label{sec:optimaltransport}
In this section we study functionally generated portfolios from the point of view of optimal transport. The general optimal transport problem has been given in the Introduction. We begin by recalling the notion of $c$-cyclical monotonicity which plays a crucial role in characterizing optimal solutions.

\subsection{$c$-cyclical monotonicity}
Consider the Monge-Kantorovich optimal transport problem with state spaces ${\mathcal{X}}$ and ${\mathcal{Y}}$ with cost function $c$.

\begin{defn}[$c$-cyclical monotonicity] \label{def:cm}
Let $A$ be a subset of ${\mathcal X} \times {\mathcal Y}$. We say that $A$ is $c$-cyclical monotone if for any $m \geq 1$ and any sequence $\{(x(k), y(k))\}_{k = 1}^m$ in $A$, we have
\begin{equation} \label{eqn:CM}
\sum_{k = 1}^m c(x(k), y(k)) \leq \sum_{k = 1}^m c(x(k), y(k+1)),
\end{equation}
with the convention $x(m + 1) = x(1)$ and $y(m + 1) = y(1)$.
\end{defn}

It can be shown that $c$-cyclical monotonicity is equivalent to the assertion that 
\begin{equation} \label{eqn:CM}
\sum_{k = 1}^m c(x(k), y(k)) \leq \sum_{k = 1}^m c(x(k), y(\sigma(k)))
\end{equation}
where $\sigma$ is any permutation of $[m] = \{1, \ldots, m\}$. The following is a basic result in optimal transport theory.

%[July 17 Leonard] slight rewording
\begin{thm} \label{thm:OT} \cite[Theorem 5.10, part (ii)]{V08}
Suppose that the cost function $c: {\mathcal X} \times {\mathcal Y} \rightarrow {\Bbb R}$ is continuous and bounded below. Let ${\mathcal P}$ and ${\mathcal Q}$ be (Borel) probability measures on ${\mathcal X}$ and ${\mathcal Y}$ respectively. Assume that the value of the transport problem \eqref{eqn:Kantorovich} is finite. Then ${\mathcal R} \in \Pi({\mathcal P}, {\mathcal Q})$ solves the optimal transport problem \eqref{eqn:Kantorovich} if and only if the support of ${\mathcal R}$ is $c$-cyclically monotone.
\end{thm}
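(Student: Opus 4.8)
The plan is to prove the two implications separately by the classical duality method; pleasingly, this is the exact additive analogue of our multiplicative argument in Proposition \ref{thm:MCM}, with the Rockafellar--R\"uschendorf potential below playing the role of the generating function $\Phi$ there. For the necessity direction (optimal $\Rightarrow$ support $c$-cyclically monotone) I would argue by contradiction. Suppose ${\mathcal R}$ is optimal yet its support violates Definition \ref{def:cm} for some finite family $\{(x(k), y(k))\}_{k = 1}^m$ of support points and some permutation $\sigma$, so that $\sum_k c(x(k), y(\sigma(k))) < \sum_k c(x(k), y(k))$. Because each $(x(k), y(k))$ lies in the support, every neighborhood of it carries positive ${\mathcal R}$-mass, and continuity of $c$ makes the strict inequality stable on small disjoint neighborhoods. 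I then construct a competitor $\widetilde{{\mathcal R}} \in \Pi({\mathcal P}, {\mathcal Q})$ by peeling off a small common amount of mass near each $(x(k), y(k))$ and re-gluing the $x$-fibers to the $y$-fibers according to $\sigma$; this preserves both marginals but strictly lowers the total cost, contradicting optimality. The only delicate point here is the measurable re-gluing of the displaced mass, handled by a standard disintegration lemma.

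For the sufficiency direction ($c$-cyclically monotone support $\Rightarrow$ optimal) I would exhibit a dual certificate. Writing $\Gamma = \operatorname{supp} {\mathcal R}$ and fixing a base point $(x_0, y_0) \in \Gamma$, define in the manner of Rockafellar and R\"uschendorf
\[
\phi(x) = \inf \left\{ \sum_{i = 0}^{N} \bigl[ c(x_{i + 1}, y_i) - c(x_i, y_i) \bigr] \right\},
\]
the infimum being over all $N \geq 0$ and chains $(x_i, y_i) \in \Gamma$ with $x_{N + 1} = x$. Boundedness below of $c$ together with the $c$-cyclical monotonicity of $\Gamma$ keeps this infimum from degenerating. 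Setting $\psi(y) = \inf_{x} [c(x, y) - \phi(x)]$ (the $c$-transform) gives $\phi(x) + \psi(y) \leq c(x, y)$ for all $(x, y)$ automatically, and the substantive step is to verify, using the chain definition of $\phi$ and the $c$-cyclical monotonicity of $\Gamma$, that equality $\phi(x) + \psi(y) = c(x, y)$ holds on $\Gamma$.

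Optimality then follows from weak duality. For any $\widetilde{{\mathcal R}} \in \Pi({\mathcal P}, {\mathcal Q})$,
\[
\int c \, d\widetilde{{\mathcal R}} \geq \int [\phi(x) + \psi(y)] \, d\widetilde{{\mathcal R}} = \int \phi \, d{\mathcal P} + \int \psi \, d{\mathcal Q},
\]
the right-hand side depending only on the marginals, whereas the equality on $\Gamma = \operatorname{supp} {\mathcal R}$ yields $\int c \, d{\mathcal R} = \int \phi \, d{\mathcal P} + \int \psi \, d{\mathcal Q}$. Hence ${\mathcal R}$ attains the infimum in \eqref{eqn:Kantorovich}, so it is optimal.

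The main obstacle throughout the sufficiency direction is \emph{integrability and measurability} of the constructed potentials. The infimum formula yields only Borel (indeed $c$-concave) functions $\phi, \psi$ that need not be integrable, so the factorization $\int (\phi \oplus \psi) \, d\widetilde{{\mathcal R}} = \int \phi \, d{\mathcal P} + \int \psi \, d{\mathcal Q}$ is not automatically legitimate and $\phi$ could a priori take the value $-\infty$ on a null set. I would circumvent this exactly as in \cite{V08}: use finiteness of the optimal value together with $c$ bounded below to show $\phi$ is finite ${\mathcal P}$-almost everywhere, restrict to a full-measure set where the potentials are finite, and justify the integral manipulations by a truncation and monotone-approximation argument. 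This technical content is precisely what makes the statement nontrivial and is why we invoke the cited theorem rather than reproduce its proof.
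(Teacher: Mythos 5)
Your proposal is correct in outline and follows essentially the same route as the paper's source for this statement: the paper does not prove Theorem \ref{thm:OT} itself but cites it from \cite[Theorem 5.10, part (ii)]{V08}, whose proof is exactly the two-step argument you describe (a mass-rearrangement/perturbation argument for necessity, and Rockafellar--R\"uschendorf potentials plus weak duality, with a truncation argument to handle integrability, for sufficiency). Since you correctly isolate and defer the one genuinely delicate point --- that $\phi \oplus \psi$ need not be integrable, so the marginal factorization must be justified by the truncation device of the cited proof --- there is nothing substantive to add.
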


%[July 17 Leonard] slight modification
\subsection{MCM and $c$-cylical monotonicity}
We consider the cost function $c$ on $\overline{\Delta^{(n)}} \times [-\infty, \infty)^n$ defined by
\begin{equation} \label{eqn:logpartition}
c(p, h) := \log \left( \sum_{i = 1}^n e^{h_i} p_i \right).
\end{equation}
The cost function is clearly continuous. Let $h: \Delta^{(n)} \rightarrow [-\infty, \infty)^n \setminus \{(-\infty, \ldots, -\infty)\}$ be a function (thought of as a transport map). The map defines a portfolio function $\pi: \Delta^{(n)} \rightarrow \overline{\Delta^{(n)}}$ via \eqref{eqn:changemeasure}:
\begin{equation} \label{eqn:changemeasure2}
\frac{\pi_i(\mu)}{\mu_i}=\frac{e^{h_i(\mu)}}{\E_{\mu}\left( \exp(h(\mu)) \right)}, \quad i=1,2,\ldots,n.
\end{equation}
The following proposition gives the link between the cost function \eqref{eqn:logpartition} and the MCM property.

%[July 15 Leonard] now called a proposition
\begin{prop} \label{lem:CMMCM}
Let $h: \Delta^{(n)} \rightarrow [-\infty, \infty)^n \setminus \{(-\infty, \ldots, -\infty)\}$. The graph of $h$ is $c$-cyclical monotone if and only if the portfolio $\pi$ defined by \eqref{eqn:changemeasure2} has the MCM property. 
\end{prop}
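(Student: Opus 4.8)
The plan is to reduce the equivalence to a single algebraic identity that rewrites each factor of the MCM product \eqref{eqn:MCM} as a ratio of exponentiated costs, after which the two definitions match term by term.

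First I would expand the generic factor of \eqref{eqn:MCM} using the change of measure \eqref{eqn:changemeasure2}. Writing $Z(\mu) := \E_\mu(\exp(h(\mu))) = \sum_{j} e^{h_j(\mu)} \mu_j = e^{c(\mu, h(\mu))}$ and using $\sum_i (\pi_i(p)/p_i)\, p_i = \sum_i \pi_i(p) = 1$, a direct computation gives, for any $p,q\in\Delta^{(n)}$,
\[
1 + \iprod{\frac{\pi(p)}{p}, q - p} = \frac{\sum_i e^{h_i(p)} q_i}{Z(p)} = \frac{e^{c(q, h(p))}}{e^{c(p, h(p))}} = e^{\, c(q, h(p)) - c(p, h(p))},
\]
where the middle equality uses the definition \eqref{eqn:logpartition} of the cost. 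In particular each factor is strictly positive, and it depends on $p$ only through the vector $h(p)$.

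Next I would apply this identity along a cycle $\{\mu(t)\}_{t=0}^{m+1}$ with $\mu(m+1)=\mu(0)$. Taking the product of the factors turns the exponents into a single sum, so that
\[
\prod_{t=0}^m \left(1 + \iprod{\frac{\pi(\mu(t))}{\mu(t)}, \mu(t+1) - \mu(t)}\right) = \exp\left(\sum_{t=0}^m \big[ c(\mu(t+1), h(\mu(t))) - c(\mu(t), h(\mu(t))) \big]\right).
\]
Hence the MCM inequality of Definition \ref{def:MCM}, namely that the left-hand side is $\geq 1$, is equivalent to
\[
\sum_{t=0}^m c(\mu(t), h(\mu(t))) \le \sum_{t=0}^m c(\mu(t+1), h(\mu(t)))
\]
for every cycle in $\Delta^{(n)}$. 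This is precisely the $c$-cyclical monotonicity inequality of Definition \ref{def:cm} for the points $(\mu(t), h(\mu(t)))$ lying on the graph of $h$.

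Finally I would match this with Definition \ref{def:cm}. The only subtlety is the orientation of the cyclic index: the inequality produced above pairs the point $\mu(t+1)$ with the value $h(\mu(t))$, whereas Definition \ref{def:cm} pairs $x(k)$ with $y(k+1)$; these two conventions are the forward and backward cyclic shifts of one another. Since both MCM and $c$-cyclical monotonicity are quantified over all finite cycles, and the family of cycles is closed under reversal, the conditions coincide --- equivalently, one may invoke the permutation form of $c$-cyclical monotonicity recorded after Definition \ref{def:cm}, for which the backward shift $\sigma(k) = k-1$ is an admissible permutation and yields exactly the inequality above. I expect this index bookkeeping to be the only place requiring care; the substantive content is the one-line identity of the first step, and everything else is formal.
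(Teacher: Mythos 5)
Your proposal is correct and follows essentially the same route as the paper's proof: the key identity $1 + \bigl\langle \pi(p)/p,\, q-p \bigr\rangle = e^{\,c(q,h(p)) - c(p,h(p))}$ is exactly the computation the paper performs for $V(t+1)/V(t)$ along the cycle, and both arguments conclude by telescoping the product into the sum of cost differences. The only difference is that you explicitly address the forward-versus-backward shift in the pairing convention of Definition \ref{def:cm} (resolved via cycle reversal or the permutation form), a point the paper's proof passes over silently, so your write-up is if anything slightly more careful.
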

\begin{proof}
Let $\{\mu(t)\}_{t = 0}^{m + 1}$ be a market cycle. By \eqref{eqn:changemeasure2}, we have
\begin{equation*}
\begin{split}
\frac{V(t + 1)}{V(t)} &= \sum_{i = 1}^n \frac{\pi_i(\mu(t))\mu_i(t + 1)}{\mu_i(t)} \\
&= \frac{1}{\sum_{j = 1}^n \mu_j(t) e^{h_j(\mu(t))}} \sum_{i = 1}^n \mu_i(t + 1) e^{h_j(\mu(t))} = \frac{ \langle \mu(t + 1), e^{h(\mu(t))}\rangle}{\langle \mu(t), e^{h(\mu(t))} \rangle}.
\end{split}
\end{equation*}
Multiplying the above terms along the market cycle and taking log on both sides, we have
\begin{equation} \label{eqn:MCMconnection}
\log V(m+1) = \sum_{t = 0}^m c(\mu(t + 1), h(\mu(t))) - \sum_{t = 0}^m c(\mu(t), h(\mu(t))).
\end{equation}
Suppose $\pi$ is MCM. Then in \eqref{eqn:MCMconnection} we have $\log V(m + 1) \geq 0$. So
\[
\sum_{t = 0}^m c(\mu(t), h(t)) \leq \sum_{t = 0}^m c(\mu(t + 1), h(\mu(t))).
\]
Since the market cycle is arbitrary, this implies that the graph of $h$ is $c$-cyclical monotone. The same reasoning shows that $\pi$ is MCM if the graph is $c$-cyclical monotone.
\end{proof}

%[July 15 Leonard] Rephrased
\begin{proof} [Proof of Theorem \ref{thm:charac2}]
To show the first statement, suppose we can prove that the support of any optimal coupling is $c$-cyclical monotone. Then the portfolio $\pi$ defined by \eqref{eqn:changemeasure} is MCM by Proposition \ref{lem:CMMCM}. Thus $\pi$ is generated by a concave function by Proposition \ref{thm:MCM}, and the rest follows from Theorem \ref{thm:charac1}. It remains to prove that the support is $c$-cyclical monotone.

%[July 15 Leonard] An error fixed. Originally the union may not have mass 1 under R
Since the cost function \eqref{eqn:logpartition} is unbounded below, we will use a little trick so that Theorem \ref{thm:OT} can be applied. Let ${\mathcal R}$ be an optimal coupling with finite cost. Consider the sets
\[
Z_m = \left\{(\mu, h): \max_{1 \leq i \leq n} h_i \geq -m, \min_{1 \leq i \leq n} \mu_i \geq \frac{1}{m}\right\}, \quad m \geq 1.
\]
Also let ${\mathcal R}_m$ be the restriction of ${\mathcal R}$ to $Z_m$, normalized to have mass $1$. Since $Z_m$ increases to $\Delta_n \times [-\infty, \infty) \setminus \{(-\infty, \ldots, -\infty)\}$, by the continuity of measure ${\mathcal R}_m$ is well defined for $m$ sufficiently large. Let ${\mathcal P}_m$ and ${\mathcal Q}_m$ be the marginals of ${\mathcal R}_m$. Consider the optimal transport problem
\begin{equation} \label{eqn:OTrestriction}
\inf_{{\mathcal R}' \in \Pi({\mathcal P}_m, {\mathcal Q}_m)} \E \left[ c(p, h) \right]
\end{equation}
where $(p, h) \sim {\mathcal R}'$. By the well known restriction property of optimal transport (see \cite[Theorem 4.6]{V08} whose proof does not rely on any topological assumptions), ${\mathcal R}_m$ is optimal for \eqref{eqn:OTrestriction}. Now the cost function, when restricted to $Z_m$, is continuous and bounded below. So, by Theorem \ref{thm:OT} we see that $\text{supp}({\mathcal R}) \cap Z_m$ is $c$-cyclical monotone. Since $m$ is arbitrary, $\text{supp}({\mathcal R})$ is $c$-cyclical monotone. 

Now we argue the converse. Suppose the portfolio $\pi$ is a pseudo-arbitrage on $K \subset \Delta^{(n)}$. Let ${\mathcal P}$ be any probability measure on $K$, and let ${\mathcal Q}$ be the distribution of $h = F(\mu)$, where $\mu \in {\mathcal P}$. We claim that the coupling $(\mu, h)$ is optimal for the optimal transport problem $\inf \mathrm{E}\left[ c(\mu, h) \right]$. Since $h_i = \log \left( \pi_i(p) / p_i \right)$ is bounded below, ${\mathcal Q}$ is supported on $L_m = [-m, \infty)^n$ for some $m$. On the set $\overline{\Delta^{(n)}} \times L_m$, the cost function $c$ is continuous and bounded below. Thus the coupling $(\mu, F(\mu))$ is optimal by Theorem \ref{thm:OT}.
\end{proof} 

\subsection{MCM in exponential coordinates}
Now we restrict to portfolio functions with strictly positive weights and give an alternative formulation of the transport problem using the exponential coordinates. 

Recall that $\iota: \Delta^{(n)} \rightarrow {\Bbb R}^{n-1}$ defined by \eqref{eqn:definetheta} is the global coordinate system which gives the exponential coordinates. Given an arbitrary function $\phi: {\Bbb R}^{n - 1} \rightarrow {\Bbb R}^{n - 1}$, we can define a portfolio function $\pi: \Delta^{(n)} \rightarrow \overline{\Delta^{(n)}}$ by \eqref{eqn:portfexpcoord}. Conversely, given a portfolio map $\pi:\Delta^{(n)}\rightarrow \Delta^{(n)}$, one can define $\phi$ by setting $\iota(\pi(\mu)) = \theta - \phi(\theta)$ whenever $\iota(\mu) = \theta$. Rearranging gives
\eq\label{eq:thetap}
\phi_i(\theta) =  \log\left( \frac{\mu_i}{\mu_n}  \right) - \log\left( \frac{\pi_i(\mu)}{\pi_n(\mu)}  \right),\quad i=1,2,\ldots, n-1.
\en

Fix a portfolio function $\pi$. Now we express the evolution of $V(t)$ in terms of the exponential coordinates. Given the market weight sequence $\{\mu(t)\}_{t = 0}^{\infty}$, we define the exponential coordinate process by $\theta(t) = \iota(\mu(t))$. Using \eqref{eqn:thetatop}, we have
\begin{equation*}
\begin{split}
\frac{V(t + 1)}{V(t)} &= \sum_{i = 1}^n \pi_i(t) \frac{\mu_i(t + 1)}{\mu_i(t)} \\
  &= e^{-\psi(\theta(t) - \phi(\theta(t)))} \frac{e^{-\psi(\theta(t + 1))}}{e^{-\psi(\theta(t))}} \left(1 + \sum_{i = 1}^{n-1} e^{\theta_i(t) - \phi_i(\theta(t))} \frac{e^{\theta_i(t + 1)}}{e^{\theta_i(t)}} \right) \\
  &= e^{-\psi(\theta(t) - \phi(\theta(t))) + \psi(\theta(t)) - \psi(\theta(t + 1))} e^{\psi(\theta(t + 1) - \phi(\theta(t)))}.
\end{split}
\end{equation*}
Taking log on both sides, we have
\[
\log \frac{V(t + 1)}{V(t)} = \left[\psi(\theta(t)) - \psi(\theta(t + 1))\right] + \left[\psi(\theta(t + 1) - \phi(t)) - \psi(\theta(t) - \phi(\theta(t)))\right]
\]
Summing over time, we get 
\begin{equation} \label{eqn:expdecomp}
\log V(t) = \psi(\theta(0)) - \psi(\theta(t)) + \sum_{s = 0}^{t - 1} \left[\psi(\theta(s + 1) - \phi(s)) - \psi(\theta(s) - \phi(\theta(s)))\right].
\end{equation}

Now consider a discrete cycle $\{\mu(t)\}_{t = 0}^{m + 1}$ where $\mu(m + 1) = \mu(0)$. Putting $t = m + 1$ in \eqref{eqn:expdecomp}, we have
\[
\log V(m + 1) = \sum_{t = 0}^{m} \left[\psi(\theta(t + 1) - \phi(t)) - \psi(\theta(t) - \phi(\theta(s)))\right].
\] 
Thus, if $\pi$ is MCM, then
\begin{equation} \label{eqn:expMCM}
\sum_{t = 0}^{m}  \psi(\theta(t) - \phi(\theta(t))) \leq \sum_{t = 0}^{m} \psi(\theta(t + 1) - \phi(\theta(s)))
\end{equation}
which is the definition of $c$-cyclical monotonicity. We summarize the above discussion by the following proposition. It allows us to pose the transport problem in terms of the exponential coordinates.

\begin{prop} \label{prop:MCMCM}
Let $c$ be the cost function on ${\Bbb R}^{n-1} \times {\Bbb R}^{n - 1}$ defined by
\begin{equation} \label{eqn:convexcost}
c(\theta, \phi) = \psi(\theta - \phi).
\end{equation}
Let $\pi$ be a portfolio and let $\phi: {\Bbb R}^{n - 1} \rightarrow {\Bbb R}^{n - 1}$ be the corresponding negative shift in exponential coordinates defined by \eqref{eqn:portfexpcoord} or \eqref{eq:thetap}. Then $\pi$ satisfies MCM if and only if the graph of $\phi$ is $c$-cyclical monotone.
\end{prop}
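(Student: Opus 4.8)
The plan is to follow the template of the proof of Proposition \ref{lem:CMMCM}, since the computation linking the relative value $V$ to the cost $c(\theta,\phi) = \psi(\theta - \phi)$ has already been carried out in the discussion preceding the statement. First I would recall that $\iota$ is a global coordinate system, so for a portfolio with strictly positive weights the correspondence $\pi \leftrightarrow \phi$ given by \eqref{eqn:portfexpcoord} and \eqref{eq:thetap} is a genuine bijection. Consequently a market cycle $\{\mu(t)\}_{t=0}^{m+1}$ in $\Delta^{(n)}$ corresponds bijectively, via $\theta(t) = \iota(\mu(t))$, to a cyclic sequence $\{\theta(t)\}_{t=0}^{m+1}$ in $\mathbb{R}^{n-1}$ with $\theta(m+1) = \theta(0)$, so that quantifying ``over all cycles'' on one side matches ``over all finite cyclic sequences'' on the other.

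The core of the argument is the identity obtained by setting $t = m+1$ in \eqref{eqn:expdecomp}: writing $c(\theta,\phi) = \psi(\theta - \phi)$, over any cycle one has
\[
\log V(m+1) = \sum_{t=0}^m \left[c(\theta(t+1), \phi(\theta(t))) - c(\theta(t), \phi(\theta(t)))\right].
\]
By Definition \ref{def:MCM}, $\pi$ is MCM precisely when $V(m+1) \geq 1$, i.e. $\log V(m+1) \geq 0$, for every cycle; by the displayed identity this is exactly the inequality \eqref{eqn:expMCM}. Reading this against the defining inequality of $c$-cyclical monotonicity (Definition \ref{def:cm}) for the graph $\{(\theta(t), \phi(\theta(t)))\}$ then yields the equivalence in both directions, with no additional computation.

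The one point that needs care is matching the direction of the cyclic shift. Equation \eqref{eqn:expMCM} shifts the \emph{first} argument forward, producing the summand $c(\theta(t+1), \phi(\theta(t)))$, whereas Definition \ref{def:cm} shifts the \emph{second} argument forward, producing $c(\theta(k), \phi(\theta(k+1)))$. Reindexing the cyclic sum by replacing $t+1$ with a new index turns the former into a backward shift on the second argument, $\sum_j c(\theta(j), \phi(\theta(j-1)))$. Since both MCM and $c$-cyclical monotonicity are quantified over all finite cycles, I would reconcile the two either by appealing to the permutation formulation of $c$-cyclical monotonicity recorded after Definition \ref{def:cm} (a single cyclic shift and its inverse are both permutations), or, more concretely, by observing that applying the inequality to the reversed cycle interchanges the forward and backward shifts. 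This reindexing is the only genuine subtlety; everything else is a direct transcription of the identity \eqref{eqn:expdecomp} already established, exactly parallel to Proposition \ref{lem:CMMCM}.
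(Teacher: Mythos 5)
Your proposal is correct and follows essentially the same route as the paper: the paper's ``proof'' of Proposition \ref{prop:MCMCM} is precisely the computation \eqref{eqn:expdecomp}--\eqref{eqn:expMCM} preceding the statement, specialized to a cycle, exactly as you describe. Your reconciliation of the forward shift on the first argument in \eqref{eqn:expMCM} with the shift on the second argument in Definition \ref{def:cm} (via reindexing/reversal of the cycle, or the permutation formulation) is a point the paper glosses over, and handling it explicitly is a sound refinement rather than a deviation.
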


\subsection{Relative entropy as a cost function}
In this subsection we give another optimal transport problem giving rise to functionally generated portfolios. Now we take ${\mathcal{X}} = {\mathcal{Y}} = \overline{\Delta^{(n)}}$ and consider the cost function $\widetilde{c}$ on $\overline{\Delta^{(n)}} \times \overline{\Delta^{(n)}}$ given by the negative relative entropy:
\begin{equation} \label{eqn:newcost}
\widetilde{c}(p, q) = -H(q | p) = -\sum_{i = 1}^n q_i \log \frac{q_i}{p_i}.
\end{equation}
Here $p$ is interpreted as the market weight and $q$ is interpreted as the portfolio weight. Note that $\tilde{c}$ is non-positive and is $0$ if and only if $p = q$.

%[July 15 Leonard] Now a Proposition, streamlined a little
\begin{prop} \label{thm:reMCM}
Any $\widetilde{c}$-cyclical monotone subset of $\Delta^{(n)} \times \overline{\Delta^{(n)}}$ satisfies MCM. The converse is false. 
\end{prop}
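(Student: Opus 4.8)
The plan is to prove both directions by relating the multiplicative increment $V(t+1)/V(t)$ along a cycle to differences of the cost $\widetilde{c}$, with Jensen's inequality serving as the bridge between the two. First I would fix a market cycle $\{\mu(t)\}_{t=0}^{m+1}$ with $\mu(m+1) = \mu(0)$ and, for each $t$, select a point $(\mu(t), q(t))$ in the given set $A$, so that $q(t)$ plays the role of the portfolio weight at $\mu(t)$. Since $q(t) \in \overline{\Delta^{(n)}}$ satisfies $\sum_i q_i(t) = 1$, the MCM factor simplifies to
\[
1 + \iprod{\frac{q(t)}{\mu(t)}, \mu(t+1) - \mu(t)} = \sum_{i=1}^n q_i(t)\,\frac{\mu_i(t+1)}{\mu_i(t)} = \frac{V(t+1)}{V(t)}.
\]
Applying Jensen's inequality to the concave function $\log$ with the probability weights $q(t)$ gives
\[
\log\frac{V(t+1)}{V(t)} \;\ge\; \sum_{i=1}^n q_i(t)\log\frac{\mu_i(t+1)}{\mu_i(t)}.
\]

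The key observation is that the right-hand side is exactly a difference of costs, because the entropy of $q(t)$ cancels:
\[
\sum_{i=1}^n q_i(t)\log\frac{\mu_i(t+1)}{\mu_i(t)} = \widetilde{c}(\mu(t+1), q(t)) - \widetilde{c}(\mu(t), q(t)).
\]
Summing over $t = 0, \ldots, m$ then yields $\log V(m+1) \ge \sum_{t=0}^m \big[\widetilde{c}(\mu(t+1), q(t)) - \widetilde{c}(\mu(t), q(t))\big]$, and it remains to see that $\widetilde{c}$-cyclical monotonicity of $A$ forces this telescoped sum to be nonnegative. Here I would invoke the permutation form of Definition \ref{def:cm}: applied to the points $\{(\mu(t), q(t))\}_{t=0}^m$ with the backward cyclic shift $\sigma(k) = k-1$, it gives $\sum_t \widetilde{c}(\mu(t), q(t)) \le \sum_t \widetilde{c}(\mu(t+1), q(t))$, which is precisely the needed nonnegativity. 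Hence $\log V(m+1) \ge 0$, and $A$ satisfies MCM. The one delicate point is matching indices: cyclical monotonicity in its basic form shifts the second coordinate forward, whereas we need the first coordinate shifted forward, so the correct permutation (equivalently, traversing the cycle in reverse) must be selected.

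For the converse, I would exhibit an MCM set whose graph is not $\widetilde{c}$-cyclically monotone, the cleanest witness being the market portfolio $q = \mu$, that is $A = \{(\mu, \mu) : \mu \in \Delta^{(n)}\}$. This set is MCM because $q(t)/\mu(t)$ is the all-ones vector, so every factor equals $1 + \sum_i(\mu_i(t+1) - \mu_i(t)) = 1$ and thus $V \equiv 1$ along any cycle. On the other hand $\widetilde{c}(\mu, \mu) = -H(\mu \mid \mu) = 0$, while for distinct $a, b \in \Delta^{(n)}$ both $\widetilde{c}(a, b) = -H(b \mid a)$ and $\widetilde{c}(b, a) = -H(a \mid b)$ are strictly negative by the positivity of relative entropy. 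Testing the two-point cycle $\{a, b\}$, the defining inequality of $\widetilde{c}$-cyclical monotonicity would read $0 = \widetilde{c}(a,a) + \widetilde{c}(b,b) \le \widetilde{c}(a,b) + \widetilde{c}(b,a) < 0$, a contradiction. Hence $A$ is MCM but not $\widetilde{c}$-cyclically monotone.

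I expect the main (though modest) obstacle to be the bookkeeping of the cyclic index shift in the forward direction, since one must track carefully which coordinate the cyclical-monotonicity inequality advances. The strictness of Jensen's inequality — exactly the slack that is discarded in passing from $\widetilde{c}$-cyclical monotonicity to MCM — is what makes the converse fail, and it renders the market-portfolio counterexample essentially immediate once that portfolio is identified.
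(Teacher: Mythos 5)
Your proof is correct and follows essentially the same route as the paper's: the forward direction rests on the same two ingredients (the cancellation of the entropy of $q(t)$, turning cost differences into $\sum_i q_i(t)\log\frac{\mu_i(t+1)}{\mu_i(t)}$, followed by Jensen's inequality for $\log$), and the converse uses the same counterexample, the market portfolio $\pi(\mu)=\mu$. Your verification of the converse via the explicit two-point cycle $\{(a,a),(b,b)\}$ and strict positivity of relative entropy is a slightly more explicit rendering of the paper's brief remark, but the idea is identical.
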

\begin{proof}
Let $A$ be a $\widetilde{c}$-cyclical monotone subset of $\Delta^{(n)} \times \overline{\Delta^{(n)}}$, and let $\{(p(t), q(t))\}_{t = 0}^{m+1}$ be a cycle in $A$. Then $\widetilde{c}$-cyclical monotonicity implies that
\[
\sum_{t = 0}^m -H(q(t) \mid p(t)) \leq  \sum_{t = 0}^m -H(q(t) \mid p(t + 1)).
\]
Expanding, we have
\[
\sum_{t = 0}^m \sum_{i = 1}^n q_i(t) \log \frac{q_i(t)}{p_i(t)} \geq \sum_{t = 0}^m \sum_{i = 1}^n q_i(t) \log \frac{q_i(t)}{p_i(t + 1)}.
\]
Thus we get
\[
\sum_{t = 0}^m \left[ \sum_{i = 1}^n q_i(t) \log \frac{p_i(t + 1)}{p_i(t)} \right] \geq 0.
\]
Applying Jensen's inequality to the sum inside the square bracket for each $t$, we have
\[
\sum_{t = 0}^m \left[ \log \sum_{i = 1}^n q_i(t)\frac{p_i(t + 1)}{p_i(t)} \right]\geq 0
\]
and the MCM property follows by exponentiating.

To see that the converse is false, let $\pi$ be the identity map on $\Delta^{(n)}$. This is the market portfolio, which is MCM. But $\widetilde{c}(p, \pi(p)) \equiv 0$, so the market portfolio maximizes the transportation cost. Any other rearrangement will increase relative entropy and decrease the transportation cost.   
\end{proof}

%[July 15 Leonard] More explanation
Consider the transport problem with cost \eqref{eqn:newcost}. Since any optimal coupling have $\widetilde{c}$-cyclical monotone support (argue as in the proof of Theorem \ref{thm:charac2}), the support is MCM by Proposition \ref{thm:reMCM} and thus is generated by a concave function. We give an alternative treatment which relates the problem with one with {\it quadratic cost}.

\begin{thm}\label{lem:optimaltrans}
Let $U \subset \Delta^{(n)}$ and $V \subset \overline{\Delta^{(n)}}$. Let $\PP$ be supported on $U$ and let $\QQ$ be supported on $V$. Suppose that
\eq\label{eq:bndopt}
\gamma:=\sup_{\pi \in V, \; \mu \in U } H(\pi \mid \mu) < \infty.
\en
Then the value of the problem \eqref{eqn:Kantorovich} with cost function $\widetilde{c}$ is finite. Moreover,
\begin{enumerate}[(i)]
\item There exists a concave function $\Phi:\simp^{(n)}\rightarrow (0, \infty)$ such that any optimal coupling $\RR$ is concentrated on the set
\[
\left\{ (p, q)\in U \times \overline{\Delta^{(n)}}:\; \left(\frac{q_i}{p_i} - \frac{1}{n} \sum_{j = 1}^n \frac{q_j}{p_j} \right)_{1 \leq i \leq n} \in \partial \log \Phi(p)   \right\}.
\]
\item For any $p, q \in U$ we have $\Phi(q)/\Phi(p) \le \exp(\gamma)$. Thus $\log \Phi$ is bounded on $U$.
\end{enumerate}
\end{thm}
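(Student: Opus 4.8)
The plan is to handle the three assertions in turn, leaning throughout on the dictionary between $\widetilde{c}$-cyclical monotonicity and the MCM property established in Section \ref{sec:optimaltransport}. For the finiteness of the value, the key observation is that whenever $(p,q)\in U\times V$ one has $\widetilde{c}(p,q)=-H(q\mid p)\in[-\gamma,0]$, since $H\ge 0$ and $H(q\mid p)\le\gamma$ by \eqref{eq:bndopt}. Every coupling $\RR\in\Pi(\PP,\QQ)$ is concentrated on $U\times V$, so the integrand is bounded and $\E_{\RR}[\widetilde{c}]\in[-\gamma,0]$; in particular the infimum in \eqref{eqn:Kantorovich} is finite.

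For part (i) I would follow exactly the device used in the proof of Theorem \ref{thm:charac2}. Although $\widetilde{c}$ is unbounded below on $\overline{\simp^{(n)}}\times\overline{\simp^{(n)}}$, on the sets where $p$ is bounded away from the boundary it is continuous and bounded below; truncating $\RR$ to such sets, applying the restriction property together with Theorem \ref{thm:OT}, and letting the truncation exhaust the open simplex shows that $\mathrm{supp}(\RR)$ is $\widetilde{c}$-cyclically monotone. Proposition \ref{thm:reMCM} then gives that $\mathrm{supp}(\RR)$, read as a multi-valued portfolio $p\mapsto q$, has the MCM property; Proposition \ref{thm:MCM} furnishes a positive concave $\Phi$ satisfying \eqref{eqn:FGineq}; and Proposition \ref{lem:superdiff}(i) rephrases \eqref{eqn:FGineq} as the statement that the normalized ratio $\bigl(q_i/p_i-\tfrac1n\sum_j q_j/p_j\bigr)_{i}$ lies in $\partial\log\Phi(p)$, which is precisely the set carrying $\RR$.

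The delicate point is part (ii), and the plan is to use $\widetilde{c}$-cyclical monotonicity quantitatively rather than the single-step inequality \eqref{eqn:FGineq}, which is far too lossy to produce the constant $e^{\gamma}$. From the $\widetilde{c}$-cyclically monotone support I would build the associated Kantorovich potential $\psi$ by the standard Rockafellar--R\"uschendorf chain construction; it satisfies, for every $(p,\pi(p))\in\mathrm{supp}(\RR)$ and every $q$, the inequality $\psi(q)-\psi(p)\le\widetilde{c}(q,\pi(p))-\widetilde{c}(p,\pi(p))$. The right-hand side equals $H(\pi(p)\mid p)-H(\pi(p)\mid q)$, which is at most $H(\pi(p)\mid p)\le\gamma$ because $\pi(p)\in V$, $p\in U$, and $H\ge 0$. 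Since $\psi$ and $\log\Phi$ share the supergradient $\pi(p)/p$ at the points of $\mathrm{supp}(\RR)$, uniqueness of the generating function up to an additive constant (Proposition \ref{lem:fgpproperties}(i), whose proof localizes to the convex set $U$) gives $\log\Phi=\psi+\text{const}$ on $U$. Hence $\log\Phi(q)-\log\Phi(p)\le\gamma$ for all $p,q\in U$, i.e. $\Phi(q)/\Phi(p)\le e^{\gamma}$; exchanging $p$ and $q$ bounds the oscillation of $\log\Phi$ on $U$ by $\gamma$, so $\log\Phi$ is bounded there.

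The step I expect to be the main obstacle is exactly this identification of the concave generating function $\Phi$ from part (i) with the $\widetilde{c}$-concave Kantorovich potential on $U$: a priori $\psi$ is an infimum of concave functions and need not be concave, so one must argue carefully that on the region actually swept out by $\mathrm{supp}(\RR)$ the two potentials agree up to a constant (equivalently, that the crude bound \eqref{eqn:FGineq} may be upgraded to the relative-entropy bound there). It is worth emphasizing that $\widetilde{c}$-optimality, not merely the MCM property, is what makes this upgrade possible; an MCM map that is not $\widetilde{c}$-cyclically monotone need not satisfy the sharpened inequality. A secondary, purely technical nuisance is the unboundedness of $\widetilde{c}$ near the boundary of the simplex, which is dealt with by the same truncation used for Theorem \ref{thm:charac2}.
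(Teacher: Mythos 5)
Your finiteness argument and part (i) are essentially sound: on $U\times V$ the cost $\widetilde{c}$ takes values in $[-\gamma,0]$, so every coupling has finite cost, and the truncation/restriction device from the proof of Theorem \ref{thm:charac2} (truncating where $p$ is bounded away from $\partial\Delta^{(n)}$), combined with Proposition \ref{thm:reMCM}, Proposition \ref{thm:MCM} in the localized form of Remark \ref{rmk:cxmcm} (needed because the support only sweeps out $U$), and Proposition \ref{lem:superdiff}(i), does produce a concave generator carrying the support. This is in fact the route the paper itself sketches in the paragraph preceding the theorem, but it is \emph{not} the paper's proof: the paper passes to coordinates $\zeta=-\log\mu$, drops the term $\E_{\QQ}[H(\pi)]$ (which depends only on a marginal), completes the square to obtain a quadratic-cost problem, and invokes the Knott--Smith criterion to get one convex potential $\varphi$ with $\mathrm{supp}(\widetilde{\RR})$ contained in the graph of $\partial\varphi$; then $\Phi(p):=\exp\left(-\varphi(-\log p)\right)$ is concave by Lemma \ref{lem:logtonolog}.

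The genuine gap is in part (ii), exactly at the step you flagged. You correctly prove $\psi(q)-\psi(p)\le H(\pi(p)\mid p)-H(\pi(p)\mid q)\le\gamma$ for the Rockafellar--R\"uschendorf potential $\psi$, but the bound in the statement concerns the $\Phi$ produced in part (i), and your transfer $\log\Phi=\psi+\mathrm{const}$ on $U$ rests on Proposition \ref{lem:fgpproperties}(i) ``localized to the convex set $U$''. The theorem does not assume $U$ convex, and even if it did, that uniqueness proof needs supergradient data along entire line segments, whereas here the portfolio is pinned down only on the projection of $\mathrm{supp}(\RR)$, which can be a very thin subset of $U$; two concave functions sharing supergradients at, say, finitely many points need not differ by a constant. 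So as written the bound holds for $\psi$ but not provably for your $\Phi$. The repair is to derive (i) and (ii) from one and the same potential, which is precisely the architecture of the paper's proof: both conclusions are read off the single Knott--Smith potential, part (ii) being the one-line computation $\varphi(-\log p)-\varphi(-\log q)\ge\langle\pi,\log q-\log p\rangle=H(\pi\mid p)-H(\pi\mid q)\ge-\gamma$. In your framework the analogous repair is to take $\Phi:=e^{\psi}$ itself as the generator in part (i): each chain function in the Rockafellar construction is $p\mapsto\widetilde{c}(p,q_m)+\mathrm{const}$, whose exponential is a positive multiple of the weighted geometric mean $\prod_{i=1}^n p_i^{(q_m)_i}$, hence concave; since $\exp$ is increasing, $e^{\psi}$ is an infimum of concave functions and therefore concave, and the Jensen step in Lemma \ref{lem:logtonolog} gives the supergradient property at support points. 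Incidentally, your worry that ``$\psi$ is an infimum of concave functions and need not be concave'' is backwards: an infimum of concave functions is concave (its hypograph is an intersection of convex hypographs); the obstruction was never the concavity of $\psi$ but the unjustified identification of $\psi$ with $\log\Phi$.
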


The proof of the above theorem depends on the following lemma. Here, if $p$ is a vector with positive components, $\log p$ means applying log to each coordinate.

\begin{lemma}\label{lem:logtonolog}
Let $\varphi$ be a proper convex function on $(0, \infty)^n$. Consider $-\log \simp^{(n)}$ as a subset of $(0,\infty)^n$. Assume that, for every point $p\in \simp^{(n)}$, there is a $\pi\in \overline{\simp^{(n)}}$ such that the following subdifferential inequality holds
\eq\label{eq:logtonolog}
\varphi\left( -\log q \right) \ge \varphi\left( - \log p\right) + \iprod{\pi, \log p - \log q}, \quad \text{for all $q\in \simp^{(n)}$}.
\en
That is, $\pi$ is a subdifferential of $\varphi$ at $-\log p$. Then $\Phi(p)=\exp\left[ - \varphi(-\log p) \right]$ is a concave function on $\simp^{(n)}$. Moreover, the portfolio function $\pi$ is generated by $\Phi$.
\end{lemma}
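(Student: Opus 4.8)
The plan is to deduce directly the generating inequality \eqref{eqn:FGineq} for the pair $(\Phi, \pi(p))$, since by the set-up of Proposition \ref{thm:MCM} a positive concave $\Phi$ satisfying \eqref{eqn:FGineq} is precisely what it means for $\pi$ to be generated by $\Phi$. The bridge between the additive subgradient inequality \eqref{eq:logtonolog} in the $-\log$ coordinates and the multiplicative inequality \eqref{eqn:FGineq} on the simplex is the weighted arithmetic--geometric mean inequality, and the crucial input is that the subgradient $\pi$ is a \emph{probability vector}, i.e.\ $\pi_i \ge 0$ and $\sum_i \pi_i = 1$.

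Concretely, I fix $p \in \simp^{(n)}$ and let $\pi = \pi(p) \in \overline{\simp^{(n)}}$ be the vector supplied by the hypothesis. Rearranging \eqref{eq:logtonolog} and exponentiating (a proper convex $\varphi$ is real-valued wherever it admits a subgradient, so $\Phi(p) = \exp[-\varphi(-\log p)] \in (0,\infty)$), I obtain for every $q \in \simp^{(n)}$
\[
\frac{\Phi(q)}{\Phi(p)} = \exp\!\Big(\varphi(-\log p) - \varphi(-\log q)\Big) \le \exp\!\Big(\sum_{i=1}^n \pi_i \log \frac{q_i}{p_i}\Big) = \prod_{i=1}^n \Big(\frac{q_i}{p_i}\Big)^{\pi_i}.
\]
Since $q \in \simp^{(n)}$ has strictly positive coordinates and $\pi$ is a probability vector, weighted AM--GM gives $\prod_i (q_i/p_i)^{\pi_i} \le \sum_i \pi_i\, q_i/p_i$. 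Finally, because $\sum_i \pi_i = 1$,
\[
\sum_{i=1}^n \pi_i \frac{q_i}{p_i} = \sum_{i=1}^n \frac{\pi_i}{p_i}(q_i - p_i) + \sum_{i=1}^n \pi_i = 1 + \iprod{\frac{\pi(p)}{p}, q - p}.
\]
Combining the last three displays yields exactly \eqref{eqn:FGineq}.

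It then remains to record that $\Phi$ is a positive concave function. Positivity is immediate from the definition $\Phi = \exp[-\varphi(-\log\cdot)]$ together with the finiteness of $\varphi$ at each $-\log p$. For concavity I would use \eqref{eqn:FGineq} itself: for each fixed $p$ the map $q \mapsto \Phi(p)\big(1 + \iprod{\pi(p)/p, q-p}\big)$ is an affine function of $q$ that dominates $\Phi$ everywhere and agrees with it at $q = p$. Hence $\Phi$ is the pointwise infimum over $p$ of this family of affine functions (the infimum being attained at $q = p$), and a pointwise infimum of affine functions is concave. With $\Phi$ positive and concave and \eqref{eqn:FGineq} in hand, $\pi$ is generated by $\Phi$ in the sense of Proposition \ref{thm:MCM}, completing the proof.

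I expect the only genuinely substantive step to be the AM--GM passage, and in particular recognising that it is the hypothesis $\pi \in \overline{\simp^{(n)}}$ that both licenses AM--GM and makes the resulting affine term collapse to $1 + \iprod{\pi/p, q-p}$; everything else is bookkeeping. Minor points to watch are the edge behaviour when some $\pi_i = 0$ (handled by the convention $x^0 = 1$) and the justification that $\varphi(-\log p)$ is finite, so that $\Phi$ takes values in $(0,\infty)$ rather than merely $[0,\infty)$.
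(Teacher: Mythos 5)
Your proof is correct and follows essentially the same route as the paper: the core step (exponentiating the subgradient inequality and applying weighted AM--GM, i.e.\ Jensen's inequality, together with the identity $\sum_i \pi_i q_i/p_i = 1 + \iprod{\pi/p,\, q-p}$ which uses $\pi \in \overline{\simp^{(n)}}$) is exactly the paper's argument for \eqref{eqn:FGineq}. The only difference is cosmetic: where the paper cites a textbook theorem to conclude that a function with a supergradient at every point is concave, you prove this inline by exhibiting $\Phi$ as the pointwise infimum of the affine majorants $q \mapsto \Phi(p)\bigl(1 + \iprod{\pi(p)/p,\, q-p}\bigr)$, which is a valid, self-contained substitute.
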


\begin{proof}
Consider any two points $p,q \in \simp^{(n)}$. By \eqref{eq:logtonolog} we get,
\[
\begin{split}
\varphi\left( - \log p\right) - \varphi\left( -\log q \right) & \le \iprod{\pi, \log q - \log p} = \sum_{i=1}^n \pi_i(p) \log\left( \frac{q_i}{p_i} \right)\\
&\le \log\left( \sum_{i=1}^n \pi_i \frac{q_i}{p_i}  \right), \quad \text{by Jensen's inequality},\\
&= \log \left( 1 +  \iprod{\frac{\pi(p)}{p}, q-p} \right).
\end{split}
\]

% [July 15 Leonard] Rephrased a bit
Exponentiating both sides of the above inequality, we get
\[
 \left( 1 +  \iprod{\frac{\pi}{p}, q-p} \right) \ge \exp\left[ \varphi\left( - \log p\right) - \varphi\left( -\log q \right)  \right] = \frac{\Phi(q)}{\Phi(p)}.
\]
Multiplying both sides by $\Phi(p)$, we see that $\Phi$ has a supergradient at $p$ given by the vector $\Phi(p)\left(\frac{\pi_i(p)}{p_i} - \frac{1}{n} \sum_{j = 1}^n \frac{\pi_j(p)}{p_j}\right)_{1 \leq i \leq n}$. Thus $\Phi$ is concave by \cite[Theorem 3.2.6]{BSS13}. The last statement follows from Proposition \ref{thm:MCM} and Proposition \ref{lem:superdiff}.
\end{proof}

\begin{proof}[Proof of Theorem \ref{lem:optimaltrans}] It is clear that the value of the problem is finite.

\medskip

%[July 15 Leonard] Fixed some notations \E
To prove (i) we will use the Knott-Smith optimality criterion (\cite[Theorem 2.12]{V03}) which is a fundamental result for optimal transport with quadratic cost. First we will reformulate problem \eqref{eqn:Kantorovich}. Let $\widetilde\PP$ be the law of $\zeta=-\log \mu$ when $\mu \sim \PP$. Then both $\PPT$ and $\QQ$ are probability measures on $\rr^n$, and the optimization problem \eqref{eqn:Kantorovich} becomes $\sup_{\RR \in \Pi\left( Q, \PPT  \right)} \E_{\RR} \left(  \iprod{\pi, \zeta} - H(\pi)\right)$, where $H(\pi)$ is the Shannon entropy of $\pi$. Since $\E_{{\mathcal R}}[H(\pi)] = \E_{{\mathcal Q}}[H(\pi)]$, this term can be dropped. Completing the squares, we see that the transport problem is equivalent to
\eq\label{eq:otquad}
\inf_{\RR \in \Pi\left( Q, \PPT  \right)} \E_{\RR} \left(  \norm{\pi - \zeta}^2 \right),
\en
which is the usual optimal transport problem for the quadratic cost.

% [July 15 Leonard] More explanation (reviewer comment)
Let $\widetilde{\RR}$ be an optimal solution to \eqref{eq:otquad} (which exists by \cite[Theorem 4.1]{V08}). The Knott-Smith optimality criterion states that there is a lower semi-continuous convex function $\varphi$ on $\rr^n$ such that the support of the optimal coupling is contained in the graph of $\partial \varphi$. That is, for $\widetilde{\RR}$ almost all $(\pi, \zeta)$ we have $\pi \in  \partial \varphi(\zeta)$. Consider the map $\mu=\exp(-\zeta) \mapsto \pi$ and let $\varsigma(\mu)= -\varphi(-\log \mu)=-\varphi(\zeta)$. By Lemma \ref{lem:logtonolog}, $\Phi=\exp(\varsigma(\mu))$ is a concave function which generates $\pi$.

\medskip

For (ii), let $(q,\pi)$ be in the support of $\widetilde\RR$ as above. By convexity of $\varphi$ and the fact that $\pi$ is a supergradient at $-\log q$, we have
\[
\varphi\left(-\log p\right) - \varphi\left( -\log q  \right) \ge \iprod{\pi, - \log p + \log q} = H\left(\pi \mid p\right) - H\left( \pi \mid q\right) \ge -\gamma.
\]
Multiplying the above inequalities by the negative sign and then exponentiating prove our claim.
\end{proof}

\subsection{Examples}
% [July 15 Leonard] more explanation (reviewer comment)

Now we give some simple examples where optimality can be verified directly; a more realistic application will be given in Section \ref{sec:example}. Let $S_n$ be the set of permutations of $(1, \ldots, n)$. For $\sigma = \left( \sigma_1, \ldots, \sigma_n  \right) \in S_n$ and $x \in {\Bbb R}^n$, let $\sigma\cdot x$ denote the vector $\left( x_{\sigma_1}, \ldots, x_{\sigma_n} \right)$.
Let $E_\sigma$ be the set defined by
\eq\label{eq:whatisesigma}
E_\sigma= \left\{  x\in \rr^n:  x_{\sigma_1} > x_{\sigma_2} > \cdots > x_{\sigma_n}    \right\}.
\en
In other words, the rank of the $\sigma_i$th coordinate of a point in $E_\sigma$ is $i$. Let $\mathrm{Id} \in S_n$ be the identity. Note that $E_\sigma= \sigma\cdot E_{\mathrm{Id}}$, by extending the action to sets.

\begin{exm}\label{exmp:ot1}
In this example the cost function is \eqref{eqn:logpartition}. Let $U$ be a precompact subset of $\simp^{(n)}$ which is defined only by the ranked coordinates, i.e., $U \cap E_\sigma = \sigma \cdot \left( U \cap E_{\mathrm{Id}} \right)$ for all $\sigma$. Let $\PP$ be an probability measure on $U$ which is invariant under relabelings of the coordinates. That is, $\PP$ is completely specified by the distribution of the ordered statistics $(\mu_{(1)}, \ldots, \mu_{(n)})$ and we say that $\PP$ is {\it exchangeable}. This will imply that the resulting portfolio functions are {\it rank-based}. We also assume that $\PP$ does not charge the boundaries of $E_\sigma$'s.

Let $V=\left\{\log e(i), \; 1\le i \le n  \right\}$ (the $\log$ is applied componentwise) and let $\QQ$ be the uniform distribution on this finite set. Then the optimal coupling of $(\PP, \QQ)$ can be described as follows. First, for every $p \in U \cap E_{\mathrm{Id}}$, we couple $p$ with $h = \log e(n)$. Next we extend by symmetry: for every $p \in U \cap E_\sigma$, couple $p$ with $h = \log e({\sigma_n})$. It can be checked that the resulting portfolio invests everything in the smallest stock. Similar examples can be worked out for $V=\left\{  \log \left( e(i) + e(j) \right), \; 1\le i < j\le n  \right\}$. In this case the optimal portfolio will invest according to the rescaled market weights in the lowest two stocks. If $V$ is the set of the log of the sums of $k$ many distinct $e(i)'s$, the resulting portfolio invests according to the market in the smallest $k$ stocks. This is an example of portfolio selected by {\it rank}, see \cite[Example 4.3.2]{F02}.
\end{exm}

\begin{exm}
Now we consider the negative relative entropy cost $\eqref{eqn:newcost}$. Let $U$ and $\PP$ be as in Example \ref{exmp:ot1}. Let $V=\left\{  e(i), \; 1\le i \le n  \right\}$ and let $\QQ$ be the uniform distribution on $V$. It is then clear that the following coupling minimizes the cost: With every $p \in U \cap E_1$, couple $p$ with $e(n)$ and extend by symmetry. Now the optimal coupling is again the portfolio that puts its entire holding on the smallest stock. However, if we take $V=\left\{  (e(i) + e(j))/2, \; i< j  \right\}$ or the set of averages of $k$ many distinct $e(i)$'s, the optimal portfolio will invest {\it equally} among the smallest $k$ stocks.
\end{exm}

% [July 12 Leonard] Delete this section altogether
%%%%
%\section{Microstructure noise and statistical arbitrage} \label{sec:trigger}
% .....

% [July 13, ..., 17 Leonard] New section. All new.
\section{Empirical examples: two stocks case} \label{sec:example}
In general, solving optimal transport problems (either analytically or numerically) is a difficult task; see for example \cite{BFO14} and the references therein. Designing practical algorithms for solving the transport problem with cost \eqref{eqn:logpartition} or \eqref{eqn:convexcost} is an interesting open problem. In the case $n = 2$, the solution can be characterized explicitly due to the special structure of the real line and the convexity of the cost function \eqref{eqn:convexcost}. In this section we present the solution and give several empirical examples.

\subsection{Monotone rearrangements}
Throughout this section we assume $n = 2$. A typical point $\mu$ in $\Delta^{(2)}$ is represented as
\[
\mu = \left( \frac{e^{\theta}}{1 + e^{\theta}}, \frac{1}{1 + e^{\theta}}\right),
\]
where $\theta \in {\Bbb R}$ is the exponential coordinate of $\mu$. A portfolio vector $\pi(\mu)$ with positive weights corresponding to $\mu$ can be expressed as
\[
\pi(\mu) = \left( \frac{e^{\theta - \phi}}{1 + e^{\theta - \phi}}, \frac{1}{1 + e^{\theta - \phi}}\right)
\]
for some $\phi \in {\Bbb R}$. So the exponential coordinate of $\pi(\mu)$ is $\theta - \phi$. We will choose $\phi$ as a function of $\theta$. As $\phi$ increases, the portfolio underweights more and more stock 1 relative to the market weight. See Figure \ref{fig:phi} for the dependence of the portfolio on $\phi$ at different points on the simplex. As an example, if $|\phi|$ is bounded by $0.6$, the graph of the resulting portfolio will lie within the curves labeled $-0.6$ and $0.6$.

\begin{figure}[t!]
\centering
\includegraphics[scale=0.45]{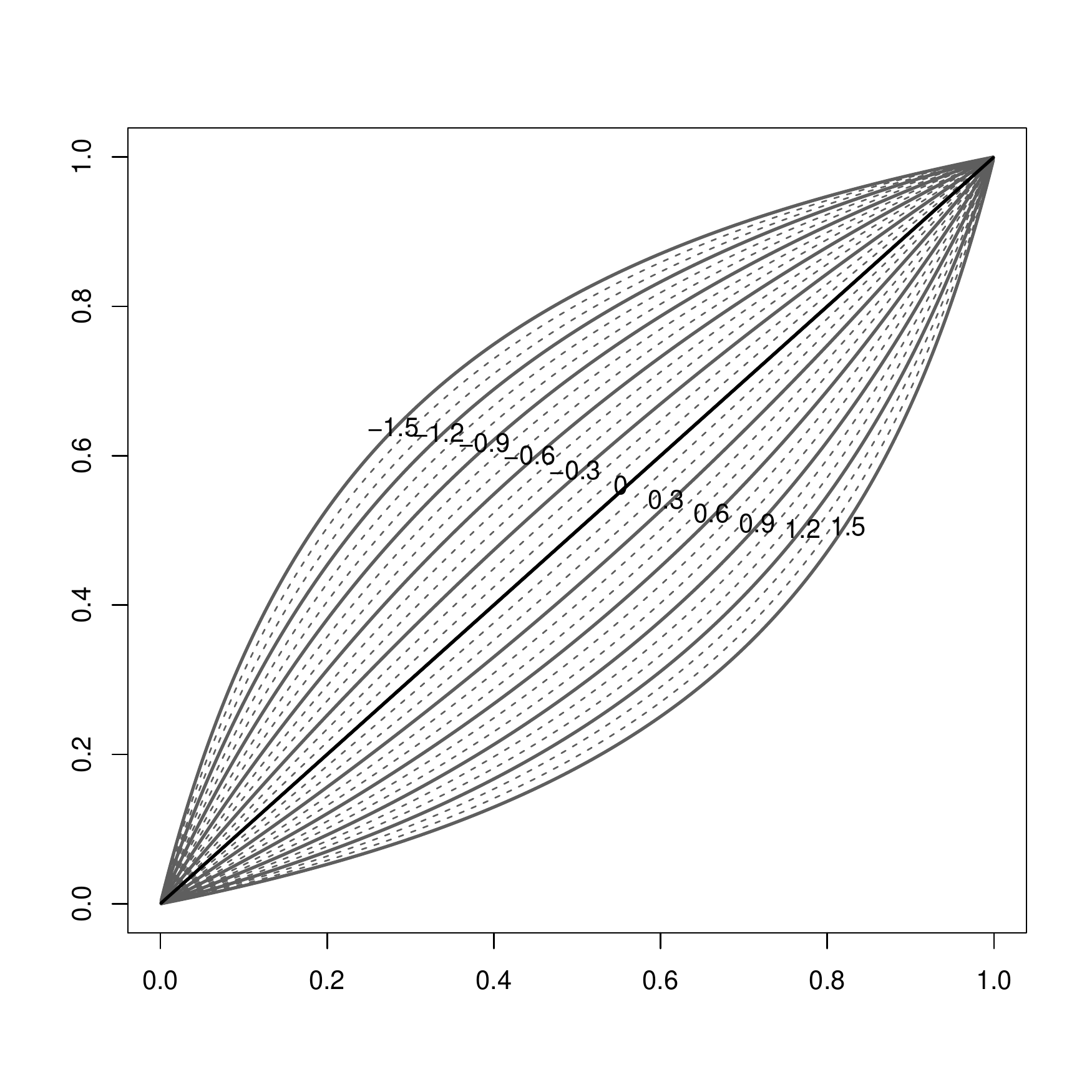}
\vspace{-15pt}
\caption{Plot of $\pi_1(\mu) =  \frac{e^{\theta - \phi}}{1 + e^{\theta - \phi}}$ as a function of $\mu_1 = \frac{e^{\theta}}{1 + e^{\theta}}$, for different values of $\phi$ (labeled).}
\label{fig:phi}
\end{figure}

Consider the transport problem with cost \eqref{eqn:convexcost}. Let $\widetilde{{\mathcal P}}$ and $\widetilde{{\mathcal Q}}$ be probability measures on ${\Bbb R}$. We assume that $\widetilde{{\mathcal P}}$ is absolutely continuous with respect to the Lebesgue measure. The cost is
\[
c(\theta, \phi) = \psi(\theta - \phi) = \log \left(1 + e^{\theta - \phi} \right), \quad \theta, \phi \in {\Bbb R}.
\]
Here
\[
\psi(x) = \log \left(1 + e^x\right)
\]
is a smooth and strictly convex function on ${\Bbb R}$. The transport problem is
\begin{equation} \label{eqn:twostockproblem}
{\Bbb E}_{\widetilde{{\mathcal{R}}} \in \Pi(\widetilde{{\mathcal P}}, \widetilde{{\mathcal Q}})} \psi(\theta - \phi)
\end{equation}
where $(\theta, \phi) \sim \widetilde{{\mathcal{R}}}$.

Let $G$ and $H$ be the distribution functions of $\widetilde{{\mathcal P}}$ and $\widetilde{{\mathcal Q}}$ respectively.

\begin{defn} [Monotone rearrangement]
The monotone transport map from $\widetilde{{\mathcal P}}$ to $\widetilde{{\mathcal Q}}$ is the map $F: {\Bbb R} \rightarrow {\Bbb R}$ defined by
\begin{equation} \label{eqn:monotonetransport}
F(x) = \inf\{y: H(y) \geq G(x)\}.
\end{equation}
\end{defn}

In other words, $F$ is defined by matching the quantiles of $H$ to those of $G$. It is clear from \eqref{eqn:monotonetransport} that $F$ is non-decreasing. Moreover, it is easy to check that if $\theta \sim \widetilde{{\mathcal P}}$, then $F(\theta) \sim \widetilde{{\mathcal Q}}$. Thus $(\theta, F(\theta))$ is a coupling of $(\widetilde{{\mathcal P}}, \widetilde{{\mathcal Q}})$. In fact, $F$ is the unique non-decreasing function (up to the null sets of $\widetilde{{\mathcal P}}$) which maps $\widetilde{{\mathcal P}}$ to $\widetilde{{\mathcal Q}}$. 

The following theorem is a special case of a well-known fact (see for example \cite[Theorem 3.1]{JS12}). Indeed, the monotone transport map remains optimal if $\psi$ is replaced by {\it any} strictly convex function. 

\begin{thm} \label{thm:monotone}
The coupling $(\theta, F(\theta))$ where $\theta \sim \widetilde{{\mathcal P}}$ and $F$ is the monotone transport map from $\widetilde{{\mathcal P}}$ to $\widetilde{{\mathcal Q}}$ is the unique solution to the transport problem \eqref{eqn:twostockproblem}.
\end{thm}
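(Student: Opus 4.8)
The plan is to reduce the statement to the $c$-cyclical monotonicity characterization of optimality (Theorem \ref{thm:OT}) and to exploit the fact that the cost \eqref{eqn:convexcost} is \emph{submodular}, which is the one-dimensional manifestation of the strict convexity of $\psi$. Concretely, the driving observation is the two-point inequality: for $\theta < \theta'$ and $\phi < \phi'$,
\[
\psi(\theta - \phi) + \psi(\theta' - \phi') \le \psi(\theta - \phi') + \psi(\theta' - \phi),
\]
with strict inequality since $\psi$ is strictly convex. To see this, note that the two arguments on the left, $\theta-\phi$ and $\theta'-\phi'$, have the same sum as the two on the right, $\theta-\phi'$ and $\theta'-\phi$, while the right-hand pair is strictly more spread out: $\theta-\phi'$ is smaller than both left arguments and $\theta'-\phi$ is larger than both. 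Convexity of $\psi$ then gives the inequality, and strict convexity makes it strict whenever $\theta < \theta'$ and $\phi < \phi'$.

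First I would record that the cost $c(\theta,\phi)=\psi(\theta-\phi)=\log(1+e^{\theta-\phi})$ is continuous and strictly positive, hence bounded below; thus (assuming the value of \eqref{eqn:twostockproblem} is finite, which holds e.g.\ under finite first moments since $\psi(x)\le \log 2 + |x|$) an optimal coupling $\widetilde{{\mathcal R}}$ exists by \cite[Theorem 4.1]{V08}, and by Theorem \ref{thm:OT} its support is $c$-cyclically monotone. Specializing Definition \ref{def:cm} to $m=2$ gives, for any two points $(\theta,\phi)$ and $(\theta',\phi')$ in $\mathrm{supp}(\widetilde{{\mathcal R}})$, the inequality $c(\theta,\phi)+c(\theta',\phi')\le c(\theta,\phi')+c(\theta',\phi)$. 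Combined with the strict two-point inequality above, this rules out any ``descending'' pair: one cannot have $\theta<\theta'$ together with $\phi>\phi'$, since applying the displayed inequality with the phi-values ordered as $\phi'<\phi$ would contradict two-point monotonicity. Hence $\mathrm{supp}(\widetilde{{\mathcal R}})$ is a non-decreasing subset of ${\Bbb R}^2$.

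Finally I would upgrade ``monotone support'' to ``graph of $F$''. A non-decreasing set in ${\Bbb R}^2$ can fail to be a graph only over the at most countably many first coordinates at which it contains a vertical segment; since $\widetilde{{\mathcal P}}$ is absolutely continuous it assigns zero mass to this countable set, so for $\widetilde{{\mathcal P}}$-a.e.\ $\theta$ there is a unique $\phi=g(\theta)$ with $(\theta,g(\theta))\in\mathrm{supp}(\widetilde{{\mathcal R}})$, and $g$ is non-decreasing. Because $\widetilde{{\mathcal R}}$ has second marginal $\widetilde{{\mathcal Q}}$, the map $g$ pushes $\widetilde{{\mathcal P}}$ forward to $\widetilde{{\mathcal Q}}$, so by the uniqueness of the monotone rearrangement noted just before the statement, $g=F$ $\widetilde{{\mathcal P}}$-a.e.\ and $\widetilde{{\mathcal R}}$ is the coupling $(\theta,F(\theta))$. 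Uniqueness among all couplings follows from the same argument applied to any optimizer, together with the \emph{strict} form of the two-point inequality, which forbids crossings. I expect the main obstacle to be precisely this last passage --- converting the measure-theoretic ``$c$-cyclically monotone support'' into the pointwise description ``graph of the increasing rearrangement'' --- since monotonicity of the support alone does not yield a function without invoking the non-atomicity of $\widetilde{{\mathcal P}}$; by contrast the two-point convexity argument driving the whole proof is entirely elementary.
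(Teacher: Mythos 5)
Your proof is correct, but it takes a genuinely different route from the paper, for the simple reason that the paper does not prove Theorem \ref{thm:monotone} at all: it invokes it as ``a special case of a well-known fact'' and cites \cite[Theorem 3.1]{JS12}, remarking only that the monotone map stays optimal for any strictly convex cost. What you have written is, in effect, the standard proof of that cited fact, assembled from ingredients already present earlier in the paper: existence of an optimizer via \cite[Theorem 4.1]{V08}, $c$-cyclical monotonicity of its support (Theorem \ref{thm:OT} with Definition \ref{def:cm} specialized to $m=2$), the strict submodularity inequality $\psi(\theta-\phi)+\psi(\theta'-\phi') < \psi(\theta-\phi')+\psi(\theta'-\phi)$ for $\theta<\theta'$, $\phi<\phi'$ (your equal-sums/greater-spread argument from strict convexity is exactly right, as is the resulting exclusion of descending pairs in the support), and finally the passage from ``monotone support'' to ``graph of the monotone rearrangement'' via the countability of vertical segments and non-atomicity of $\widetilde{{\mathcal P}}$, concluding with the uniqueness of the non-decreasing transport map that the paper records just before the theorem. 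Your write-up also makes explicit two hypotheses the paper leaves implicit: finiteness of the value of \eqref{eqn:twostockproblem} (your bound $\psi(x)\le \log 2+|x|$ shows finite first moments of $\widetilde{{\mathcal P}}$ and $\widetilde{{\mathcal Q}}$ suffice), without which Theorem \ref{thm:OT} is inapplicable and uniqueness is vacuous; and the fact that only non-atomicity of $\widetilde{{\mathcal P}}$ is needed, rather than the absolute continuity assumed in Section \ref{sec:example}. The trade-off is the expected one: the paper's citation is shorter and delegates the measure-theoretic care to the literature, while your argument makes the section self-contained, reuses the paper's own optimal-transport machinery, and retains the same generality (nothing beyond strict convexity of $\psi$ is used).
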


An explicit example is where $\widetilde{{\mathcal{P}}}$ and $\widetilde{{\mathcal{Q}}}$ are normal. In this case, the monotone transport map is linear and the corresponding portfolio is essentially a {\it diversity-weighted portfolio} (see \cite[Section 3.4]{F02}).

\begin{prop} \label{prop:twostockgaussian}
Let $\widetilde{{\mathcal{P}}} = N(m_1, \sigma_1^2)$ and $\widetilde{{\mathcal{Q}}} = N(m_2, \sigma_2^2)$. Then the monotone transport map is given by
\[
F(\theta) = m_2 + \frac{\sigma_2}{\sigma_1} (\theta - m_1), \quad \theta \in {\Bbb R}.
\]
Moreover, the portfolio function corresponding to the transport map $F$ is given by
\begin{equation} \label{eqn:weightedDW}
\pi(\mu) = \left(\frac{c\mu_1^{\alpha}}{c\mu_1^{\alpha} + \mu_2^{\alpha}}, \frac{\mu_2^{\alpha}}{c\mu_1^{\alpha} + \mu_2^{\alpha}}\right),
\end{equation}
where $\alpha = 1 - \frac{\sigma_2}{\sigma_1}$ and $c = \exp\left(\frac{\sigma_2}{\sigma_1} m_1 - m_2\right)$.
\end{prop}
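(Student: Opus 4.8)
The plan is to prove the two assertions in turn: first identify the monotone transport map between the two Gaussians, then feed it into the portfolio recipe \eqref{eqn:portfexpcoord} and simplify using the exponential coordinate structure of $\Delta^{(2)}$.

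First, for the transport map. Since both marginals are Gaussian I would compute $F$ directly from the quantile-matching formula \eqref{eqn:monotonetransport}. Writing the distribution functions $G$ and $H$ in terms of the standard normal CDF $\Phi_0$, namely $G(x) = \Phi_0\!\left(\frac{x - m_1}{\sigma_1}\right)$ and $H(y) = \Phi_0\!\left(\frac{y - m_2}{\sigma_2}\right)$, the defining relation $H(F(x)) = G(x)$ and the strict monotonicity of $\Phi_0$ reduce to $\frac{F(x) - m_2}{\sigma_2} = \frac{x - m_1}{\sigma_1}$, which rearranges to the claimed affine formula $F(\theta) = m_2 + \frac{\sigma_2}{\sigma_1}(\theta - m_1)$. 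I would then note that $F$ is strictly increasing (because $\sigma_2/\sigma_1 > 0$) and that $F(\theta) \sim \widetilde{{\mathcal Q}}$ whenever $\theta \sim \widetilde{{\mathcal P}}$, so $(\theta, F(\theta))$ is the monotone coupling; by Theorem \ref{thm:monotone} it is the unique solution of \eqref{eqn:twostockproblem}.

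Second, for the portfolio formula. By the construction \eqref{eqn:portfexpcoord}, the negative shift attached to the market point with exponential coordinate $\theta$ is $\phi(\theta) = F(\theta)$, so the portfolio $\pi(\mu)$ has exponential coordinate $\theta - F(\theta)$. Substituting the affine $F$ gives $\theta - F(\theta) = \left(1 - \tfrac{\sigma_2}{\sigma_1}\right)\theta + \left(\tfrac{\sigma_2}{\sigma_1} m_1 - m_2\right) = \alpha\theta + \log c$, using the definitions of $\alpha$ and $c$. Since for $n = 2$ the exponential coordinate of any point is the log-ratio of its two weights, this reads $\log\frac{\pi_1}{\pi_2} = \alpha\log\frac{\mu_1}{\mu_2} + \log c$, i.e.\ $\frac{\pi_1}{\pi_2} = c\left(\frac{\mu_1}{\mu_2}\right)^{\alpha}$. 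Combining this with $\pi_1 + \pi_2 = 1$ yields \eqref{eqn:weightedDW} after clearing denominators.

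The computation is elementary and I do not anticipate a genuine analytic obstacle. The only place demanding care is the bookkeeping of conventions: confirming that the $\phi$ appearing in the recipe \eqref{eqn:portfexpcoord} is exactly the transport target $F(\theta)$ (the second marginal variable), not its negative, and correctly passing from the single scalar exponential coordinate $\theta - F(\theta)$ back to the weight vector $\pi(\mu) \in \overline{\Delta^{(2)}}$. Once these identifications are fixed, the affine form of $F$ makes everything collapse to one log-linear relation between the weight ratios, and the stated constants $\alpha = 1 - \frac{\sigma_2}{\sigma_1}$ and $c = \exp\!\left(\frac{\sigma_2}{\sigma_1} m_1 - m_2\right)$ fall out immediately.
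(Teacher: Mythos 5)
Your proposal is correct and follows essentially the same route as the paper: the paper likewise identifies $F$ via the standardization $(X-m_1)/\sigma_1 \overset{d}{=} (Y-m_2)/\sigma_2$ (equivalent to your quantile-matching computation) and then obtains \eqref{eqn:weightedDW} from the single log-linear relation $\log\frac{\pi_1(\mu)}{\pi_2(\mu)} = \theta - F(\theta) = \alpha \log\frac{\mu_1}{\mu_2} + \log c$ together with $\pi_1 + \pi_2 = 1$. Your bookkeeping of the sign convention for $\phi$ in \eqref{eqn:portfexpcoord} matches the paper's, so there is no gap.
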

\begin{proof}
Let $X \sim N(m_1, \sigma_1^2)$ and $Y \sim N(m_2, \sigma_2^2)$. The first statement follows from the fact that $(X - m_1) / \sigma_1$ and $(Y - m_2) / \sigma_2$ have the same distribution.

To show \eqref{eqn:weightedDW}, note that $\theta = \log \frac{\mu_1}{\mu_2}$ and, by definition of $\pi(\mu)$, we have
\begin{equation*}
\begin{split}
\log \frac{\pi_1(\mu)}{\pi_2(\mu)} = \theta - F(\theta) &= \left(1 - \frac{\sigma_2}{\sigma_1} \right) \log \frac{\mu_1}{\mu_2} + \left( \frac{\sigma_2}{\sigma_1} m_1 - m_2\right) \\
  &= \alpha \log \frac{\mu_1}{\mu_2} + \log c.
\end{split}
\end{equation*}
Rearranging gives the result.
\end{proof}

Clearly the portfolio has the form \eqref{eqn:weightedDW} whenever the transport map is linear, so the normality assumption is not required. Nevertheless, it is instructive to see how the portfolio depends on the means and variances of $\widetilde{{\mathcal{P}}}$ and $\widetilde{{\mathcal{Q}}}$. In particular, the exponent $\alpha$ in Proposition \ref{prop:twostockgaussian} depends on the ratio $\frac{\sigma_2}{\sigma_1}$. If $\sigma_1 = \sigma_2$, then $\alpha = 0$ and $\pi$ is a constant-weighted portfolio. If $0 < \sigma_2 < \sigma_1$, then $0 < \alpha < 1$ and $\pi$ is essentially the diversity-weighted portfolio. If $\sigma_2 > \sigma_1 > 0$, then $\alpha $ is negative and the corresponding portfolio is studied in the recent paper \cite{VK15}. On the other hand, the mean $m_2$ of $\widetilde{{\mathcal{Q}}}$ represents systematic overweight/underweight of stock 1 and interacts with other parameters to determine the constant $c$.

\subsection{Empirical examples}
\begin{figure}[t!]
\centering
\includegraphics[scale=0.45]{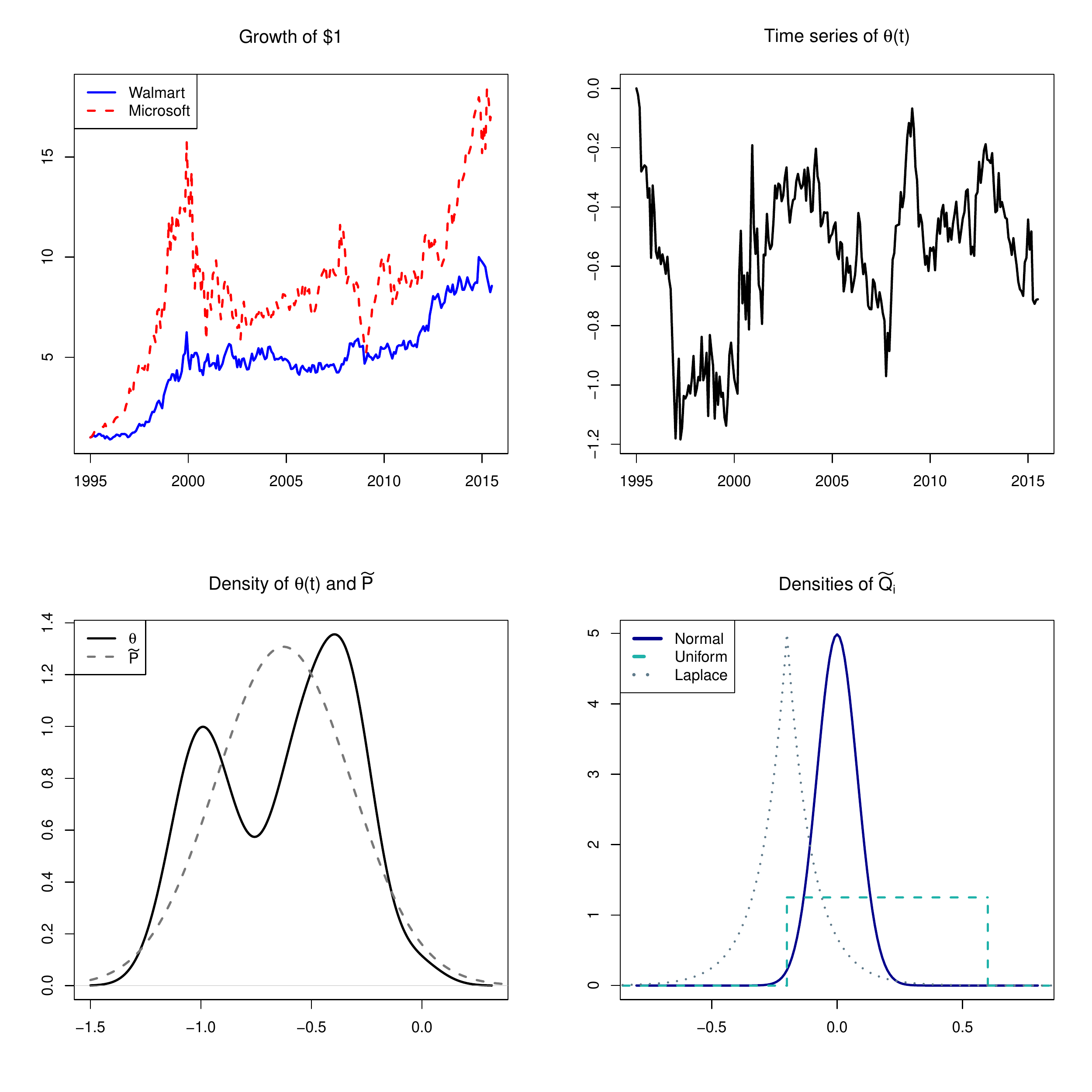}
\vspace{-15pt}
\caption{Plots of the data. Top left: Time series of the normalized stock prices. Top right: Time series of $\theta(t)$, the exponential coordinate process. Bottom left: Density estimate of $\theta(t)$ over the training period (solid curve) and density of $\widetilde{{\mathcal{P}}}$ (dashed curve). Bottom right: Densities of our choices of $\widetilde{{\mathcal{Q}}}$.}
\label{fig:data}
\end{figure}

In this subsection we use a simple example to illustrate how our methodology of optimal transport might be applied in practice. Consider the monthly stock prices of Walmart (stock 1) and Microsoft (stock 2) from January 1995 to July 2015. The stock prices (normalized to be $\$1$ at January 1995) are plotted in Figure \ref{fig:data} (top left). The `market' consists of the two stocks and the initial market weight is $(0.5, 0.5)$. We compute the exponential coordinate process $\theta(t) = \log \frac{\mu_1(t)}{\mu_2(t)}$ (top right). Suppose we use the first 10 years of data (120 months) as training data. Our objective is to use the training data as well as choices of ${\mathcal{Q}}$ to construct portfolios that will be backtested using the next 10 years of data. To do this using optimal transport, we need to specify the probability distributions $\widetilde{{\mathcal{P}}}$ and $\widetilde{{\mathcal{Q}}}$ on ${\Bbb R}$.

\medskip

{\it Choice of $\widetilde{{\mathcal{P}}}$.} The measure $\widetilde{{\mathcal{P}}}$ reflects our belief of the position of $\theta(t)$ in the future. Figure \ref{fig:data} plots the density estimate of $\theta(t)$ over the training period (bottom left). The distribution is bimodal (corresponding to the periods 1997-2000 and 2002-2004) and is mostly concentrated in the interval $[-1.2, 0]$. Suppose our belief is that the market weight will most likely remain in this region in the next decade. For simplicity, we take $\widetilde{{\mathcal{P}}}$ to be the normal distribution whose mean and standard deviation match those of the density estimate. Explicitly, we have
\[
\widetilde{{\mathcal{P}}} = N(-0.626, 0.305).
\]
A more diffuse distribution can be chosen if the investor is less certain.

\medskip

{\it Choice of $\widetilde{{\mathcal{Q}}}$.} Recall that the portfolio has the representation
\[
\pi(\mu) = \left( \frac{e^{\theta - \phi}}{1 + e^{\theta - \phi}}, \frac{1}{1 + e^{\theta - \phi}}\right),
\]
where $\phi$ is a function of $\theta$ and $\widetilde{{\mathcal{Q}}}$ is the marginal distribution of $\phi$, given that $\theta$ is distributed as $\widetilde{{\mathcal{P}}}$.

To illustrate the effects of different distributions we consider three distributions given as follow:
\begin{equation*}
\begin{split}
\widetilde{{\mathcal{Q}}}_1 &= N(0, 0.08), \\
\widetilde{{\mathcal{Q}}}_2 &= \mathrm{Uniform}(-0.2, 0.6),\\
\widetilde{{\mathcal{Q}}}_3 &= \mathrm{Laplace}(\mathrm{location} = -0.2 , \mathrm{scale} = 0.1).
\end{split}
\end{equation*}
Here we recall that the Laplace distribution with location parameter $a$ and scale parameter $b$ has density given by $f(x) = \frac{1}{b} \exp\left( -\frac{|x - a|}{b} \right)$. The densities of these distributions are shown in Figure \ref{fig:data} (bottom right). We denote the resulting portfolios by $\pi^{(1)}$, $\pi^{(2)}$ and $\pi^{(3)}$. 

Let us give some intuitions about these distributions. Overall, the distributions we choose concentrate in the interval $[-0.6, 0.6]$. From Figure \ref{fig:phi}, they allow moderate deviations from the market weight but not too much (most of the time).

Note that $\widetilde{{\mathcal{Q}}}_1$ has mean $0$ and has a rather small standard deviation (about a quarter of the standard deviation of $\widetilde{{\mathcal{P}}}$). This means that on average $\pi^{(1)}$ will not overweight or underweight stock 1 (Walmart) and the deviation is most of the time small. By Proposition \ref{prop:twostockgaussian}, we know that $\pi^{(1)}$ is a diversity-weighted portfolio with $\alpha = 1 - \frac{0.08}{0.305} \approx 0.74$. (From \eqref{eqn:weightedDW}, the portfolio is constant-weighted if $\alpha = 0$ and buy-and-hold if $\alpha = 1$.)

For $\widetilde{{\mathcal{Q}}}_2$, we expect that $\pi^{(2)}$ tends to underweight stock 1 (about $75\%$ of the time provided the future empirical distribution of $\theta(t)$ is close to $\widetilde{{\mathcal{P}}}$). Since $\widetilde{{\mathcal{Q}}}_2$ has bounded support, the weight ratios of $\pi^{(2)}$ are uniformly bounded on $\Delta^{(n)}$. However, the underweight can be significant on a certain region.

Finally, $\widetilde{{\mathcal{Q}}}_3$ has a Laplace distribution which has fatter tails than the normal distribution. Thus we expect that $\pi^{(3)}$ deviates more (from the market portfolio) than a diversity-weighted portfolio with matching parameters near the boundary of the simplex. Also $\widetilde{{\mathcal{Q}}}_3$ is chosen to have negative mean. Thus $\pi^{(3)}$ will tend to overweight stock 1. In practice, the location measure of $\widetilde{{\mathcal{Q}}}$ should reflect the investor's belief about the relative performances of the stocks in the future.

\begin{figure}[t!]
\centering
\includegraphics[scale=0.5]{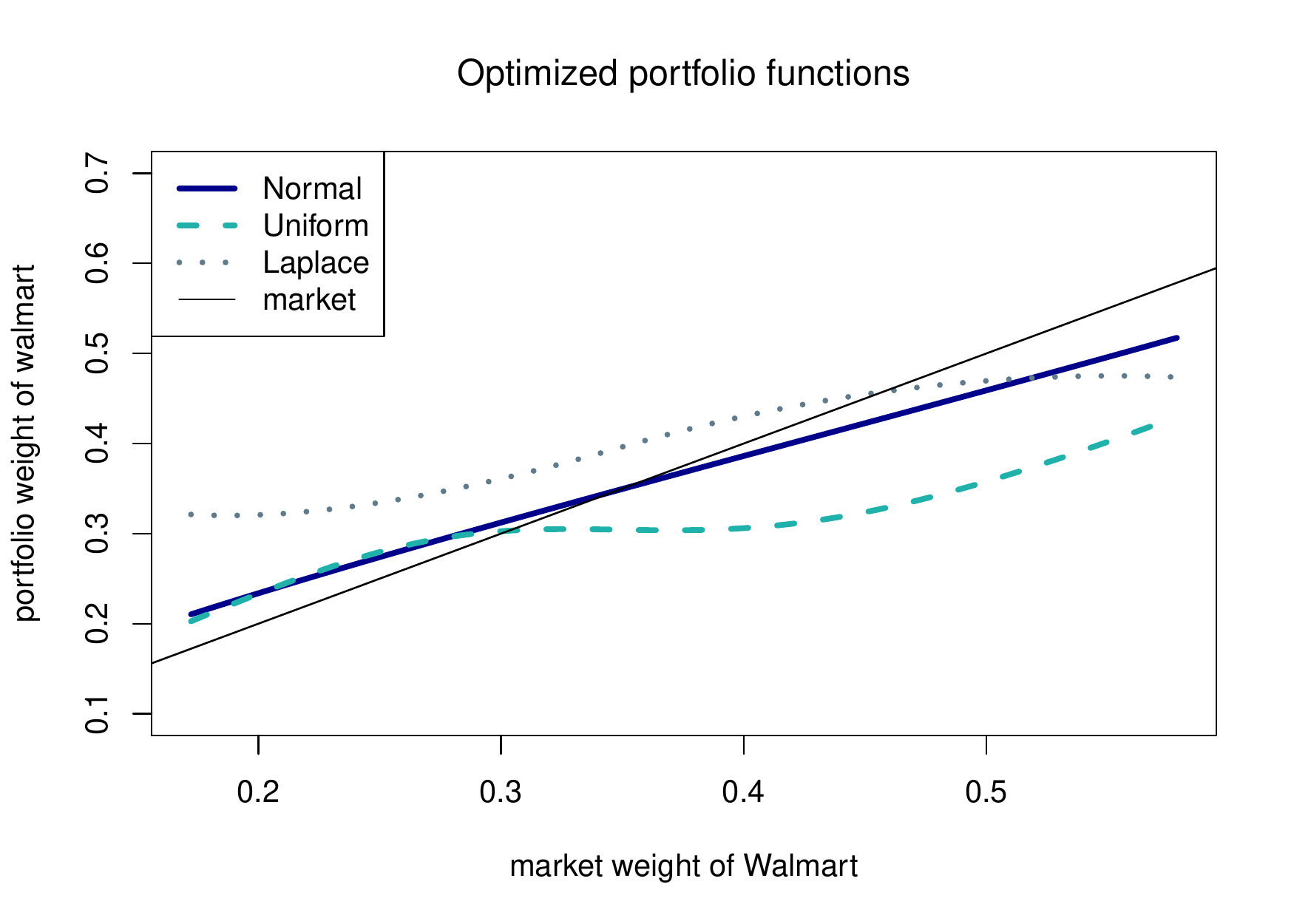}
\vspace{-15pt}
\caption{Plot of $\pi^{(i)}_1(\mu)$ against $\mu_1$, for $i = 1, 2, 3$. The curves are labeled using the distributions $\widetilde{{\mathcal{Q}}}_i$.}
\label{fig:result}
\end{figure}

\medskip

{\it Results.} For each choice of $\widetilde{{\mathcal{Q}}}$ we solve the optimal transport problem using the method of monotone rearrangement (Theorem \ref{thm:monotone}). The resulting portfolio functions $\pi^{(i)}$ are plotted in Figure \ref{fig:result}. The range of $\mu_1$ shown contains more than $99.9\%$ of the mass of $\widetilde{{\mathcal{P}}}$.

The features of the portfolios are consistent with our intuitions. As noted $\pi^{(1)}$ is a diversity-weighted portfolio which is quite close to the market portfolio by construction. Note that the curve intersects the market weight function around $\mu_1 = 0.35$. This corresponds to the median of $\widetilde{{\mathcal{P}}}$ and is a consequence of the fact that $\widetilde{{\mathcal{Q}}}_1$ is symmetric about $0$. Thus if $\widetilde{{\mathcal{P}}}$ is close to reality, $\pi^{(1)}$ will overweight stock 1 half of the time and underweight stock 1 half of the time.

The portfolio $\pi^{(2)}$ consistently underweights stock 1 because $\widetilde{{\mathcal{Q}}}_2$ is biased towards the right, and it has the largest deviation on the range shown. Nevertheless, if we draw the curves towards the boundary points $0$ and $1$, the boundedness of the support of $\widetilde{{\mathcal{Q}}}_2$ forces $\pi^{(2)}$ to be close to the market weight near the boundary of the simplex (in the sense that the weight ratios are bounded). This is not the case for $\pi^{(1)}$ and $\pi^{(3)}$ whose distributions have unbounded supports.

As for $\pi^{(3)}$, we note that most of the curve is above the market because $\widetilde{{\mathcal{Q}}}_3$ has negative mean. The portfolio deviates more and more towards the boundary because the Laplace distribution has fat tails. In this case, optimal transport couples large values of $|\phi|$ with the boundary values of $\mu$ which have small probability under $\widetilde{{\mathcal{P}}}$.

\medskip

{\it Backtesting.} Finally we compute the relative values of the three portfolios with respect to the market portfolio during the testing period 2005-2015. The result is shown in Figure \ref{fig:backtest}.

\begin{figure}[t!]
\centering
\includegraphics[scale=0.5]{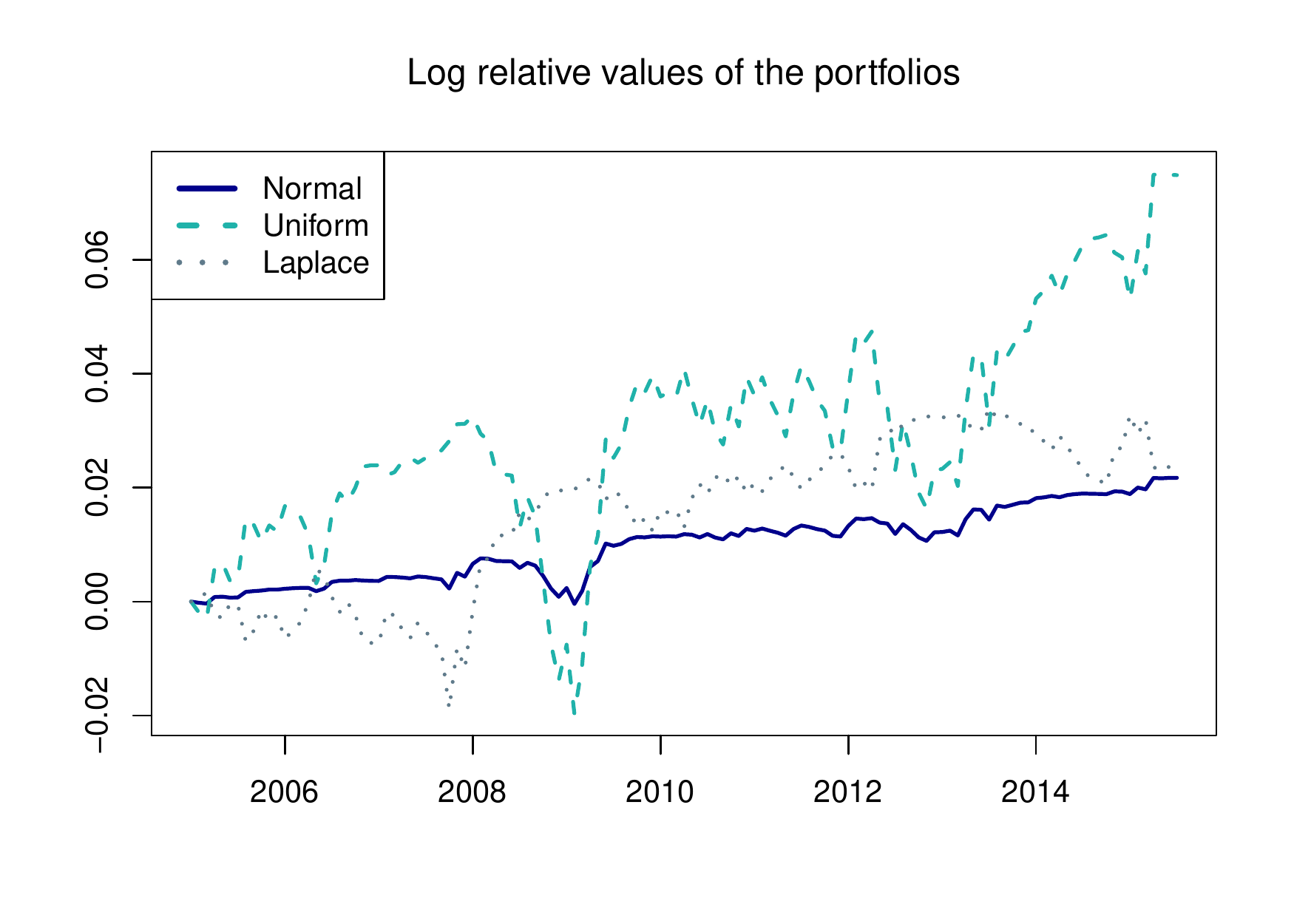}
\vspace{-15pt}
\caption{Log relative values of the portfolios $\pi^{(i)}$ in the testing period $2005-2015$.}
\label{fig:backtest}
\end{figure}

At the end of the period all three portfolios outperformed the market (by respectively $2.17\%$, $7.49\%$ and $2.38\%$, in log scale, over the 10 year period). The amounts are not large (except perhaps for $\pi^{(2)}$), and this is mostly because the portfolios deviate only moderately from the market portfolio. While detailed analysis of the performance is beyond the scope of the paper, we note that the relative riskiness of the portfolio (with respect to the market portfolio, also called the tracking error) depends on the deviation from the market weights and hence the location and dispersion of $\widetilde{{\mathcal{Q}}}$. The distribution $\widetilde{{\mathcal{Q}}}_2$ deviates most from $0$ and thus $\pi^{(2)}$ is riskier; it also has the biggest reward at the end of the period. Note that the approach of optimal transport optimizes a portfolio function over a region of $\Delta^{(n)}$ instead of picking portfolio weights period by period; it is simpler and perhaps more robust and prevents overfitting. See \cite[Section 5]{W14} for another approach of optimizing functionally generated portfolios in terms of the concavity of the generating function.

\bibliographystyle{amsalpha}
\bibliography{informgeo}

\end{document}